\newcommand{\proc}{\textsc}
\newcommand{\set}[1]{\ensuremath{\{#1\}}}
\newcommand{\rvec}{\bm{r}}
\newcommand{\heit}{\text{\it height}}
\newcommand{\decreaseKey}{\text{\it decreaseKey}}
\newcommand{\increaseKey}{\text{\it increaseKey}}
\newcommand{\minItem}{\text{\it minItem}}
\newcommand{\minKey}{\text{\it minKey}}
\newcommand{\getKey}{\text{\it getKey}}
\newcommand{\entry}{\text{entry}}
\newcommand{\debt}{\text{debt}}
\newcommand{\upDebt}{\text{upDebt}}
\newcommand{\credit}{\text{credit}}
\newcommand{\parent}{\mbox{parent}}
\newcommand{\pstart}{\mbox{start}}
\newcommand{\pend}{\mbox{end}}
\newtheorem{observation}{Observation}
  \newcommand{\oren}[1]{}
  \newcommand{\shay}[1]{}
  \newcommand{\yahav}[1]{}
\begin{document}

\title{Faster Shortest Paths in Dense
Distance Graphs,\\ with Applications\thanks{The research was supported in part by Israel Science Foundation grant 794/13.}}

\author{
Shay Mozes\inst{1}\and Yahav Nussbaum\inst{2} \and Oren Weimann\inst{3}
}
\institute{
IDC Herzliya, \href{mailto:smozes@idc.ac.il}{smozes@idc.ac.il}   \and
University of Haifa, \href{mailto:yahav.nussbaum@cs.tau.ac.il}{yahav.nussbaum@cs.tau.ac.il}   \and
University of Haifa, \href{mailto:oren@cs.haifa.ac.il}{oren@cs.haifa.ac.il}  
}

\date{}
\maketitle
\begin{abstract}

We show how to combine two techniques for efficiently computing shortest
paths in directed planar graphs. The first is the linear-time
shortest-path algorithm of Henzinger, Klein, Subramanian, and
Rao [STOC'94].
The second is Fakcharoenphol and Rao's algorithm [FOCS'01] for emulating
Dijkstra's algorithm on the {\em dense distance graph} (DDG).
A DDG is defined for a decomposition of a planar graph $G$
into regions of at most $r$ vertices each, for some parameter $r < n$.
The vertex set of the DDG is the set of $\Theta(n/\sqrt r)$ vertices
of $G$ that belong to more than one region (boundary vertices). The DDG has $\Theta(n)$ arcs, such that distances
in the DDG are equal to the distances in $G$.
Fakcharoenphol and Rao's implementation of Dijkstra's
algorithm on the DDG (nicknamed {\em FR-Dijkstra}) runs in 
 $O(n\log(n) r^{-1/2} \log r)$
time, and is a key component in many state-of-the-art
planar graph algorithms for shortest paths, minimum cuts, and maximum
flows. 
By combining these two techniques we remove the $\log n$ dependency
in the running time of the shortest-path algorithm, making it $O(n r^{-1/2} \log^2r)$.
\vspace{0.1in}

This work is part of a research agenda that aims to develop new
techniques that would lead to faster, possibly linear-time, algorithms
for problems such as minimum-cut, maximum-flow, and shortest paths with negative arc lengths.
As immediate applications, we show how to compute maximum flow in directed
weighted planar graphs in $O(n \log p)$ time, where $p$ is the minimum
number of edges on any path from the source to the sink. 
We also show how to compute any part of the DDG that corresponds to a
region with $r$ vertices and 
$k$ boundary vertices in $O(r \log k)$ time, 
which is faster than has been previously known for
small values of $k$. 
\end{abstract}

\newpage
\clearpage
\setcounter{page}{1}
\section{Introduction}


Finding shortest paths and maximum flows are among the
most fundamental optimization problems in graph theory.
On planar graphs, these problems are intimately related and can all be solved in linear or near-linear time.
Obtaining strictly-linear time algorithms for these problems is one of the main current goals of the planar graphs community~\cite{EisenstatK13,MinstCutSTOC}.
Some of these problems are known 
to be solvable in linear time (minimum spanning tree~\cite{MST}, shortest-paths with
non-negative lengths~\cite{HKRS97}, maximum flow with unit
capacity~\cite{EisenstatK13}, undirected unweighted
global min-cut~\cite{ChangL13}), but for many others, only nearly-linear time
algorithms are known. These include 
shortest-paths with negative
lengths~\cite{FR06,KMW10,ShayWulf}, 
multiple-source shortest-paths~\cite{Cabello,CabelloCE13,Klein05}, 
directed max-flow/min-$st$-cut~\cite{borradaile-klein-09,BKMNWN11,min-cut,Erickson2010,MinstCutSTOC}, and global
min-cut~\cite{ChFaNa04,LackiSankowski}. 

Many of the results mentioned above were achieved fairly recently, 
along with the development of more sophisticated shortest paths
techniques in planar graphs. In this paper we show how to combines 
two of these techniques: the technique of Henzinger, Klein, Rao, and
Subramanian for computing shortest paths with non-negative lengths in
linear time~\cite{HKRS97}, and the technique of Fakcharoenphol and Rao
to compute shortest paths on dense distance graphs in nearly
linear time in the number of vertices~\cite{FR06}.

The \emph{dense distance graph} (DDG) is an important element in many
of the algorithms mentioned above. It is a non-planar graph that
represents (exactly) distances among a subset 
of the vertices of the underlying planar graph $G$.
More precisely, an $r$-division~\cite{Frederickson} of an $n$-vertex
planar graph $G$, for some $r < n$, is a 
division of $G$ into $O(n/r)$ subgraphs (called \emph{regions}) $\{G_i\}_i$,  
where each region has at most $r$ vertices and $O(\sqrt{r})$ \emph{boundary} vertices,
which are vertices that the region shares with other regions.
There exist $r$-divisions 
of planar graphs with the additional property
that the boundary vertices in each region lie on a constant number of 
faces (called {\em holes})~\cite{FR06,ShayrDivision,Subramanian95}.
Consider an $r$-division of $G$ for some $r < n$.
Let $K_i$ be the complete graph on the $O(\sqrt{r})$ boundary
vertices of the region $G_i$, where the length of arc $uv$
is the $u$-to-$v$ distance in $G_i$. The graph $K_i$ is called the 
DDG of $G_i$. 
The union $\bigcup_i K_i$ is called the 
DDG of $G$ (or more precisely, the DDG of the given $r$-division of $G$).\footnote{DDGs are  similarly defined for decompositions which are not
$r$-divisions, but our algorithm does not necessarily apply in such cases.}

DDGs are useful for three main reasons. First, distances in the DDG of $G$ are the same as distances in $G$. Second,  it is possible to
compute shortest paths in DDGs in time that is nearly linear in the
number of vertices of the DDG~\cite{FR06} (FR-Dijkstra). I.e., in
sublinear time in the number of vertices of $G$. 
Finally, DDGs can be computed in nearly linear time either by invoking
FR-Dijkstra recursively, or  by using 
a multiple source shortest-path
algorithm~\cite{CabelloCE13,Klein05} (MSSP). 
Until the current work, the latter method was faster in all cases.

Since it was introduced in 2001, FR-Dijkstra has been used creatively as a
subroutine in many algorithms. These include algorithms for computing DDGs~\cite{FR06},
shortest paths with negative lengths~\cite{FR06}, 
maximum flows and minimum
cuts~\cite{BKMNWN11,min-cut,MinstCutSTOC,LackiSankowski}, 
and distance
oracles~\cite{Cabello,FR06,KaplanMNS12,Klein05,MozesSommer,Nussbaum11}. 
Improving FR-Dijkstra is therefore an important task with implications
to all these problems.
For example, consider the minimum $st$-cut problem in undirected
planar graphs. Italiano et. al~\cite{MinstCutSTOC}
gave an $O(n\log\log n)$ algorithm for the problem, improving the
$O(n\log n)$ of Reif~\cite{reif83} (using~\cite{HKRS97}). 
Three of the techniques used by Italiano et al. are: (i) constructing an
$r$-division with $r=\operatorname{polylog}(n)$  in $O(n\log r)$ time, 
(ii) FR-Dijkstra, 
and (iii) constructing the DDG in $O(n\log r)$. 
In a step towards a linear time algorithm for this fundamental problem,
Klein, Mozes and Sommer gave an $O(n)$ algorithm for constructing an
$r$--division~\cite{ShayrDivision}.  
This leaves the third technique as the only current bottleneck for
obtaining min-$st$-cut in $O(n)$ time.
The work described in the current
paper is motivated by the desire to obtain such a linear time
algorithm, and partially addresses techniques (ii) and (iii). 
It improves the running time of FR-Dijkstra, 
and, as a consequence, implies faster DDG construction,
although for a limited case which is not the one required
in~\cite{MinstCutSTOC}.

The linear time algorithm for shortest-paths with nonnegative lengths
in planar graphs of Henzinger, Klein, Rao, and
Subramanian~\cite{HKRS97} (HKRS) is another important result that has been
used in many subsequent algorithms. 
HKRS~\cite{HKRS97} differs from Dijkstra's
algorithm in that the order in which we {relax} the arcs is not determined by the
vertex with the current globally minimum label. Instead, it works with
a recursive division of the input graph into regions. It handles a
region for a limited time period, and then skips to another region. 
Within this time period it determines the order of
relaxations according to the vertex with minimum label in the {\em current}
region, thus avoiding the need to perform many operations on a large
heap. Planarity, or to be more precise, the existance of small
recursive separators, guarantees that local relaxations have limited
global effects. Therefore, even though some arcs are relaxed by HKRS
 more than once, the overall running time can be shown (by a
fairly complicated argument) to be linear.
Even though HKRS has been introduced
roughly 20 years ago and has been used in many
other algorithms, to the best of our
knowledge, and unlike other important planarity exploiting techniques,
it has always been used as a black box, and was not modified or
extended prior to the current work.\footnote{Tazari and M\"uller-Hannemann~\cite{TM09}
extended~\cite{HKRS97} to minor-closed graph classes, but that
extension uses the algorithm of~\cite{HKRS97} without change. It deals
with the issue of achieving constant degree in minor-closed classes of
graphs, which was overlooked in~\cite{HKRS97}.} 
 
\paragraph{\bf Our results.}
By combining the technique of Henzinger et al. with a modification of
 the internal building blocks of Fakcharoenphol and Rao's Dijkstra
 implementation, we obtain a faster
 algorithm for computing shortest paths on dense distance graphs.
This is the first asymptotic improvement
 over FR-Dijkstra. Specifically, for a DDG over an $r$--division of an
 $n$-vertex graph,
 FR-Dijkstra runs in $O(n\log(n) r^{-1/2} \log r)$. We remove the 
 logarithmic dependency on $n$, and present an algorithm whose running
 time is $O(n r^{-1/2} \log^2 r)$.
This improvement is useful in algorithms that use an $r$-division when $r$ is small (say $r=\operatorname{polylog}(n)$).
 
Our overall algorithm resembles that of HKRS~\cite{HKRS97}. However, in our
algorithm, the bottom level regions are not individual edges (as is
HKRS), but hyperedges, which are implemented by a suitably modified
version of {\em bipartite Monge heaps}. The bipartite Monge heap is 
 the main workhorse of Fakcharoenphol and Rao's
algorithm~\cite{FR06}. One of the main challenges in combining the two
techniques is that the efficency of Fakcharoenphol and Rao's algorithm
critically relies on the fact that the algorithm being implemented is Dijkstra's
algorithm, whereas HKRS does not 
implement Dijkstra's algorithm. To overcome this difficulty we modify
the implementation of the bipartite Monge heaps. We develop new Range
Minimum Query (RMQ) data structures, and a new way to use them to
implement bipartite Monge heaps.

Another  difficulty is that both the algorithm and the analysis of
HKRS had to be modified to work with hyperedges, and with the fact that
the relaxations implemented by the bipartite Monge heaps are performed
implicitly. 
On the one hand, using implicit relaxations causes limited
availability of explicit and accurate distance labels. On the other
hand, such explicit and accurate labels are necessay for the progress of the HKRS
algorithm as it shifts its limited attention span between different regions. 
We use an auxiliary construction and
careful coordination to resolve this conflict between fast implicit
relaxations and the need for explicit accurate labels.

\paragraph{\bf Applications.}
We believe that our fast shortest-path algorithm on the dense
distance graph is a step towards optimal algorithms for the important
and fundamental problems mentioned in the introduction. 
In addition, we describe two current applications of our improvement.
 In both applications, we obtain a speedup over previous algorithms by
 decomposing a region of $n$ vertices using an $r$-division and
 computing distances among the boundary vertices of the region in
 $O((n/\sqrt{r})\log^2 r)$ time using our fast shortest-path algorithm.  

\vspace{0.08in} \noindent  {\em Maximum flow when the source and sink are close.} In directed weighted planar graphs, 
we show how to compute maximum $st$-flow (and min-$st$-cut) in $O(n
\log p)$ time if there is some path from $s$ to $t$ with at most $p$
vertices. The parameter $p$ appears in the time bounds of several previous
maximum flow and minimum cut
algorithms~\cite{BH13,IS79,JV83,KaplanN11}.
Our $O(n \log p)$ time bound matches the fastest previously known
maximum flow algorithms in directed weighted planar graphs for $p =
\Theta(1)$~\cite{JV83} and for $\log p = \Theta(\log
n)$~\cite{borradaile-klein-09,Erickson2010}, and is asymptotically
faster than previous algorithms for all other values of $p$. See Section~\ref{section:maxflow}.

We believe that by combining our fast shortest-path algorithm with ideas
from the $O(n \log p)$ time min-$st$-cut algorithm of Kaplan and Nussbaum~\cite{KaplanN11} and
from the $O(n \log \log n)$ time min-$st$-cut algorithm of Italiano et al.~\cite{MinstCutSTOC}, we
can get an $O(n \log \log p)$ time min-$st$-cut algorithm for undirected planar graphs.
The details, which we were not able to compile in time for this submission,
will appear in a later version of this paper.

\vspace{0.08in} \noindent  {\em Fast construction of DDGs with few
boundary vertices.} The current bottleneck in various shortest paths
and maximum flow algorithms in planar graphs (e.g., min-$st$-cut in an undirected graph~\cite{MinstCutSTOC},
shortest paths with negative lengths~\cite{ShayWulf}) is computing all
boundary-to-boundary distances in a graph with $n$ vertices and $k\le
\!\sqrt{n}$ boundary vertices that lie on a single face.  
Currently, the fastest way to compute these $k^2$ distances is to use
the MSSP algorithm~\cite{CabelloCE13,Klein05}. After $O(n\log n)$
preprocessing, it can report the distance from any boundary vertex to
any other vertex  (boundary or not)  in $O(\log n)$ time, so all
boundary-to-boundary distances  can be found in $O((n + k^2) \log n) =
O(n\log n)$ time. We give an algorithm that computes the distances among
the $k$ boundary vertices in $O(n \log k)$ time. 
This algorithm can be used to construct a DDG of a region with $n$ vertices
and $k = O(\sqrt{n})$ boundary vertices in $O(n \log k)$ time.
In general, this does not improve the $O(n \log n)$ DDG construction time using MSSP
since typically $k=\Theta(\sqrt{n})$. However, there is an improvement when $k$ is
much smaller. 
For $k=\operatorname{polylog}(n)$, our algorithm takes $O(n\log\log
n)$ time. 

We conclude this section by discussing the interesting open problem of
computing a DDG of a region with $n$ vertices of which $k =
O(\sqrt{n})$ are boundary vertices. As we already mentioned, 
the conventional way of computing a
DDG is applying Klein's MSSP algorithm~\cite{Klein05,Erickson2010}, which requires
$O(n \log n)$ time regardless of the value of $k$. The MSSP algorithm
implicitly computes all the shortest-path trees rooted at vertices of
a face $\phi$ by going around the face $\phi$ and for every boundary
vertex $v$ of $\phi$ identifying the \emph{pivots} (arc changes) from
the tree rooted at the vertex preceding $v$ on $\phi$ to the tree
rooted at $v$. Eisenstat and Klein~\cite{EisenstatK13} showed an
$\Omega(n \log k)$ lower bound on the number of comparisons between
arc weights that any MSSP algorithm requires for identifying all the
pivots.\footnote{In fact they showed an $\Omega(n \log n)$ lower bound
by using a face $\phi$ with $\sqrt{n}$ boundary vertices; generalizing
the same proof for a face with $k$ vertices gives the $\Omega(n \log
k)$ bound.} Our algorithm can be used to compute the
DDG of a region in $O(n \log k)$ time without computing 
all the pivots. Note, however, that it does not break
the $\Omega(n \log k)$ lower bound. The DDG computation problem may be
an easier problem than the MSSP problem, since we are interested only
in the $O(k^2)$ distances among the boundary vertices and we are not
required to compute the pivots. Whether the DDG of a region can be
computed in $o(n \log k)$ remains an open problem. 

\paragraph{\bf Roadmap.} We begin in Section~\ref{ch:FR} with a
description of Fakcharoenphol and Rao's FR-Dijkstra algorithm and
the Monge heap data structure. This
description is essential since we modify the internal structure of
the Monge heaps to achieve our results.
In Section~\ref{warmup} we give a warmup improvement of FR-Dijkstra by
introducing a new RMQ data structure into the Monge heaps. 
This improvement decouples the logarithmic dependency on $n$ from the
logarithmic dependency on $r$.
In Section~\ref{sec:HKRS-FR} we eliminate the logarithmic dependency
on $n$ altogether by combining the shortest-path algorithm of
Henzinger et al. with modified Monge heaps similar to those 
described in the warmup. Finally, in Section~\ref{section:applications}
we give the details of two applications that use our algorithm.
 
For clarity of the presentation, we defer to an appendix some of the proofs which are less essential for the overall understanding of our algorithm.

\section{FR-Dijkstra~\cite{FR06}}\label{ch:FR}
In this section we overview FR-Dijkstra that emulates Dijkstra's
algorithm on the dense distance graph. Parts of our description
slightly deviates from the original description of~\cite{FR06} and
were adapted from~\cite{Book}. We emphasize that this description is
not new, and is provided as a basis for the modifications introduced
in subsequent sections.

Recall Dijkstra's algorithm. It maintains a heap of
vertices labeled by estimates $d(\cdot)$ of their distances from the
root. At each iteration, the vertex $v$ with minimum $d(\cdot)$ is
 {\sc Activated}: It is extracted from the heap, and all its adjacent arcs are relaxed. 
FR-Dijkstra implements {\sc ExtractMin} and {\sc Activate}
efficiently.

The vertices of  $K_i$ correspond to the $\sqrt r$ boundary vertices
of a region $R$ of an $r$--division with a constant number of holes. 
We assume in our description that $R$ has a single hole. There is a
standard simple way to handle a constant 
number of holes using the single hole case with only a constant factor
overhead (cf.~\cite[Section
5]{FR06}, \cite[Section 5.2]{KaplanMNS12}, and~\cite[Section 4.4]{ShayWulf}).
There is a natural cyclic order on the vertices of $K_i$ according to
their order on the single face (hole) of the region
$R$.

A matrix $M$ is a {\em Monge} matrix if for any pair of rows $i<j$ and columns $k<\ell$ we have that $M_{ik}+ M_{j\ell} \ge M_{i\ell }+ M_{jk}$. 
A {\em partial} Monge matrix is a Monge matrix where some of the
entries of $M$ are undefined, but the defined entries in each row
and in each column are
contiguous.
It is well known (cf.~\cite{KMW10}) that the upper and lower triangles
of the (weighted) incidence matrix $M$
of $K_i$ are {\em partial} Monge matrices.

To implement  {\sc ExtractMin} and {\sc Activate} efficiently, each
complete graph $K_i$ in the DDG is decomposed into 
complete bipartite graphs. 
The vertices of $K_i$
are first split into two consecutive halves $A$ and $B$,
 the complete bipartite graph on $A$ and $B$ is added to the
decomposition, and the same process is applied recursively on $A$ and
on $B$. Each vertex of $K_i$ therefore appears in $O(\log r)$
bipartite subgraphs. 
Furthermore,  and each bipartite subgraph of the decomposition
corresponds to a submatrix of $M$ that is {\em fully} Monge.

\begin{figure}[H]
\begin{center}
\includegraphics[scale=0.2]{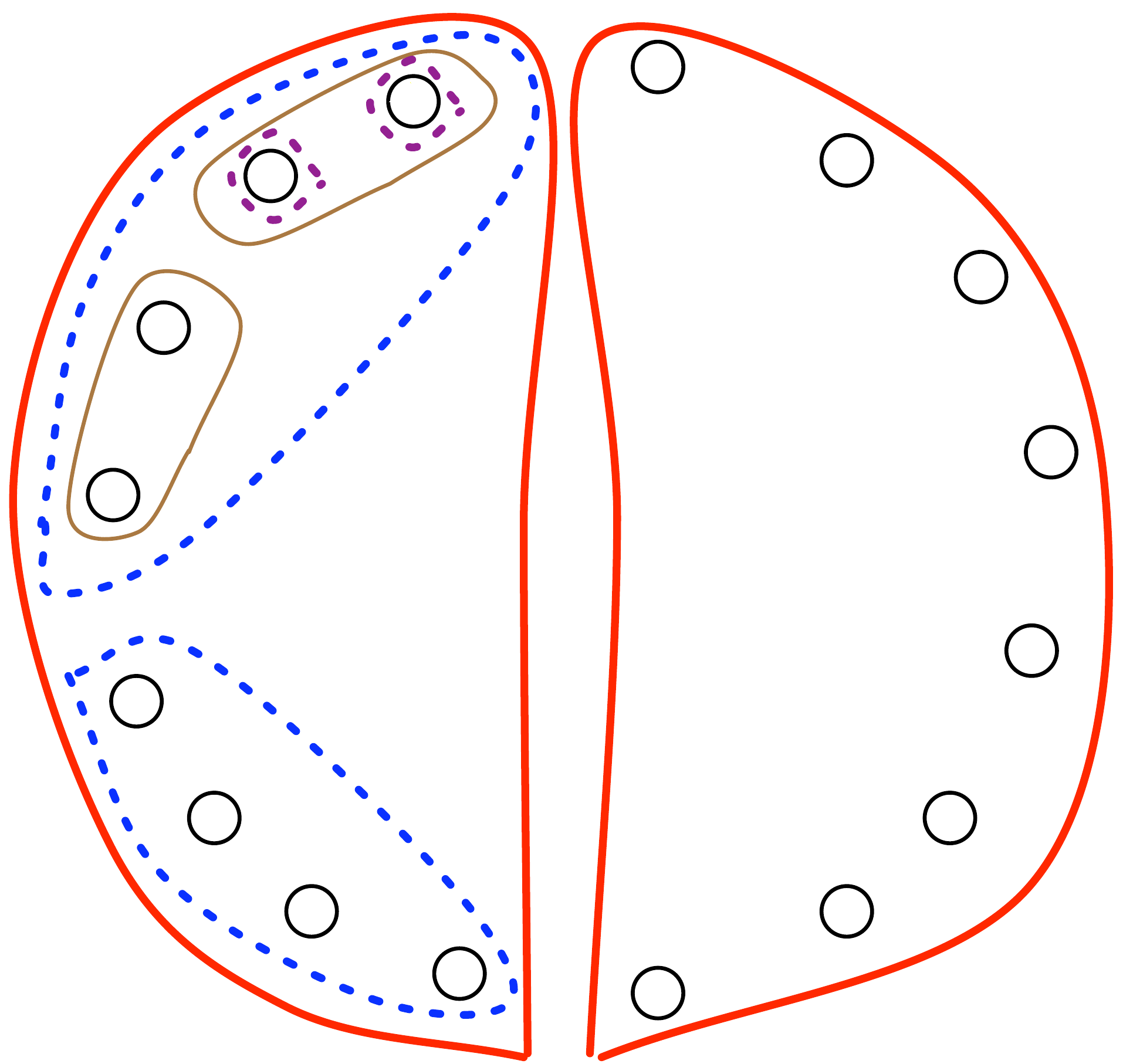}\ \ \ \ \ \ \ \ \ \ \ \ \ 
{\includegraphics[scale=0.8]{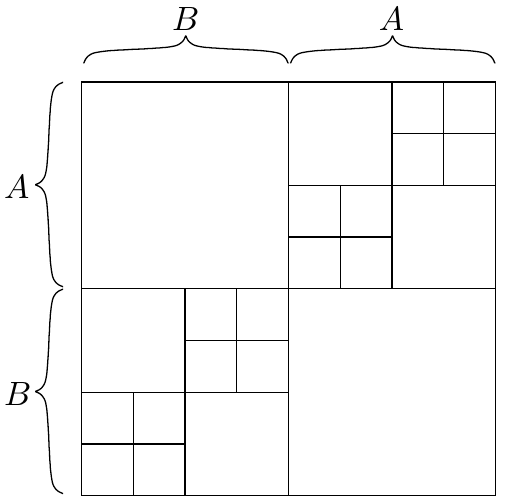}
}
\label{fig:bipartiteMonge}
\caption{The decomposition of $K_i$ into complete bipartite graphs
  (left) can also be viewed as a partition of the incidence matrix of
  $K_i$ into  Monge matrices (right).}
\end{center}
\end{figure}

\noindent

Let $\mathcal H = \{H_j\}$ denote the set of all
bipartite graphs in the decompositions of all $K_i$'s.
Note that $|{\mathcal H}| = O(n/\sqrt{r})$. 
The algorithm maintains a data-structure $\mathcal M_j$, called a {\em Monge Heap},
for each bipartite graph $H_j \in \mathcal H$.\footnote{In~\cite{FR06}
these are called bipartite Monge heaps. All the bipartite Monge heaps
that belong to the same complete graph $K_i$ are aggregated into a
single data structure which~\cite{FR06} call Monge heap. We do not use this aggregation.}  
Let $A, B$ be the bipartition of $H_j$'s vertices.
The Monge heap  supports:
 
\begin{list}{\labelitemi}{\leftmargin=1em \setlength{\itemsep}{1pt}}    
\item \proc{Activate}$(a,d)$
 - Sets the label of vertex $a \in A$ to be
  $d$, and implicitly relaxes all arcs incident to $a$. 
This operation may be called at most once per vertex.
\item \proc{FindMin} - Returns the vertex $v \in B$ with minimum
  label among the vertices not yet  extracted. 
\item \proc{ExtractMin} - Removes the vertex $v\!\! \in\!\! B$ with minimum
  label among the vertices not yet  extracted. 
\end{list}

The minimum element of every Monge heap
$\mathcal M_j$ is maintained in a regular global heap $\mathcal Q$.
In each iteration, a vertex $v$ with global minimum label is extracted from
$\mathcal Q$.
The vertex $v$ is then extracted from the Monge
heap $\mathcal M_{j}$ that contributed $v$, and the
new minimum vertex of $\mathcal M_{j}$ is added to $\mathcal Q$.
Note that since a vertex $v$ appears in multiple $H_j$'s,  $v$ may be
extracted as the minimum element of the global heap $\mathcal Q$
multiple times, once for each Monge heap it appears in\footnote{In the
  original description of FR~\cite{FR06}, vertices are never extracted
  from $\mathcal Q$. Instead, $\mathcal Q$ maintains only one
  representative for every region $R$ that is keyed by the minimum element
  of all (bipartite) Monge heaps of $R$.  When this representative
  is the minimum in $\mathcal Q$, they do not extract it form
  $\mathcal Q$ but instead increase its key to be the new (next) minimum of $R$. In other words, at any point in time, FR hold $n/r$ elements in $\mathcal Q$ (one element for each region) and overall it performs $(n/\sqrt{r}) \log r$ \proc{IncreaseKey} operations, while in our description $\mathcal Q$  holds $n/\sqrt{r}$  elements (one element for each Monge heap) and overall we perform $(n/\sqrt{r}) \log r$ \proc{ExtractMin} and \proc{Insert} operations.  The overall $O((n/\sqrt{r}) \log n \log r)$  running time is the same in both presentations.} . However,
the label $d(v)$ of $v$ is finalized at the first time $v$ is extracted. At that time, and only at that time, the algorithm marks $v$
as finalized, activates $v$ using
\proc{Activate} in {\em all} Monge heaps such that $v
\in A$, and updates the
representatives of those Monge heaps in $\mathcal Q$.

\paragraph{\bf Analysis.}
Since each vertex appears in $O(\log r)$ bipartite graphs $H_j$ in
$\mathcal H$, the number of times each vertex is extracted from the global heap $\mathcal
Q$ is $O(\log r)$. Since $\mathcal Q$ contains one representative
element from each Monge heap $\mathcal M_j$, a single call to {\sc ExtractMin} on 
$\mathcal Q$ takes $O(\log (n/\sqrt{r})) = O(\log n)$ time. Therefore,
the total time spent on extracting vertices from $\mathcal Q$ is $O((n/\sqrt{r})
\log n\log r)$. 

As for the cost of operations on the Monge heaps, {\sc Activate}
and {\sc ExtractMin} are called at most once per vertex in each
Monge heap, and the number of calls to {\sc FindMin} is bounded by
the number of calls to {\sc Activate} and {\sc
  ExtractMin}. We next show how to implement each of these operations in $O(\log r)$ time. 
 Since each vertex appears in $O(\log r)$ Monge
heaps, the total time spent
on operations on the Monge heaps is $O((n/\sqrt{r}) \log^2r)$.

\paragraph{\bf Implementing Monge heaps.}
Let $\mathcal{M}$ be the Monge heap of a bipartite subgraph with columns $A$ and $B$ and a corresponding
 incidence matrix $M$. 
For every vertex
$v\in B$ we maintain a bit indicating whether $v$ has already been
extracted from $\mathcal{M}$ (i.e., finalized) or not.

The distance label of a vertex $v \in B$ is defined to be $d(v)
= \min_{u \in A}\{d(u) + M_{uv}\}$  and is not stored explicitly.  
Instead, 
we say that $u \in A$ is the \emph{parent} of $v \in B$ 
if $d(u)$ is finite and  $d(v)
= d(u) + M_{uv}$.
As the algorithm progresses,
 the distance labels $d(u)$ of vertices $u\in A$ (hence the parents of vertices $v\in B$) may change.
We   maintain a binary search tree $T$ of triplets  $(a\! \in \! A,
b_1\! \in\!  B, b_2\! \in\!  B)$ indicating that $a$ is the current parent of all vertices of $B$  between $b_1$ and $b_2$. 
Note that a vertex $a\in A$ may appear
in more than one triplet because, after extracting  vertices of
$B$, the non-extracted vertices
of which $a$ is a parent might consist of several intervals. The Monge property of $M$ guarantees that if $a$ precedes $a'$ in $A$ then 
 all intervals of $a$ precede all intervals of $a'$.

Finally, the Monge heap structure also consists of a standard
heap $Q_B$ containing, for every triplet $(a, b_1, b_2) \in T$, a
vertex $b$ between $b_1$ and $b_2$ that minimizes
$d(b) = d(a) + M_{ab}$.\footnote{This heap can be implemented within the tree
$T$ by maintaining  subtree minima at the vertices of $T$.}

\begin{list}{\labelitemi}{\leftmargin=1em \setlength{\itemsep}{1pt}}    
\item{{\sc FindMin}}: Return a vertex $b$ with minimum distance label in $Q_B$.
    \item{{\sc ExtractMin}}: Extract the vertex $b$ with minimum distance label from $Q_B$ and mark  $b$
         as extracted. Find the (unique)
         triplet $(a, b_1, b_2)$ containing $b$ in $T$. Let $b'$ and $b''$ be the members of
         $B$ that precede and follow $b$, respectively,
within this triplet. The algorithm replaces the triplet
         $(a, b_1, b_2)$ with two triplets $(a,b_1, b')$ and $(a, b'', b_2)$ (if these
         intervals are defined). In each of these new triplets it
         finds the vertex $b^*$  that minimizes $d(b^*)=d(a)+M_{ab^*}$ and 
inserts $b^*$ into $Q_B$. The vertex $b^*$ is the minimum entry of $M$ in a given row and a range of columns and is found in $O(\log r)$ time using a naive one-dimensional static RMQ data structure on each column of $M$.
 
    \item{{\sc Activate}($a, d$)}: Set $d(a) = d$ and  find the
children of $a$ in $B$.
         If $a$ is the first vertex in the Monge heap for which
         this operation is applied then all the vertices of $B$ are children of $a$.
Otherwise, we show how to find the
children of $a$ whose current parent precedes $a$ in $A$ (the ones whose current parent follows $a$ in $A$ are found symmetrically).

 We traverse the triplets in $T$ one by one backwards starting with triplet $t=(w,f_1,f_2)$ such that $w$ precedes $a$ in $A$. We continue until we reach a triplet $t'=(u,b_1,b_2)$ such that $d(u) + M_{ub_2} < d(a) + M_{ab_2}$. If $t'=t$ we do nothing. If  we scanned all triplets preceding $t$ without finding $t'$ then the first child of $a$
 must belong to the last triplet $t''=(v,g_1,g_2)$ that we scanned. It is found  by a
binary search on the interval of $B$ between $g_1$ and $g_2$.
Otherwise, the
 first child of $a$
 must belong to the triplet $t''=(v,g_1,g_2)$ following $t'$, and can again be found via
binary search.

Let $x$ (resp., $y$) be the first (resp., last) child of $a$ in $B$,
as obtained in the preceding step. Note that there are no extracted vertices between $x$ and $y$ in $B$. This is because 
 we find the distances in
monotonically increasing order, and when we extract a vertex $b$ it is the minimum in the global heap $\mathcal
Q$  so it will never 
acquire a new parent. 
We therefore insert a new triplet $(a,x,y)$ into $T$. 

We remove from $T$ all other triplets
containing vertices between $x$ and $y$, and remove from $Q_B$ the
elements contributed by these triplets. Let $(u,b_1,b_2)$
be the removed triplet that contains $x$.
 If $x\not= b_1$ then we insert a new triplet $(u,b_1,z_1)$
 where $z_1$ is the vertex preceding $x$ in $B$.
 Similarly, if $(w,b'_1,b'_2)$  is the removed triplet that contains
 $y$ and $y \neq b_2'$ then we insert a new triplet 
 $(w,z_2,b_2')$ where $z_2$ is the vertex following $y$ in $B$.
  
 Finally, we update the three values that the new triplets $(a,x,y),(u,b_1,z_1),$ and $(w,z_2,b_2')$
 contribute to $Q_B$. We find these values by a range minimum query to the naive RMQ data structure.
\end{list}

\paragraph{\bf Analysis.}

Clearly, {\sc FindMin} takes $O(1)$ time. 
Both {\sc
ExtractMin} and {\sc Activate} insert a constant number of new
triplets to $T$ in $O(\log r)$ time, make a constant number of
range-minimum queries in $O(\log r)$ time, and update the
representatives of the new triplets in the heap $Q_B$ in $O(\log r)$ time.
{\sc Activate}($a,d$), however, may
traverse many triplets to identify the children of $a$.
Since 
all except at most two of the triplets that it traverses are removed, 
we can charge their traversal and removal
to their insertion in a previous {\sc ExtractMin} or {\sc
Activate}.

\section{A Warmup Improvement of FR-Dijkstra}\label{warmup}

In this section we show how to modify FR-Dijkstra from
$O((n/\sqrt{r})(\log n \log r))$ to $O((n/\sqrt{r})(\log n + \log^2 r \log\log r))$. This improves FR-Dijkstra for small values of $r$ and is obtained  by avoiding the vertex copies in the global heap $\mathcal Q$ using a decremental RMQ data structure.

Recall that, at any given time, the global heap $\mathcal Q$ of FR-Dijkstra maintains $O(n/\sqrt{r})$ items -- one item for each Monge heap. 
However, each vertex has copies in $O(\log r)$ Monge heaps so overall
$O((n/\sqrt{r}) \log r)$ items are extracted from  $\mathcal Q$. Each
extraction takes $O(\log (n/\sqrt{r})) = O(\log n)$ time so the
complexity of FR-Dijkstra is $O((n/\sqrt{r}) \log r \log n)$. 
In our modified algorithm the global heap $Q$ contains, for each
vertex $v$, a single item whose key is the minimum label over all copies of $v$.
In other words,  $\mathcal Q$  maintains a total of $O(n/\sqrt{r})$ items throughout
the entire execution. An
item that corresponds to a vertex $v$ is extracted from $Q$ only once
in $O(\log n)$ time, but when it is extracted it may incur $O(\log r)$
calls to  {\sc DecreaseKey}. We use a Fibonacci heap~\cite{FT87} for
$\mathcal Q$ that only takes constant time for {\sc DecreaseKey}. Note
that the total number of operations on $Q$ is still $O((n/\sqrt{r})\log r)$.

The main problem with having one copy is that 
a triplet $(a,x,y)$ might now contain extracted vertices between $x$ and $y$ in $B$. 
The original implementation of FR-Dijkstra uses an elementary RMQ data structure (a binary search tree for each row of the $\sqrt{r}\times \sqrt{r}$ matrix $M$). This data structure is  only queried on intervals $[x,y]$ that have no extracted vertices. For such queries one may also  use the
RMQ data structure of Kaplan et
al.~\cite{KaplanMNS12}, which has the same $O(\log r)$ query time and requires only  $O(\sqrt{r} \log r)$
construction time and space.
To cope with query intervals $[x,y]$ that have extracted vertices we next present a {\em dynamic} RMQ data structure that can handle extractions. 

\begin{lemma}\label{lemma:RMQ}
Given an $n\times n$ partial Monge matrix $M$, one can construct in
$O(n\log n)$ time a dynamic data structure that supports the
following operations: (1)  in  $O(\log^3 n)$ time, set all entries of
a  given column as inactive  (2)  in $O(\log^3 n)$ time, set all entries
of a given column as active (3)  in  $O(\log^2 n)$ time, report the
minimum active entry in a query row and a contiguous range of columns
(or $\infty$ if no such entry exists). \\
For a decremental data structure in which only operations 1 and 3 are
allowed, operation 1 can be supported in amortized $O(\log^2 n \log\log n)$  time and operation 3 in worst case $O(\log n \log\log n)$  time.
\end{lemma}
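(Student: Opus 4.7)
The data structure is a balanced segment tree $T$ on the $n$ columns of $M$. Each node $v$ covers a contiguous column range $C_v$, and the submatrix of $M$ restricted to columns $C_v$ (and all rows) is partial Monge. By the Monge property, in every row the argmin column within $C_v$ is a monotone non-decreasing function of the row index, so the entire row-wise argmin structure at $v$ can be summarized by a \emph{staircase}: a sequence of at most $|C_v|$ pairs $(c,[\alpha_c,\beta_c])$ meaning that column $c$ is the argmin on rows $\alpha_c,\dots,\beta_c$. Since $\sum_v|C_v|=O(n\log n)$, the total storage is $O(n\log n)$. Staircases are built bottom-up: two sibling column ranges are disjoint with one entirely to the left of the other, so Monge implies that the argmin transitions from the left child's staircase to the right child's staircase at most once as the row index grows. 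Hence each merge is a linear scan over the children's staircases, and the total preprocessing time is $O(n\log n)$.

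\textbf{Static queries and fully dynamic updates.} A query for row $i$ and column range $[x,y]$ decomposes $[x,y]$ into $O(\log n)$ canonical nodes of $T$; at each, a binary search on the staircase locates the argmin column $c$ for row $i$ in $O(\log n)$ time and returns $M_{i,c}$. The minimum of the $O(\log n)$ answers is the result, giving $O(\log^2 n)$ per query. When a column $c$ is toggled, I walk up the $O(\log n)$ ancestors of $c$'s leaf and update the staircase at each, bottom-up. At an ancestor $v$, the staircase changes only on the row interval where $c$ was (or is about to become) the argmin; using the already-updated children's staircases and the same single-transition argument restricted to this interval, the affected portion of $v$'s staircase is rebuilt via a sequence of binary searches over the children's staircases together with updates to a balanced BST holding $v$'s staircase breakpoints. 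With careful bookkeeping this keeps each per-ancestor update to $O(\log^2 n)$, giving $O(\log^3 n)$ per column toggle.

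\textbf{Decremental refinement.} When only deletions occur, every column is deactivated at most once, which enables a tighter amortization: the cumulative size of all staircase changes across an entire execution is bounded by the total staircase size $O(n\log n)$, so the rebuild cost at each node can be charged to columns that are each processed $O(1)$ times. Replacing the binary search over staircase breakpoints by a word-RAM (van Emde Boas) predecessor structure then shaves a $\log n/\log\log n$ factor per node, yielding worst-case $O(\log n\log\log n)$ per query and amortized $O(\log^2 n\log\log n)$ per deactivation.

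\textbf{Main obstacle.} The crux is the dynamic update step: unlike the clean bottom-up preprocessing merge, a local update must rebuild only the affected sub-staircase without rescanning the entire children's staircases. Achieving $O(\log^2 n)$ per ancestor therefore requires the auxiliary BST on staircase breakpoints and the single-transition argument to be executed inside a row sub-interval rather than globally, and requires care when the sub-interval crosses many breakpoints of a child staircase. In the decremental variant, the additional subtlety is integrating the van Emde Boas structures with incremental staircase updates, which is only possible because deletions are monotone and previous breakpoints never reappear.
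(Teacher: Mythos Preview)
Your overall architecture coincides with the paper's: a balanced binary tree over the dimension being (de)activated, where each node stores the lower envelope (your ``staircase'') of its leaves with respect to the other dimension; queries hit $O(\log n)$ canonical nodes and do a predecessor search in each. The paper proves the lemma on the transpose, so its ``rows'' and ``breakpoints'' are exactly your ``columns'' and ``staircase''.

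The gap is in the fully dynamic update. You propose to keep at each node $v$ a balanced BST of \emph{all} of $v$'s staircase breakpoints and, upon toggling a column $c$, to rebuild the affected row-interval $I$ by re-merging the children's staircases there. The problem is that the number of breakpoints of $v$ that change on $I$ is not polylogarithmic: when $c$ is deactivated, the new staircase on $I$ consists of one new transition point together with a run of breakpoints copied from one child on one side and from the other child on the other side, and each of those runs can have $\Theta(|C_v|)$ breakpoints. Symmetrically, when $c$ is reactivated it can wipe out $\Theta(|C_v|)$ breakpoints of $v$. Since activations and deactivations may alternate, no amortization saves you here, so the claim ``each per-ancestor update is $O(\log^2 n)$'' does not follow. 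The paper's fix is precisely to avoid storing all breakpoints: in the fully dynamic structure each node stores \emph{only its single transition breakpoint} (the row at which the winner switches from the left child to the right child). A query then walks a root-to-leaf path inside each canonical node (still $O(\log^2 n)$ total), and an update needs to recompute just one breakpoint per ancestor. That recomputation is a binary search in which each comparison itself requires locating an envelope value in the sibling subtree via another root-to-leaf walk, giving $O(\log^2 n)$ per ancestor and $O(\log^3 n)$ overall.

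Your decremental variant is essentially the paper's: keep all active breakpoints at each node in an integer predecessor structure (the paper uses Y-fast tries; your van~Emde~Boas is equivalent), so a canonical-node query costs $O(\log\log n)$ and the transition binary search costs $O(\log n\log\log n)$. One quantitative point: your claim that the cumulative number of staircase changes is $O(n\log n)$ is too strong. The paper's amortization shows the total number of breakpoint insertions over all deactivations is $O(n\log^2 n)$ (each deactivation creates $O(\log n)$ new transition breakpoints, and each such breakpoint can be promoted through $O(\log n)$ ancestor levels over the course of the execution). This still yields the stated amortized $O(\log^2 n\log\log n)$ per deactivation, so your bottom line is right even though the intermediate bound is not.
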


\noindent The proof of Lemma~\ref{lemma:RMQ} appears in
Appendix~\ref{ap:RMQproof}. Note that we stated the above lemma for partial Monge
matrices and not full Monge matrices. This is because we apply the RMQ
data structure to the upper and lower triangles of the  $\sqrt{r} \times\! \sqrt{r}$ incidence
matrix $M$ of $K_i$ (which are partial Monge) as opposed to FR-Dijkstra that
uses a separate RMQ for each bipartite subgraph (whose corresponding
submatrix is full Monge) of $K_i$'s decomposition. 
In this section we only deactivate columns of $M$ and never activate columns. Therefore, we use the second data
structure in the lemma. The first data
structure in Lemma~\ref{lemma:RMQ} is used in Section~\ref{sec:HKRS-FR}.

\paragraph{\bf Modifying FR-Dijkstra.} We already mentioned that we
want  $\mathcal Q$ to be a Fibonacci heap and to include one item
per boundary vertex $v$. We also change the Monge heaps so that $Q_B$ now contains, for every triplet $(a, b_1, b_2)$, a vertex that minimizes
$d(b)=d(a)+M_{ab}$  among vertices $b$ between $b_1$ and $b_2$ that
where {\em not yet extracted from $\mathcal Q$}. If all vertices
between $b_1$ and $b_2$ were already extracted then the triplet has no
representative in $Q_B$. 

As before, in each iteration, a minimum item, corresponding to a vertex $v$ is extracted from
$\mathcal Q$ in $O(\log n)$ time. In FR-Dijkstra, $v$ is then
extracted from the (unique) Monge heap $\mathcal M_{j}$ that
contributed $v$ and  the new minimum vertex of $\mathcal M_{j}$  is
added to $\mathcal Q$. In our algorithm, $v$'s extraction from $\mathcal Q$
affects all  the $O(\log r)$  $\mathcal M_{j}$'s that include $v$ in
their $B$. Before handling any of them we  
apply operation (1) of the RMQ data structure of Lemma~\ref{lemma:RMQ}
in $O(\log^2 r \log\log r)$ time. Then, for each such  $\mathcal
M_{j}$, if $v$ was the minimum of some triplet $(a, b_1, b_2)$ then we
apply the following operations: (I)  remove $v$ from $Q_B$ in $O(\log
r)$ time, (II) query in $O(\log r  \log\log r)$ time the RMQ data
structure of Lemma~\ref{lemma:RMQ} for vertex $b^* = RMQ(b_1,b_2)$,
(III) insert $b^*$ into $Q_B$  in $O(\log r)$ time. Note that we do
not replace the triplet $(a, b_1, b_2)$ with two triplets as FR do in
{\sc ExtractMin}.  
Finally, if $v$ was also the minimum of $Q_B$ then let $w$ be the new
minimum of $Q_B$. If $w$'s value in $\mathcal Q$ is larger than in
$Q_B$ then we update it in $\mathcal Q$ using {\sc DecreaseKey} in
constant time.  

Next, we need to activate $v$ in all $\mathcal M_{j}$'s  that have
$v$ in their $A$. This is done exactly as in FR-Dijkstra by removing (possibly many) triplets from $T$ and inserting three new triplets.  
We now find the minimum in each of these three triplets by querying the RMQ data structure of Lemma~\ref{lemma:RMQ}. 
Finally, we update $Q_B$ and if this changes the minimum element $w$ of $Q_B$ and $w$'s value in $\mathcal Q$ is larger than in $Q_B$ then we update it in $\mathcal Q$ using {\sc DecreaseKey} in constant time. 

To summarize,  each vertex $v$ of the $O(n/\sqrt{r})$ vertices is
extracted once from $\mathcal Q$ in $O(\log n)$ time. For each such
extraction we do a single RMQ column deactivation in $O(\log^2 r \log\log
r)$ time, we do $O(\log r)$ RMQ queries in total $O(\log^2 r \log\log
r)$  time incurring $O(\log r)$
{\sc DecreaseKey} operations on $\mathcal Q$ in total $O(\log r)$
time, we activate $v$ in  $O(\log^2 r)$ time, and finally 
we do $O(\log r)$
updates to $Q_B$  in total $O(\log^2 r)$ time incurring $O(\log r)$
{\sc DecreaseKey} operations on $\mathcal Q$ in total $O(\log r)$ time . The overall complexity is thus $O((n/\sqrt{r})(\log n + \log^2 r
\log\log r))$. 


\section{The Algorithm}\label{sec:HKRS-FR}
In this section we describe our main result: Combining the modified
FR-Dijkstra from the previous section with HKRS~\cite{HKRS97} to
obtain a shortest-path algorithm that runs in $O((n/\sqrt{r}) \log^2 r)$ time. 
 
\subsection{Preprocessing}
Our algorithm first performs the following preprocessing steps (described below in more detail). 

\begin{algorithm} \caption{{\sc Preprocess}$(G,r)$}
\label{alg:preprocess}
\begin{algorithmic}[1]
\REQUIRE $G$ is an $n$-vertex directed planar graph with non-negative arc lengths
and degree at most 3. $r<n$.
\STATE let $\vec{r} = (r_1,r_2, \dots, r_k)$ be such that $r_1 = r$  and
$r_i = r_{i-1}^2$ for every  $i>1$.
\STATE compute a recursive $\vec{r}$--division of $G$ with a constant number of
holes
\FOR {each region $R$ of the $r_1$--division}
 \STATE compute the dense distance graph of $R$
 \STATE initialize all bipartite Monge heaps of $R$
\ENDFOR
\end{algorithmic}
\end{algorithm}

A recursive $\vec{r}$--division with $\vec{r} = (r_1,r_2, \dots, r_k)$ is a decomposition tree of $G$ in which the root corresponds to the entire graph $G$ and the nodes at height $i$ correspond
to regions that form an $r_i$--division of $G$. Throughout the paper
we use $r$-divisions with a constant number of holes. 
Our algorithm differs from HKRS~\cite{HKRS97} in our choice of $r_i =
r_{i-1}^2$ (in HKRS $r_i = 2^{O(r_{i-1})}$). This results in a
decomposition tree of height roughly $\log\log n$ (in HKRS it is
roughly $\log^* n$).\footnote{All logarithms in this paper are base 2.}
For $i \geq 1$, the regions of the $r_i$--division
are called height-$i$ regions. The height of a vertex $v$ is defined
as the largest integer $i$ such that $v$ is 
 a boundary vertex of a height-$i$ region. 
Our algorithm computes the recursive $\vec{r}$--division in linear
time using the algorithm of~\cite{ShayrDivision}. 
We assume the input graph $G$ has degree at most three.\footnote{This
  is a standard assumption that can be obtained in linear time by
  triangulating the dual graph with infinite length edges.} With this
assumption the algorithm of~\cite{ShayrDivision} can be easily modified to return an
$\vec{r}$--division in which, for every
$i$, each vertex $v$ belongs to at most 3 height-$i$ regions. 

We next describe the height-0 regions. 
In~\cite{HKRS97}, height-0 regions comprise of individual edges. 
We define height-0 regions  differently.
Consider a height-1 region $R$ (that is, $R$ is a region with $r_1=r$
edges and $O(\sqrt r)$ boundary vertices). The DDG of $R$ is a complete
weighted graph over the boundary vertices of $R$, so it has $O(\sqrt{r})$ vertices and 
$O(r)$ edges.
Consider the set of Monge Heaps (MHs) of $R$.  
Each MH corresponds to a bipartite graph with left side $A$ and right side $B$. Each boundary vertex $v$ of $R$ appears in $O(\log r)$ MHs. 
For each occurrence of $v$ on the $B$ side of some MH $h$, we create 
a copy $v_h$ of $v$.
Vertices of $G$ such as $v$ are called {\em natural vertices}, and we
say that $v_h$ is called a {\em copy} of $v$ (each natural vertex has
$O(\log r)$ copies).

We add a zero length arc from every $v_h$ to $v$. 
Each of these newly added arcs is the single element in a distinct
height-0 region  $R$. The label of such an arc (and hence the label
of the corresponding region $R$) is defined (as
in~\cite{HKRS97}) to be the label of its tail, and is initialized to infinity. 
 
Denote the bipartition of the vertex set of an MH $h$ by $A_h,B_h$. 
Let $h$ be a MH.
For every vertex $v\in A_h$, we construct a hyperarc $e$ (directed
hyperedge)  whose tail is $v$ and whose heads are $\{w_h : w \in
B_h\}$. We associate $e$ with the MH $h$.
Each hyperarc $e$ is the single element in a distinct height-0 region of
$R$. The label of this height-0 region is defined to be the label of
the tail of $e$, and it is initialized to infinity. 
Note that our construction is such that the arcs of the DDG do not
belong to any region $R$ in the algorithm. Rather, an arc of the  DDG
is represented in the algorithm in two ways; By a hyperarc, and by the Monge heap to
which it belongs. Therefore, from now on we may refer to and think of an arc $vw$ of
the DDG as an arc $vw_h$, and to a vertex $w \in B_h$ as the copy $w_h$.

\begin{observation}\label{obs:height0}
The height of every copy vertex $v_h$ is 0.
\end{observation}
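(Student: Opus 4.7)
The plan is to argue directly from the definitions. The height of a vertex is the largest $i$ for which it is a boundary vertex of some height-$i$ region, so I need to show two things about a copy vertex $v_h$: first, that $v_h$ lies on a boundary of some height-$0$ region, and second, that $v_h$ cannot be a boundary vertex of any region of height $i \geq 1$.

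For the second and crucial point, I would observe that height-$i$ regions for $i \geq 1$ are by definition the regions of the $r_i$-division of the input graph $G$, and hence their vertex sets are subsets of $V(G)$. The copy vertices $v_h$, however, are objects introduced only by the preprocessing step to represent an occurrence of a natural vertex $v$ on the $B$-side of a particular Monge heap $h$; they are not vertices of $G$. Consequently no copy vertex can belong to, let alone be a boundary vertex of, any height-$i$ region with $i \geq 1$.

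For the first point, I would just read off the construction: $v_h$ is incident to the zero-length arc $v_h \to v$, which by definition is the sole element of some height-$0$ region $R_1$, and $v_h$ is also a head of the hyperarc associated with the Monge heap $h$, which is the sole element of some distinct height-$0$ region $R_2$. Since $v_h$ lies in both $R_1$ and $R_2$, it is a boundary vertex of a height-$0$ region.

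Combining the two observations, the largest $i$ such that $v_h$ is a boundary vertex of a height-$i$ region is exactly $0$, which is precisely the statement of the observation. There is no real obstacle here; the argument is essentially a bookkeeping check that the new height-$0$ building blocks have been defined so as to keep copy vertices invisible to the higher levels of the recursive $\vec{r}$-division, and this is guaranteed by the fact that every arc (or hyperarc) touching a copy vertex is placed in its own dedicated height-$0$ region.
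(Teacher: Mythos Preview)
Your argument is correct and follows essentially the same direct-from-definitions route as the paper. One small point of comparison: your claim that height-$i$ regions for $i\geq 1$ have vertex sets contained in $V(G)$ is a bit delicate, since in the algorithm the height-1 region $R$ is the parent of the height-0 subregions containing $v_h$, so one could reasonably regard $v_h$ as belonging to $R$. The paper's proof sidesteps this ambiguity by arguing instead that every arc and hyperarc incident to $v_h$ lies in a height-0 subregion of a \emph{single} height-1 region $R$; hence $v_h$ is not shared between distinct height-1 regions and is therefore not a boundary vertex of $R$ (nor, consequently, of any ancestor region). Your explicit verification that $v_h$ \emph{is} a boundary vertex of some height-0 region---being shared between the arc-region for $v_h v$ and the hyperarc-regions having $v_h$ as a head---is a nice completeness check that the paper leaves implicit.
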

\begin{proof}
The construction is such that all arcs and hyperarcs
incident to a vertex $v_h$ correspond to height-0 regions of the same
height-1 region
$R$. Therefore, $v_h$ is not a boundary vertex of $R$, so the height of
$v_h$ is 0. \qed
\end{proof}

Let $c_1, c_2$ be the constants such that an $r$-division
of an $n$-vertex graph has at most $c_1 n/r$ regions, each having at most
$c_2 \sqrt{r}$ boundary vertices.
Let $c_5$ be the constant such that there are at most $c_5 \log r_1$
height-0 regions involving $v$.

\begin{lemma} \label{lem:num-0-regions}
The number of height-0 regions is $c_1c_2c_5\frac{n}{\sqrt r_1}\log r_1$
\end{lemma}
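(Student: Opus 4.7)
The plan is a direct multiplicative counting argument that treats both kinds of height-0 regions — the zero-length arcs $v_h \to v$ and the hyperarcs — in a single expression.

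First I would establish a bijection between height-0 regions and triples $(R, h, v)$, where $R$ is a height-1 region, $h$ is a Monge heap appearing in the DDG decomposition of $R$, and $v$ is a boundary vertex of $R$ lying in $A_h \cup B_h$. If $v \in B_h$ the triple names the zero-length arc $v_h \to v$; if $v \in A_h$ it names the hyperarc with tail $v$ associated to $h$. The construction preceding Observation~\ref{obs:height0} produces exactly one height-0 region for each such pair $(h, v)$, and each Monge heap belongs to a uniquely determined height-1 region (the one whose $K_i$ it decomposes), so this correspondence is a bijection.

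With the bijection in hand, the total number of height-0 regions equals
$$\sum_{R} \sum_{v \text{ boundary of } R} \#\{\text{Monge heaps of } R \text{ that contain } v\},$$
and I would bound this by applying in sequence the three structural constants introduced just before the statement: there are at most $c_1 n/r_1$ height-1 regions, at most $c_2 \sqrt{r_1}$ boundary vertices per such region, and, by the very definition of $c_5$, at most $c_5 \log r_1$ Monge heaps of $R$ containing any fixed boundary vertex. Multiplying the three bounds yields exactly the claimed $c_1 c_2 c_5 \frac{n}{\sqrt{r_1}} \log r_1$.

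The main (and admittedly minor) obstacle I anticipate is making sure the triple $(R, h, v)$ is unambiguous when a single natural boundary vertex $v$ participates in several height-1 regions. This is handled by the observation that each Monge heap belongs to a unique $K_i$, so the region $R$ is determined by $h$ alone and no height-0 region is counted twice across the outer sum. Once this bookkeeping is settled, no further nontrivial estimate is needed.
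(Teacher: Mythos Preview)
Your proposal is correct and follows essentially the same multiplicative counting argument as the paper: bound the number of height-1 regions by $c_1 n/r_1$, the boundary vertices per region by $c_2\sqrt{r_1}$, and the height-0 regions attached to a given boundary vertex within a region by $c_5\log r_1$, then multiply. Your framing via an explicit bijection with triples $(R,h,v)$ and the remark on why boundary vertices shared among several height-1 regions cause no double-counting are a touch more careful than the paper's own presentation, but the underlying argument is identical.
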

\begin{proof}
There are at
most $c_1\frac{n}{r_1}$ regions $R$ of height-1. A height-1 region $R$ 
 has at most $c_2\sqrt{r_1}$ boundary vertices. 
Each boundary
 vertex $v$ of $R$ appears in $O(\log r_1)$ MHs, so it has $O(\log
 r_1)$ copies. 
Hence there are $O(\log r_1)$ height-0 regions that
 correspond to individual arcs $v_hv$, and $O(\log r_1)$ height-0
 regions that correspond to individual hyperarcs whose tail is $v$.
There are at most $c_5 \log r_1$
height-0 regions involving $v$. The total number of height-0 regions
is therefore $c_1c_2c_5\frac{n}{\sqrt r_1}\log r_1$. \qed
\end{proof}

\subsection{Specification of Monge Heaps} \label{section:MH-spec}

Let $h$ be a MH. The arcs of the DDG represented by $h$ all have their tails in $A_h$ and heads in
$B_h$. 
Let $d(v)$ denote the current label of vertex $v$, and let
$\ell(vw)$ denote the length of the $v$-to-$w$ arc in the DDG.
Each vertex $v \in A_h$ has an associated set of consecutive vertices
in $B_h$ such that $\forall v' \in A_h , w_h \in B_h$ we have $d(v) + \ell(vw)
\leq d(v') + \ell(v'w)$. 
The vertices of $B_h$ in the interval
associated with a vertex $v$ of $A_h$ are called {\em the children of
  $v$}, and $v$ is called {\em the parent} of these vertices.
At any given time a vertex $w_h \in B_h$ is the child of exactly one
vertex $v \in A_h$ (the only exception is that upon initialization, no
vertex is assigned a parent). 
A vertex $w_h \in B_h$ can be either {\em active} or {\em inactive}. 
Initially all vertices are active.
A MH supports the following operations:
\begin{itemize}
  \item{{\sc FR-Relax}$(h, v)$} - Implicitly relax all arcs of MH $h$
  emanating from vertex $v \in A_h$. Internally, adds to the set
  of children of $v$ all the vertices in
  $B_h$  whose labels decreased because of these relaxations. This
  operation activates any of the newly acquired
  children of $v$ that were inactive.
\item{{\sc FR-GetMinChild}$(h, v)$} - Returns the active child $w_h$
  of $v$ in MH $h$ minimizing $\ell(vw_h)$.
\item{{\sc FR-Extract}$(v_h)$} - Makes vertex $v_h$ inactive  in MH $h$. Let
  $u$ be the parent of $v_h$ in MH $h$.  Returns the active child $w_h$
  of $u$ in $h$ minimizing $\ell(uw_h)$.
\end{itemize}

We discuss two implementations of these Monge heaps in
Appendix~\ref{ap:MHimplement}. These implementations are summarized in
the following lemma. 

\begin{lemma}\label{lem:HKRS-RMQ}
There exist two implementations of the above MH
with the following construction time (for all MHs of all regions in an $r$--division of an $n$-vertex graph), and
operation times:
\begin{enumerate}
\item  
$O((n/\sqrt{r}) \log^2 r)$ construction time and 
$O(\log^3 r)$ time per operation.
\item 
$O(n\log r )$ construction time and $O(\log r)$ time per operation.
\end{enumerate}
\end{lemma}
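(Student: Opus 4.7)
The plan is to prove the lemma by layering both implementations on the FR-Dijkstra bookkeeping described in Section~\ref{ch:FR}: for each MH $h$ with fully-Monge incidence matrix $M_h$, maintain the triplet tree $T_h$ together with its subtree-minimum augmentation (playing the role of $Q_B$), and reduce the three operations of Section~\ref{section:MH-spec} to $O(1)$ row-range-minimum queries on $M_h$ plus $O(1)$ amortized triplet manipulations. Specifically, I would implement FR-Relax$(h,v)$ exactly as Activate in Section~\ref{ch:FR}; FR-GetMinChild$(h,v)$ by reading the representative of $v$'s triplet from $Q_B$; and FR-Extract$(v_h)$ like ExtractMin in Section~\ref{ch:FR}, splitting the triplet that contains $v_h$ and refreshing the two new representatives via RMQ. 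The two implementations differ only in the RMQ data structure attached to each MH.

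For Implementation~1 I would attach the first data structure of Lemma~\ref{lemma:RMQ} to every bipartite Monge matrix $M_h$. This is the version required by the HKRS schedule of Section~\ref{sec:HKRS-FR}, where a vertex may be deactivated from a Monge heap and later reactivated as a smaller label reaches it, something that neither the decremental variant of Lemma~\ref{lemma:RMQ} nor the Kaplan et al.\ structure supports. Since this RMQ costs $O(k\log k)$ to build on a $k\times k$ matrix and the bipartite decomposition of each $K_i$ satisfies $\sum_h k_h\log k_h = O(\sqrt{r}\log^2 r)$ (a standard telescoping over the $O(\log r)$ decomposition levels), total construction costs $O((n/\sqrt{r})\log^2 r)$. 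Each operation performs $O(1)$ RMQ queries and $O(1)$ RMQ (de)activations, yielding the claimed $O(\log^3 r)$ per operation.

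For Implementation~2 I would use the original FR-Dijkstra per-row balanced search trees on $M_h$, which support both deletion and row-range minima in $O(\log r)$; FR-Extract then explicitly removes $v_h$ from the relevant row trees and splits the enclosing triplet just as FR-Dijkstra's ExtractMin does. The per-row tree setup costs $\sum_h k_h^2 = O(r)$ per region; folding in the cost of computing the underlying DDG itself via Klein's MSSP~\cite{Klein05} (an $O(r\log r)$-per-region, $O(n\log r)$-in-total contribution) yields the stated $O(n\log r)$ construction bound. The $O(\log r)$ per-operation bound is then immediate from standard amortized arguments on the triplet tree, exactly as in Section~\ref{ch:FR}.

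The main subtlety I anticipate is the amortized accounting of triplet manipulations when FR-Relax and FR-Extract interleave arbitrarily under the HKRS schedule---unlike FR-Dijkstra, where Activate and ExtractMin obey a rigid Dijkstra order. I would argue, using the Monge property, that each FR-Relax still replaces only a contiguous run of triplets plus at most two straddlers, and that each FR-Extract splits exactly one triplet, so the standard charging of removed triplets to their creation remains valid and the per-operation bounds follow by summation over all MHs.
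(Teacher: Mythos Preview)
Your high-level plan is on the right track---both implementations do layer FR's triplet machinery over an RMQ attached to each MH---but there is a concrete gap in how you implement {\sc FR-Extract}. You propose to ``split the triplet that contains $v_h$'' just as FR-Dijkstra's {\sc ExtractMin} does. That is precisely what the paper \emph{avoids} (cf.\ Section~\ref{warmup}: ``we do not replace the triplet $(a,b_1,b_2)$ with two triplets as FR do''), and avoiding it is essential once reactivation is required.

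The difficulty is this. After a split, the inactive vertex sits in a gap between triplets and has no parent in $T_h$. When a later {\sc FR-Relax}$(h,v)$ is invoked, you say you run {\sc Activate} ``exactly as in Section~\ref{ch:FR}''; but that procedure locates $v$'s new interval by walking the triplets of $T_h$ and comparing $d(v)+M_{v,\cdot}$ against $d(u)+M_{u,\cdot}$ at triplet endpoints. Gap vertices are never tested, so an inactive $w_h$ that should become $v$'s child---and hence be reactivated, per the MH specification---is silently skipped. Since every DDG arc lives in exactly one MH, the relaxation of that arc is then lost, breaking Invariant~1 of the correctness proof (a latent vertex would be inactive), not merely the running-time bound. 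A secondary symptom of splitting is that a single $u\in A_h$ can own many triplets, so ``reading the representative of $v$'s triplet from $Q_B$'' is no longer an $O(1)$-triplet operation for {\sc FR-GetMinChild}, and the claim of ``$O(1)$ RMQ (de)activations per operation'' cannot hold for {\sc FR-Relax}, which may need to reactivate many vertices.

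The paper's fix is to \emph{not} split on extract. Each $u\in A_h$ retains a single contiguous triplet at all times; {\sc FR-Extract}$(v_h)$ merely deactivates $v_h$ in the RMQ (a whole column in Implementation~1, the single entry $(u,v_h)$ in the parent's row tree in Implementation~2) and re-queries $u$'s interval to return the new minimum active child. Then {\sc FR-Relax}$(h,v)$ can find its interval by the standard triplet walk---well-defined because $B_h$ is always fully covered---and reactivates every inactive vertex it absorbs, charging each reactivation to the earlier {\sc FR-Extract} that deactivated it (so the cost is $O(1)$ amortized, not worst-case). With that one change your construction-time and per-operation arithmetic goes through as you wrote it.
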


In some applications, in particular in those that work
with \emph{reduced lengths} (such as the maximum flow algorithm of 
section~\ref{section:maxflow}), shortest path computations on the DDG
are performed multiple
times with slightly different DDGs. In each time, the new DDG can be
obtained quickly but the RMQ data structure needs to be built from
scratch. For such applications, the fast construction time of the
first implementation is crucial.
For other applications which perform multiple shortest path
computations on the same DDG such as the application in
Section~\ref{sec:DDGapp}, the second implementation is appropriate, and
yields slightly faster shortest paths computations. The $O(n\log r)$-time
construction  is not a bottleneck of such applications since
it is performed once, and matches the currently fastest known construction of the DDG.
In what follows we denote the time to perform an MH operation as
$O(\log^{c_q} r)$ where the constant $c_{q}$ is either 3 or 1
depending on the implementation choice. 

\subsection{Computing Shortest Paths on the DDG}

The pseudocode of the algorithm is given below. Parts of the
pseudocode  appear in black font and parts in blue font. 
The black parts describe our algorithm, and at this point in the paper it is enough to focus only on them. The blue parts are additions
to the algorithm that should not (and in fact cannot) be implemented. They will be used later (section~\ref{sec:analysis}) 
for the analysis.

The algorithm maintains a label $d(v)$ for each vertex $v$. Initially
all labels are infinite, except the label of the source $s$, which is 0.
For each height-$i$ region $R$, the algorithm maintains
a heap $Q(R)$ (implemented using Fibonacci heaps~\cite{FT87})
containing the height-($i-1$) subregions of $R$.
For a height-0 region $R$, $Q(R)$ contains
the single element (arc or hyperarc)  $e$ comprising the region $R$.  
If $v$ is the tail of $e$, then the key of this single element is $d(v)$ if $e$ is not
relaxed, and infinity otherwise. 
For $i\ge 1$ a height-$i$ region $R$, the key of a 
 subregion $R'$ of $R$ in $Q(R)$ is $\minKey(Q(R'))$.
Such heaps were also used in~\cite{HKRS97}
but they were not implemented by Fibonacci heaps. 
The heaps support the following operations: 
$\getKey, \minKey, \minItem,
\decreaseKey, \increaseKey$. The first four operations take constant
amortized time, and the fourth takes logarithmic amortized time.

\newcommand{\red}[1]{\textcolor{blue}{#1}}
\renewcommand{\algorithmicreturn}{\red{\textbf{return}}}

\begin{algorithm} \caption{{\sc Process}$(R\red{,\debt})$}
\label{alg:process}
\begin{algorithmic}[1]
\IF {$R$ is a height-0 region that contains a single hyperarc $e$}
   \STATE let $v$ be the tail of $e$
   \STATE let $h$ be the Monge heap to which $e$ belongs
   \STATE {\sc FR-Relax}$(h,v)$  \label{line:FRrelax}
   \STATE $w_h \leftarrow $ {\sc FR-GetMinChild}$(h,v)$ \label{line:minchild}
   \STATE \red{$\debt \leftarrow \debt + \log^{c_q} r_1$} \label{line:hyperdebt}
   \STATE $d(w_h) \leftarrow d(v) + $ length of arc $vw$ in the DDG of $h$  \label{line:minchild-relax}
   \STATE let $R'$ be the height-0 region consisting of the arc $w_h w$
   \STATE \red{$\debt \leftarrow \debt\ + $ }{\sc Update}$(R',w_hw,d(w_h)\red{,w_h})$  \label{line:minchild-update}
   \STATE $\increaseKey(Q(R), e, \infty)$
\ELSIF{$R$ is a height-0 region that contains a single arc $v_h v$}
   \IF {$d(v) > d(v_h)$}
       \STATE $d(v) \leftarrow d(v_h)$ \label{line:v-relax}
       \FOR {each hyperarc $e$ whose tail is $v$} 
          \STATE \red{$\debt \leftarrow \debt\ +$ }{\sc Update}$(R(e),  e, d(v)\red{,v})$ \label{line:update}
       \ENDFOR
   \ENDIF 
   \STATE $w_h \leftarrow$ {\sc FR-Extract}$(v_h)$ \label{line:FRextract}
   \STATE \red{$\debt \leftarrow \debt  + \log^{c_q} r_1$}   \label{line:arcdebt} 
   \STATE $d(w_h) \leftarrow d(v) + $ length of arc $vw$ in the DDG of $h$  \label{line:minchild_a}
   \STATE let $R'$ be the height-0 region consisting of the arc $w_h w$
   \STATE \red{$\debt \leftarrow \debt\ +$ }{\sc Update}$(R',w_hw,d(w_h)\red{,w_h})$ \label{line:minchild_b} 
   \STATE $\increaseKey(Q(R), v_h v, \infty)$
\ELSE  
   \STATE \red{$\upDebt \leftarrow 0\ ;\ \credit \leftarrow 0$}
   \FOR {$\alpha_{\heit(R)}$ times or until $\minKey(Q(R))$ is infinity}
      \STATE $R' \leftarrow \minItem(Q(R))$
      \STATE \red{$\credit \leftarrow \credit + \debt/\alpha_{\heit(R)}$}
      \STATE \red{$\upDebt \leftarrow \upDebt\ +\ $}{\sc
        Process}$(R'\red{,\debt/\alpha_{\heit(R)} + \log|Q(R)|})$ \label{line:proc}
      \STATE $\increaseKey(Q(R),R',minKey(Q(R')))$\label{line:updateR'}
   \ENDFOR
   \STATE \red{$\debt \leftarrow \debt + \upDebt - \credit$}
   \STATE \red{\textbf{if}} {\red{$\minKey(Q(R))$ will decrease in the
       future}} \red{\textbf{then}} \label{line:stable}
   \STATE \ \ \ \red{\textbf{return }}\red{$\debt$}
   \STATE \red{\textbf{else}} \red{(this invocation is stable)}
    \STATE \ \ \ \red{pay off debt by withdrawing from account of
        $(R,\entry(R))$}
    \STATE \ \ \ \red{\textbf{return} 0 }
   \STATE \red{\textbf{endif}}
\ENDIF
\end{algorithmic}
\end{algorithm}

\begin{algorithm} \caption{{\sc Update}$(R,x,k\red{,v})$}
\label{alg:update}
\begin{algorithmic}[1]
\REQUIRE $R$ is a region, $x$ is an item of $Q(R)$, $k$ is a key
value\red{, $v$ is a boundary vertex of $R$}.
\STATE $\decreaseKey(Q(R),x,k\red{,v})$
\IF {the $\decreaseKey$ operation reduced the value of $\minKey(Q(R))$}
   \STATE \red{$\entry(R) \leftarrow v$} \label{line:entry}
   \RETURN $\red{1 + }\text{\sc Update}(\parent(R),R,k\red{,v})$ \label{line:upd1}
\ELSE \RETURN \red{1} \label{line:upd2}
\ENDIF
\end{algorithmic}
\end{algorithm}
\renewcommand{\algorithmicreturn}{\textbf{return}}
\renewcommand{\algorithmicelse}{\textbf{else}}

As in~\cite{HKRS97}, the main procedure of the algorithm is the procedure {\sc
  Process}, which processes a region $R$. The algorithm repeatedly
calls {\sc Process} on the region $R_G$ corresponding to the entire graph,
until $\minKey(Q(R_G))$ is infinite, at which point the labels
$d(\cdot)$ are the correct shortest path distances. 
As in~\cite{HKRS97}, processing a height-$i$ region $R$ for $i \geq 1$
consist of calling {\sc Process} on at most
$\alpha_i$ subregions $R'$ of $R$ (Line~\ref{line:proc}). Processing a region $R'$ may change
$\minKey(Q(R'))$, so the algorithm updates the key of $R'$ in $Q(R)$
in Line~\ref{line:updateR'}.\footnote{Since edge lengths are non-negative, and since
the labels $d(\cdot)$ (and hence keys) only change by relaxations or
by setting to infinity, $\minKey(Q(R'))$ may only increase when
processing $R'$. Therefore updating the key of $R'$ in $Q(R)$ is done
using $\increaseKey$.}

Our algorithm differs from~\cite{HKRS97} in processing height-0
regions. Let us first recall how~\cite{HKRS97} processes height-0
regions.
Each height-0 region $R$ corresponds to a
single arc $uv$, and processing $R$ consists of relaxing $uv$. 
If the relaxation decreases $d(v)$, then all arcs whose tail is $v$
become unrelaxed. For every such arc $e'$ the key of the corresponding
height-0 region $R'$ corresponding to $e'$ needs to be updated to $d(v)$.
Doing so may require updating the keys of ancestor regions of
$R'$. The procedure {\sc Update} performs this chain of updates.

In our case, the height-0 regions may correspond to single arcs or to single hyperarcs. If a height-0 region $R$ consists of a single hyperarc
$e$ whose tail is $v$, then processing $R$ implicitly relaxes all the
arcs of the DDG represented by $e$ by calling {\sc FR-Relax}
(Line~\ref{line:FRrelax}).  We can
only afford to update explicitly the
label of a single child of $v$. 
We do so for the child  $w_h$ of $v$ with minimum label 
(Line~\ref{line:minchild-relax}). 
Since $w_h$ is a copy vertex of a natural vertex $w$, there is exactly one arc whose tail is
$w_h$ (namely, $w_hw$), and no hyperarcs whose tail is $w_h$. The arc
$w_hw$ is now unrelaxed, so the key of the height-0 region
corresponding to $w_hw$ needs to be updated by a call to {\sc Update}
(Line~\ref{line:minchild-update}).

If a height-0 region $R$ consists of a single arc $v_hv$, then, as
in~\cite{HKRS97}, processing $R$ explicitly relaxes all elements
(hyperarcs in our case), whose tail is $v$, and calls {\sc Update} to
update the keys in the heaps of the corresponding height-0 regions and their
ancestor regions (Lines~\ref{line:v-relax}--~\ref{line:update}). 
Similarly to the implementation of FR-Dijkstra, since at this point
the vertex $v_h$ has no outgoing unrelaxed arcs, the algorithm 
extracts $v_h$ from the Monge heap $h$ to which it belongs (so $v_h$ becomes
inactive). Among the children of $v$ there is now a new minimum active
child $w_h$ ($v_h$ may have been the previous minimum, but it has just
become inactive). The algorithm handles $w_h$, as in the previous case, by updating the key of the height-0 region
corresponding to $w_hw$ with a call to {\sc Update}
(Line~\ref{line:minchild_b}).

\subsection{Correctness}
Distances in the input graph and in the graph constructed by our
algorithm are identical since duplicating vertices and adding
zero-length arcs between the copies does not affect distances.

We note that the correctness of the algorithm in~\cite{HKRS97} does not
depend on the choice of attention span (the parameters
$\alpha_i$). These parameters only affect the running time. In
particular, an implementation of that algorithm in which the attention
span is not fixed would yield the correct distances upon termination.

We argue the correctness of our algorithm by considering such a variant
of the algorithm of~\cite{HKRS97} on the graph constructed by our
algorithm, where each hyperarc is represented by the individual arcs
forming it. We refer to this variant as algorithm H. 
In what follows we still refer to hyperarcs to make
the grouping of the individual arcs clear. We say that an arc $vw_h$
belongs to hyperarc $e$ if $vw_h$ is one of the arcs forming $e$.

Let $e$ be a hyperarc whose tail is $v$. Let $R$ be the height-1 region containing
$e$. Consider a time in the execution of algorithm H when $R$ is processed and
chooses to process a height-0 region corresponding to a single
arc $vw_h$ that belongs to $e$. This implies that the label of $vw_h$
is the minimum among all height-0 regions (edges) in $R$.
Observe that at this time all other arcs of $e$ have the
same label as that of $vw_h$ (they might have infinite label if they have already been
processed). Furthermore, since lengths are non-negative, processing
$vw_h$ cannot decrease the label of $w_h$ below that of $v$. 
We define algorithm H to process all arcs corresponding to $e$ one
after the other, regardless of the attention span. The above
discussion implies that algorithm H terminates with the correct
distance labels for all vertices in the graph.

We now show that algorithm H and our algorithm relax exactly the same
arcs, and in the same order. 
The only difference between algorithm H and our algorithm is that our
algorithm performs implicit
relaxations using Monge heaps, while algorithm H performs all
relaxations explicitly. In particular, when relaxing a hyperarc $e$
whose tail is $v$, our algorithm only updates the label of the minimum
child of $v$ (Lines~\ref{line:minchild}--\ref{line:minchild-relax}),
whereas algorithm H updates the labels of all children of $v$.
We call a vertex $w_h$ {\em latent} if its label does not reflect prior relaxations of arcs
$vw_h$ incident to it (because those relaxations were implicit). 
A vertex that is not latent is {\em accurate}.
Note that whenever the label of a vertex $w_h$ is updated by our
algorithm (i.e., $w_h$ becomes accurate), the label of the height-0
region consisting of the unique arc 
$w_hw$ is also updated
(Lines~\ref{line:minchild}--\ref{line:minchild-update}).
 
To establish that the order of relaxations is the same in both
algorithms, it suffices to establish the following lemma:
\begin{lemma}
The following invariants hold:
\begin{enumerate}
\item Every latent vertex is active.
\item Let $h$ be an MH, and let $v$ be a vertex in $A_h$. 
Let $S_v$ be the set of active children of $v$ with minimum
label. If $S_v$ is non-empty then there is a vertex in $S_v$ that is
accurate. 
\item If the key of a height-0 region consisting of a single arc $w_hw$
  is finite then $w_h$ is active.
\end{enumerate}
\end{lemma}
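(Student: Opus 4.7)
I plan to prove all three invariants simultaneously by induction on the number of lines of Algorithms~\ref{alg:preprocess}--\ref{alg:update} executed so far. The base case is trivial: before any line runs, every vertex is active, no MH has been touched so there are no latent vertices and every $S_v$ is empty, and every height-0 region has key $\infty$, so all three statements hold vacuously. For the inductive step I would walk through each line that could change the state, and I would address the invariants in the order 3, 1, 2, because each of these invariants is used in establishing the next.

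For Invariant~3 I would observe that a height-0 region of an arc $w_hw$ acquires a finite key only through a call to {\sc Update} on Line~\ref{line:minchild-update}, \ref{line:update}, or~\ref{line:minchild_b}, and in each such case the key is written right after an explicit assignment of the corresponding label, at a moment when $w_h$ (or, in the $v$-branch, the natural vertex $v$) is active; the key is reset to $\infty$ by the $\increaseKey(\cdot,\cdot,\infty)$ line that fires in the very same invocation of {\sc Process} that inactivates $v_h$ via {\sc FR-Extract}. For Invariant~1 the only line that creates latency is Line~\ref{line:FRrelax}, and the specification of {\sc FR-Relax} activates every newly-latent vertex; the only line that inactivates a vertex is Line~\ref{line:FRextract}, reached only when the height-0 region for $v_hv$ is the minimum of its parent heap with a finite key. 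By Invariant~3 the corresponding $v_h$ is active at that moment, and its key equals $d(v_h)$ as written by the most recent {\sc Update} on this region, which always occurs on Line~\ref{line:minchild-relax} or~\ref{line:minchild_a} immediately after $d(v_h)$ was set explicitly; hence $v_h$ is accurate at the time of extraction, so inactivating it never leaves a latent-and-inactive vertex.

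Invariant~2 is the delicate one. The invariant is (re)established every time Line~\ref{line:minchild-relax} or Line~\ref{line:minchild_a} fires: {\sc FR-GetMinChild}$(h,v)$ (or the analogous return value of {\sc FR-Extract}) produces a minimum active child $w_h$, and $d(w_h)$ is then written explicitly, placing an accurate vertex into $S_v$. Between such refreshes I would enumerate the events that can change $S_v$: (a) a child of $v$ is inactivated via {\sc FR-Extract}, (b) some children of $v$ are stolen by a different parent $v'$ during a call to {\sc FR-Relax}$(h,v')$, and (c) $d(v)$ itself decreases and {\sc FR-Relax}$(h,v)$ is called anew. In cases (a) and (c) the event is immediately followed by a refresh on the relevant parent (Lines~\ref{line:FRextract}--\ref{line:minchild_a} in case (a), Lines~\ref{line:FRrelax}--\ref{line:minchild-relax} in case (c)) which re-installs an accurate representative. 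For case (b), stealing moves from $v$ to $v'$ only those vertices whose label strictly decreases under $v'$, so any vertex that stays with $v$ keeps its old label; combining this with the Monge fact that the children of each parent form a contiguous interval of $B_h$, I would conclude that the previously accurate $w_h \in S_v$ either remains in $S_v$ and accurate, or is itself stolen, in which case any non-stolen child of $v$ has a strictly larger label and thus $S_v$ either becomes empty or its new minimum is already represented by an accurate vertex carried over from the previous refresh.

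The main obstacle I expect is precisely case~(b) of Invariant~2: bookkeeping that stealing events can never silently strip away the accurate witness of $S_v$ without either emptying $S_v$ or leaving another accurate witness behind. This will require combining the contiguity of each parent's child interval in $B_h$ with the monotonicity of children's labels induced by the Monge property, and carefully observing that the only line that can convert a currently-accurate child of $v$ into a latent one is a {\sc FR-Relax} on some $v'$ that actually moves that child out of $v$'s interval.
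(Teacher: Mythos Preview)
Your inductive framework and your handling of invariants~1 and~3 match the paper's proof, which also proceeds by induction over height-0 calls to {\sc Process} and uses the same chain (invariant~3 gives $v_h$ active; invariant~2 then gives $v_h$ accurate; so deactivating $v_h$ in {\sc FR-Extract} preserves invariant~1). On invariant~2 in the hyperarc branch the paper is in fact terser than you: it only observes that {\sc FR-Relax} activates the newly acquired children and that the minimum child $w_h$ of the relaxed vertex is made accurate, and then asserts that invariants~1 and~2 are maintained, without isolating your case~(b).

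Your sketch for case~(b), however, has a real gap. You claim that if the accurate witness in $S_v$ is stolen by $v'$, then the new minimum of $S_v$ is ``already represented by an accurate vertex carried over from the previous refresh.'' But the previous refresh of $v$ made exactly one child accurate---the then-minimum---and left every other child of $v$ possibly latent; if that one accurate child is stolen while a latent sibling remains, the new $S_v$ contains only latent vertices. The contiguity/Monge fact you cite controls \emph{which} children leave $v$ (a block from one end of $v$'s interval), not whether the survivors include an accurate vertex. Concretely: take $A_h=\{q,p\}$ (rows in this order), $B_h=\{a,b,c\}$, and the Monge array $M_{q,\cdot}=(5,1,1)$, $M_{p,\cdot}=(4,2,3)$; relaxing $p$ first with $d(p)=0$ makes only $b$ accurate, and a subsequent {\sc FR-Relax}$(h,q)$ with $d(q)=1/2$ steals $b$ and $c$, leaving $p$ with the single latent child $a$ and no accurate vertex in $S_p$. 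To close this you need either an additional order-of-processing argument tying the height-1 heap minimum to the state of $S_v$, or to replace invariant~2 by the weaker statement that is actually used after the lemma (for every latent copy there is \emph{some} accurate copy in the same Monge heap with no larger true label, not necessarily sharing the same parent).
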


\begin{proof}
The proof is by induction on the calls to {\sc Process} on height-0
regions performed by our algorithm.
Initially there are no latent vertices, all the vertices on the $B$
side of any MH are active, and all height-0 regions consisting of a
single arc have infinite labels, so all invariants trivially hold.
The inductive step consists of two cases. 

Suppose first that the height-0 region being processed consists of a
single hyperarc $e$. This hyperarc is implicitly relaxed in
Line~\ref{line:FRrelax}. 
All the vertices whose label should be changed by the relaxation are
guaranteed to become active (by the specification of {\sc FR-Relax}). Among
these vertices, $w_h$, the vertex with minimum label, is explicitly relaxed
(Lines~\ref{line:minchild}--\ref{line:minchild-relax}), so it is
accurate. All the others are latent. This establishes that the first
two invariants are maintained. 
In Line~\ref{line:minchild-update} the key of the height-0 region $R'$
consisting of the unique arc whose tail is $w_h$ is set to a finite value. Since
$w_h$ has just become active, the third invariant is also maintained.

Suppose now that the height-0 region being processed consists of a
single arc $v_hv$. By the third invariant, $v_h$ is active. By the
second invariant, $v_h$ must be accurate.
The arc $v_hv$ is explicitly relaxed, which does not create any
new latent vertices. In Line~\ref{line:FRextract}, $v_h$ is
deactivated. This is the only case where a vertex is
deactivated. Since $v_h$ is accurate, invariant~1 is maintained. 
Let $u$ be the parent of $v_h$ in MH $h$. 
By the specification of {\sc FR-Extract}, it returns an
active child $w_h$ of $u$ whose label is minimum among all of $u$'s active
children. The label of $w_h$ is explicitly updated in
Line~\ref{line:minchild_a}, so invariant~2 is maintained.
The height-0 region whose label is set to a finite value in Line~\ref{line:minchild_b}
consists of the single arc $w_hw$. Since $w_h$ is active, invariant~3
is maintained. \qed
\end{proof}

Note that invariants~1 and~2 guarantee that in every Monge heap $h$, if there exists a latent vertex $w_h$
then there exists an accurate vertex $u_h$ whose label is at most the correct label of $w_h$ (by correct label we mean the label that $w_h$ would have gotten had we performed the relaxations explicitly). This implies that as long
as a vertex is latent, its label does not affect the running of the
algorithm. To see this, let $R$ be the height-1 region that contains both $w_h$ and $u_h$. By Observation~\ref{obs:height0}, neither $w_h$ or $u_h$ are boundary vertices of $R$. Since the label of $u_h$ is smaller than that of $w_h$ no arc whose tail is $w_h$ would become the minimum of the heap $Q(R)$ as long as $w_h$ is latent. 
This implies, by a trivial inductive argument,   that  for any $i \geq 1$, for any height-$i$ region
$R$, the minimum elements in the heap $Q(R)$ in our algorithm and
in algorithm H have the same label, and in fact correspond to the same
arc (or to a hyperarc $e$ in our algorithm and to one of the arcs that
belong to $e$ in algorithm H). It follows that the two algorithms
perform the same relaxations and in the same order. Hence our
algorithm produces the same labels as algorithm H, and is therefore correct.

\subsection{Analysis}\label{sec:analysis}
The analysis follows that of HKRS~\cite{HKRS97}. We use the more recent
description of the charging scheme in~\cite{Book}, which
differs from the original one 
in~\cite{HKRS97} in the organization and presentation of the charging
scheme, but in essence is the same.

Since our algorithm only differs from that of~\cite{HKRS97} in the
implementation of height-0 regions, large parts of the analysis, and in
particular parts that do not rely on the choice of parameters (namely, 
the choice of region sizes $r_i$, and attention span $\alpha_i$)
remain valid without any change. The
most important of these is the Payoff Theorem (the charging scheme
invariant~\cite[Lemma 3.15]{HKRS97}), whose proof only depends on
processing elements with (locally) minimum labels, and on the fact
that processing a region $R$ only decreases labels of vertices that belong
to $R$.  
Since our analysis relies on the details of the original analysis, we
include some of the definitions and statement of lemmas
from~\cite{Book}.

We define an {\em entry vertex} of a region $R$ (denoted $\entry(R)$) as follows.  The only
entry vertex of the region $R_G$ (the region consisting of the entire graph $G$) is $s$ (the source) itself.  For any other region
$R$, $v$ is an entry vertex if $v$ is a boundary vertex of $R$.
We define the height of a vertex $v$ to be the largest integer $i$ such that $v$ is
an entry vertex of a height-$i$ region. 

When the algorithm processes a region $R$, it finds shorter paths to
some of the vertices of $R$ and so reduces their labels.  Suppose one
such vertex $v$ is a boundary vertex of $R$.  The result is that the
shorter path to $v$ can lead to shorter paths to vertices in a
neighboring region $R'$ for which $v$ is an entry vertex.  In order to preserve the property that
the minimum key of $Q(R')$ reflects the labels of vertices of $R'$ the
algorithm might need to update $Q(R')$.  Updating 
the queues of neighboring regions is handled by the {\sc Update}
procedure.  The reduction of $\minKey(Q(R'))$ (which can only
occur as a result of a reduction in the label of an entry vertex $v$ of $R'$)
is a highly significant event for the analysis.   We refer to such an
event as a {\em foreign intrusion of region $R'$ via entry vertex
  $v$.}

Because of the recursive structure of {\sc Process}, each initial
invocation {\sc Process}$(R_G)$ is the root of a tree of invocations
of {\sc Process} and {\sc Update}, each with an associated region $R$.  Parents,
children, ancestors, and descendants of recursive  invocations are defined in the
usual way.

For an invocation $A$ of {\sc Process} on region $R$, we define
$\pstart(A)$ and $\pend(A)$ to be the values of $\minKey(Q(R))$ just
before the invocation starts and just after the invocation ends,
respectively.

To facilitate the analysis we augment the pseudocode of {\sc Process}
and {\sc Update} to keep track of the costs of the various
operations. These additions to the pseudocode are shown in blue.
Note that these additions are purely
an expository device for the purpose of analysis; the additional blue code
is not intended to be actually executed. In fact,
Line~\ref{line:stable}  of {\sc Process} cannot be executed since
it requires knowledge of the future!

Amounts of cost are passed around by {\sc Update} and {\sc Process}
via return values and arguments.  We think of these amounts as {\em
  debt obligations}.  The running time of the algorithm is dominated
by the time for priority-queue operations and for operations on the
Monge Heaps. New debt is generated for every such call in
Lines~\ref{line:hyperdebt},~\ref{line:arcdebt}, and~\ref{line:proc}
of {\sc Process}, and in Lines~\ref{line:upd1} and~\ref{line:upd2} of
  {\sc Update}.
These debt obligations travel up and down the forest
of invocations. An invocation passes down some of its debt to its
children in hope they will pay this debt. A child, however, may pass
unpaid debt (inherited from the parent or generated by its
descendants) to its parent. Debts are eventually charged by invocations of {\sc
  Process} to pairs $(R, v)$ where $R$ is a region and $v$ is an
entry vertex of~$R$.

We say an invocation $A$ of {\sc Process} is {\em stable} if, for
every invocation $B>A$, the start key of $B$ is at least the start key
of $A$.
If an invocation is stable, it pays off the debt by
withdrawing the necessary amount from an account, the account associated with the
pair $(R,v)$ where $v$ is the value of $\entry(R)$ at the time of the
invocation. This value is set by {\sc Update} whenever a foreign
intrusion occurs (Line~\ref{line:entry}).  Initially the entries in the table are
undefined.  However, for any region $R$, the only way that
$\minKey(Q(R))$ can become finite is by an intrusion.  We are
therefore guaranteed that, at any time at which $\minKey(Q(R))$ is
finite, $\entry[R]$ is an entry vertex of~$R$.

Any invocation whose region is the whole graph is stable because there
are no foreign intrusions of that region.  Therefore, such an
invocation never tries to pass any debt to its nonexistent parent.  We
are therefore guaranteed that all costs incurred by the algorithm are
eventually charged to accounts.

 The following theorem,
called the Payoff Theorem, is the main element in the analysis. As we
already noted, the theorem, as well as its original proofs in~\cite{HKRS97,Book}, apply to
our algorithm.

\begin{theorem}[Payoff Theorem (Lemma 3.15 in~\cite{HKRS97})] \label{thm:Henzinger-payoff}
For each region $R$ and entry vertex $v$ of $R$, 
the account $(R, v)$ is used to pay off a positive amount at most once.
\end{theorem}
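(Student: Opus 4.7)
The plan is to adapt the HKRS charging argument to our setting; as the authors note, the proof in~\cite{HKRS97,Book} depends only on two properties that we have already verified: (i) {\sc Process} always operates on a subregion with locally minimum key, and (ii) processing a region $R$ only decreases labels of vertices that belong to $R$. Property (ii) extends to our setting because, by Observation~\ref{obs:height0}, the copy vertices sitting on the $B$-side of any Monge heap inside a height-1 region $R_1$ are all of height zero and hence internal to $R_1$; consequently, the implicit relaxations performed by {\sc FR-Relax} and {\sc FR-Extract} touch only vertices contained in $R_1$.

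I would argue by contradiction. Assume two distinct stable invocations $A_1 < A_2$ of {\sc Process}$(R)$ each withdraw a positive amount from the account $(R,v)$, so $\entry(R)=v$ at the start of both, and $A_2$ incurs strictly positive net debt. Stability of $A_1$ immediately gives $\pstart(A_2) \ge \pstart(A_1)$. Next, I would connect $\pstart$ to $d(v)$ by inspecting {\sc Update}: the equality $\entry(R)=v$ at the start of $A_j$ is installed (Line~\ref{line:entry}) by the most recent foreign intrusion of $R$ that strictly reduced $\minKey(Q(R))$, and the key $k$ propagated up that call chain is exactly $d(v)$ at the instant of the intrusion. Since labels $d(\cdot)$ are monotonically non-increasing, the intrusion preceding $A_2$ installed a value of $\minKey(Q(R))$ that is at most the value installed by the intrusion preceding $A_1$, i.e.\ $\pstart(A_2) \le \pstart(A_1)$. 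Combining the two inequalities yields $\pstart(A_1)=\pstart(A_2)$, and in particular $d(v)$ was unchanged between the two intrusions.

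The final step, which is the heart of the HKRS argument, is to convert this equality of start keys into the impossibility of strictly positive new debt inside $A_2$. Informally, since $Q(R)$ re-enters the same state at the start of $A_2$ as it had at the end of $A_1$, and since no relaxation of an arc entering $R$ through $v$ can produce a strictly smaller key (otherwise $d(v)$ would have strictly decreased), any subregion that $A_2$ processes was already available to $A_1$ and either was handled by $A_1$ or by a descendant of $A_1$ that is itself charged to some pair $(R',v')$ distinct from $(R,v)$. A careful induction on the invocation tree packages this into the statement that the net quantity $\upDebt - \credit$ computed by $A_2$ is non-positive, contradicting our assumption that a strictly positive amount is withdrawn by $A_2$.

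The main obstacle will be the final induction step: tracking how descendants of $A_1$ consume the work that $A_2$ would otherwise re-attempt, and ensuring that no second withdrawal from the same $(R,v)$ slot sneaks in through a chain of nested recursive calls. This is exactly the intricate bookkeeping carried out in Lemma~3.15 of~\cite{HKRS97} (and restated in~\cite{Book}), and I would import their argument verbatim, verifying only that our new base case---height-0 regions consisting of single hyperarcs---preserves the invariants used. For the hyperarc base case, properties (i) and (ii) together with the Monge-heap specification of Section~\ref{section:MH-spec} guarantee that an implicit relaxation inside $R$ neither affects labels outside $R$ nor alters the order in which subregions are selected, which is precisely what the HKRS induction requires.
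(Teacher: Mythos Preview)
Your approach matches the paper's: verify that the two hypotheses the HKRS proof relies on---(i) locally-minimum selection and (ii) locality of label decreases---continue to hold once height-0 regions may be hyperarcs, then import the proof of Lemma~3.15 from~\cite{HKRS97,Book} unchanged. The paper in fact does strictly less than you do; it offers no sketch of the argument whatsoever and simply asserts that the original proof applies verbatim, so your verification of (i) and (ii) via Observation~\ref{obs:height0} and the Monge-heap specification already exceeds what the paper supplies.

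One caution about your sketch: the step deriving $\pstart(A_2)\le\pstart(A_1)$ from monotonicity of $d(v)$ is loose. The quantity $\pstart(A_j)$ is $\minKey(Q(R))$ at the moment $A_j$ begins, not the value $d(v)$ installed by the preceding intrusion; intervening invocations of {\sc Process}$(R)$ may have raised $\minKey(Q(R))$ above that installed value, so the two ``installed'' values being ordered does not by itself order the two start keys. The actual HKRS argument does not route through an equality $\pstart(A_1)=\pstart(A_2)$ but uses stability more directly to rule out a second qualifying intrusion via $v$. Since you explicitly defer the substantive argument to~\cite{HKRS97}, this does not invalidate your plan, but the sketch as written is not a faithful outline of that proof.
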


The remainder of the analysis consists of bounding the total debt
charged from all accounts. The original analysis in~\cite{HKRS97}
proved a bound on the debt that is linear in the number of arcs of the
planar graph $G$, which is also linear in the number of vertices of $G$ . 
In our case however, the number of arcs in the DDG is $O(n)$, but we
need a bound that is linear in the number of vertices of the DDG,
which is $O(n/\sqrt{r})$. 
To obtain a stronger bound we need to choose different parameters for
the sizes of regions in the $\rvec$--division than those used
in~\cite{HKRS97}. 
The analysis is rather technical so the details are deferred to 
Appendix~\ref{appendix:analysis}. The final bounds of our algorithm
are summarized by the following theorem.

\begin{theorem}\label{thm:analysis}
A shortest path computation on the dense distance graph of an
$r$--division of an $n$-vertex graph can be done in $O((n/\sqrt{r})\log^2 r)$ time after $O(n\log r)$ preprocessing, or in $O((n/\sqrt{r})\log^4 r)$ time after $O((n/\sqrt{r})\log^2 r)$ preprocessing.
\end{theorem}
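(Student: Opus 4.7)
The plan is to adapt the HKRS charging scheme~\cite{HKRS97,Book} directly, exploiting the fact that our modifications only touch height-0 regions while the upper-level machinery is preserved. My first observation is that the Payoff Theorem (Theorem~\ref{thm:Henzinger-payoff}) continues to hold: its proof depends only on the fact that \textsc{Process} extracts the locally minimum element of $Q(R)$ and that processing $R$ never raises the label of a vertex outside $R$, both of which still hold in our algorithm. Hence the total running time is bounded by (number of accounts) $\times$ (worst-case debt paid per account).

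To count accounts I would proceed level by level. A height-$i$ region has $O(\sqrt{r_i})$ entry vertices and there are $O(n/r_i)$ such regions, giving $O(n/\sqrt{r_i})$ accounts at height $i$. With $r_i = r_{i-1}^2 = r^{2^{i-1}}$ this sequence decreases doubly exponentially, so the total number of accounts is dominated by the height-1 contribution, $O(n/\sqrt{r})$.

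Next I would bound the debt paid off by a stable invocation of \textsc{Process}. The only departures from HKRS occur at height-0 regions: each such invocation adds $O(\log^{c_q} r)$ to the debt (lines~\ref{line:hyperdebt},~\ref{line:arcdebt}) for its constant number of Monge-heap operations, while the \textsc{Update} chain it triggers contributes only $O(1)$ amortized per step via Fibonacci-heap \textit{decreaseKey}s. All other sources of debt --- the $O(\log|Q(R)|)$-cost of the \textit{increaseKey} on line~\ref{line:updateR'} and the recursive child invocations on line~\ref{line:proc} --- are identical to HKRS. Choosing the attention spans $\alpha_i$ appropriately under our $r_i = r_{i-1}^2$ growth, the HKRS accounting~\cite[Section~3]{Book} (re-verified for these new parameters) solves to a per-account debt of $O(\log^{c_q+1} r)$. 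Multiplying by the $O(n/\sqrt{r})$ account count yields total running time $O((n/\sqrt{r})\log^{c_q+1} r)$; instantiating with $c_q = 1$ or $c_q = 3$ from Lemma~\ref{lem:HKRS-RMQ} gives the two stated bounds, and the corresponding preprocessing times are exactly the Monge-heap construction costs of that lemma.

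The hard part will be re-deriving the attention-span choice. In HKRS, $\alpha_i$ is tuned to the doubly-exponential growth $r_i = 2^{O(r_{i-1})}$; with our tighter growth $r_i = r_{i-1}^2$ the recursion tree has height $O(\log(\log n/\log r))$, and the $\alpha_i$'s must simultaneously satisfy two opposing constraints --- small enough that the debt per account stays $\mathrm{polylog}(r)$ rather than growing with $n$, yet large enough that every non-stable invocation accumulates enough credit to cover its inherited debt, the prerequisite for the Payoff Theorem. A secondary subtlety is routing the cost of an \textsc{Update} chain that may traverse all levels of the recursion tree to the correct ancestor accounts through the debt mechanism. These technical details are presumably what is carried out in Appendix~\ref{appendix:analysis}.
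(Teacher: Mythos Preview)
Your high-level plan---reuse the Payoff Theorem, count accounts, and bound the debt charged to each---is exactly the paper's approach. But two of the concrete steps you outline would not go through as written.

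First, the account count is off at height~0. Each boundary vertex of a height-1 region participates in $O(\log r)$ Monge heaps, and each such occurrence spawns a separate height-0 region (one for the arc $v_h v$, one for the hyperarc with tail $v$). By Lemma~\ref{lem:num-0-regions} there are $\Theta((n/\sqrt{r})\log r)$ height-0 regions, each with a single entry vertex, so the number of accounts is \emph{not} dominated by height~1; it is dominated by height~0 and is $\Theta((n/\sqrt{r})\log r)$.

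Second, and more seriously, the ``(total accounts)~$\times$~(per-account debt)'' product does not yield the claimed bound, because the per-account debt is not uniformly $O(\log^{c_q+1} r)$. A stable height-$i$ invocation has $\beta_{i0}=\prod_{k\le i}\alpha_k$ height-0 descendants, and with the choice $\alpha_i=\tfrac{4\log r_{i+1}}{3\log r_i}$ one gets $\beta_{i0}=(4/3)^i\cdot \log r_{i+1}/\log r_1$, which grows with $i$ and depends on $n$ at the top levels. The paper therefore does a level-by-level sum (Corollary~\ref{cor:total-process-debt}): at height $i$ there are $O(n/\sqrt{r_i})$ accounts and each pays $O(\beta_{i0}\log^{c_q} r_1+\sum_{j\le i}\beta_{ij}\log r_{j+1})$; the doubly-exponential growth of $r_i$ makes the sum geometric and dominated by small $i$.

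Finally, the treatment of \textsc{Update} debt is not a secondary bookkeeping matter but requires its own argument, separated from the process debt. The paper splits it into type-1 ($i\ge j$) and type-2 ($i<j$) according to whether the paying region's height $i$ exceeds the height $j$ of the vertex initiating the chain (Lemma~\ref{lem:total-update-debt}). The crucial structural fact making this work is Lemma~\ref{cor:cost-per-update-chain}: an \textsc{Update} chain initiated on behalf of a \emph{copy} vertex $w_h$ costs only $O(1)$, because by Observation~\ref{obs:height0} such vertices have height~0 and the chain cannot climb above the enclosing height-1 region. Only chains initiated on behalf of \emph{natural} vertices can be long, and there are only $O(\log r)$ such chains per height-0 invocation (one per hyperarc with tail $v$), of which at most one per height-1 region can actually climb. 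Without this copy/natural distinction the update debt would not be controllable.
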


\newcommand{\dual}[1]{#1^\ast}
\newcommand{\excess}{\mathit{excess}}
\renewcommand{\algorithmiccomment}[1]{\hfill{// #1}}

\section{Applications} \label{section:applications}

In this section we give two applications of our fast shortest-path algorithm. In both applications, we obtain a speedup over previous algorithms by decomposing a region of $n$ vertices using an $r$-division and computing distances among the boundary vertices of the region in $O((n/\sqrt{r})\log^2 r)$ time using our fast shortest-path algorithm.

\subsection{All-pair distances among boundary vertices of a small face}\label{sec:DDGapp}

Let $G$ be a directed planar graph, and let $\phi$ be a face of $G$ with $k$ vertices on its boundary. We consider the problem of computing the $O(k^2)$ distances among all $k$ boundary vertices of $\phi$. There are two previously known solutions for this problem. First, it is possible to compute the shortest-path tree from each of the $k$ boundary vertices. This solution takes $O(nk)$ time using the shortest-path algorithm of Henzinger et al.~\cite{HKRS97}. Second, we can compute the distances by applying Klein's multiple-source shortest-path algorithm (MSSP)~\cite{Klein05} in $O((n + k^2) \log n)$ time.

We use our fast shortest-path algorithm and give an algorithm that computes the all-pair distances among the $k$ boundary vertices in $O(n \log k)$ time, when $k < \sqrt{n} / \log n$. For larger values of $k$, note that the MSSP solution is upper bounded by $O((n + k^2) \log k)$ since $\log n = O(\log k)$. Therefore, we get an upper bound of $O((n + k^2) \log k)$ for computing the all-pair distances among the $k$ boundary vertices of a given face, for any value of $k$.

We consider the entire graph $G$ as a single region of an $n$-division of some other, larger graph (the large graph itself is not relevant for our algorithm). We define the face $\phi$ to be a hole of the region $G$, and the $k$ vertices of $\phi$ to be boundary vertices of the region $G$. Note that our definition of the boundary of $G$ is valid for a region in an $n$-division with a constant number of holes since we assume that $k < \sqrt{n}$.

We further decompose the region $G$ using an $r$-division, for some value of $r$ to be defined later. This decomposition maintains the property that the $k$ vertices of $\phi$ are boundary vertices of the regions of the $r$-division that contain them. We compute this $r$-division in linear time using the algorithm of Klein et al.~\cite{ShayrDivision}. Then, we compute the DDG of the $r$-division in $O(n \log r)$ time, using the MSSP algorithm of Klein~\cite{Klein05}.

The $k$ boundary vertices of $\phi$ are all vertices of the DDG. We apply our fast shortest-path algorithm from each of the $k$ vertices, and retrieve the required all-pair distances among them. Applying our algorithm $k$ times requires $O(k \cdot (n / \sqrt{r}) \log^2 r)$ time.

The total running time of our algorithm is $O(n \log r + k (n / \sqrt{r}) \log^2 r)$. We choose $r = k^2 \log^2 k$; note that since $k < \sqrt{n} / \log n$ we have that $r < n $, and an $r$-division of $G$ is indeed defined. We get that the running time of our algorithm is $O(n \log k)$, as required.

\subsection{Maximum flow when the source and the sink are close} \label{section:maxflow}
In this section, we use our fast shortest-path algorithm to show an $O(n \log p)$ maximum $st$-flow algorithm in directed planar graphs. The parameter $p$, first introduced by Itai and Shiloach~\cite{IS79}, is defined to be the minimum number of faces that a curve $C$ from $s$ to $t$ passes through. Note that the parameter $p$ depends on the specific embedding of the planar graph. 

If $p = 1$ then the graph is \emph{$st$-planar}. In this case, the maximum flow algorithm of Hassin~\cite{Hassin}, with the improvement of Henzinger et al.~\cite{HKRS97} runs in $O(n)$ time. For $p>1$,
Itai and Shiloach~\cite{IS79} gave an $O(n p \log n)$ time maximum flow algorithm when the value of the flow is known. Johnson and Venkatesan~\cite{JV83} obtained the same running time without knowing the flow value in advance. Using the shortest-path algorithm of Henzinger et al.~\cite{HKRS97}, it is possible to implement the algorithm of Johnson and Venkatesan in $O(n p)$ time. Borradaile and Harutyunyan~\cite{BH13} gave another $O(n p)$ time maximum flow algorithm for planar graphs. For undirected planar graphs, Kaplan and Nussbaum~\cite{KaplanN11} showed how to find the minimum $s-t$ cut in $O(n \log p)$ time.

Our $O(n \log p)$ time bound matches the fastest previously known maximum flow algorithms for $p = \Theta(1)$~\cite{JV83} and for $\log p = \Theta(\log n)$~\cite{borradaile-klein-09}, and is asymptotically faster than previous algorithms for other values of $p$. We assume that $p < \sqrt{n}/ \log^3 n$, for larger values of $p$ the $O(n\log n)$ maximum flow algorithm of Borradaile and Klein~\cite{borradaile-klein-09} already runs in $O(n \log p)$ time.

\paragraph{\bf Preliminaries.}
A planar flow network consists of: (1) a directed planar graph $G$; (2) two vertices $s$ and $t$ of $G$ designated as a \emph{source} and a \emph{sink}, respectively; and (3) a \emph{capacity} function $c$ defined over the arcs of $G$. We assume that for every arc $u v$, the arc $v u$ is also in the graph (if this is not the case, then we add $v u$ with capacity $0$), such that the two arcs are embedded in the plane as a single edge. A \emph{flow} function $f$ assigns a flow value to the arcs of $G$ such that: (1) for every arc $u v$, $f(vu) = -f(uv)$; (2) for every arc $u v$, $f(u v) \leq c(u v)$; and (3) for every vertex $v \neq s, t$, the amount of flow that enters $v$ equals to the amount of flow that leaves $v$. A \emph{maximum flow} is a flow in which the total amount of flow that enters the sink is maximal. In a \emph{preflow} some of the vertices of $G$ (in addition to $t$) may have more incoming flow than outgoing flow (an \emph{excess}). A \emph{maximum preflow} is a preflow in which the total amount of flow that enters the sink is maximal. Finally, we \emph{add} a flow $f'$ to a flow $f$ by setting $f(e) \leftarrow f(e) + f'(e)$ for every arc $e$.

The \emph{dual graph} $\dual{G}$ of $G$ is is a planar graph such that every face $\phi$ of $G$ has a vertex $\dual{\phi}$ in $\dual{G}$, and every arc $e$ of $G$ with a face $\phi_\ell$ to its left and a face $\phi_r$ to its right has an arc $\dual{e}=\dual{\phi_\ell} \dual{\phi_r}$ in $\dual{G}$.

We now describe our maximum flow algorithm. We first describe an $O(np)$-time algorithm and then show how to improve it to $O(n \log p)$ using ideas of Borradaile et al.~\cite{BKMNWN11} together with our fast shortest-path algorithm.

\paragraph{\bf An $O(np)$ algorithm.}
Without loss of generality, we may assume that the curve $C$ from $s$ to $t$ crosses $G$ only at vertices, say $v_0, v_1, v_2, \dots, v_p$, where $v_0 = s$ and $v_p = t$. Therefore, we can embed in the graph a path $P$ between $s$ and $t$ consisting of $p$ edges $v_{i-1} v_i$ for $1 \leq i \leq p$ (note that some of the edges of $P$ may be parallel to original edges of the graph). We set the capacities of the arcs of $P$ to $0$, so the value of the maximum flow is not affected by adding $P$ to the graph.

Intuitively, our algorithm iterates over the vertices of $P$, from $v_0 = s$ until the last vertex before $t$, $v_{p-1}$. At each iteration, the algorithm pushes forward the excess flow of the current vertex of $P$ to the vertices of $P$ that follow it, where the excess of $s$ is $\infty$, and the excess of any other vertex $v_i$ of $P$ is the excess flow remaining at $v_i$ following the previous iterations. This is a special case of an algorithm of Borradaile et al.\ for balancing flow excesses and flow deficits of vertices along a path~\cite{BKMNWN11}. We obtain a speedup over the algorithm of Borradaile et al.\ by using our fast shortest-path algorithm, in a way similar to the one in the previous section, for computing distances among boundary vertices of a face.

Our algorithm works as follows. For every arc $v_i v_{i-1}$, for $1 \leq i \leq p$, we set $c(v_i v_{i-1}) = \infty$. We initialize a flow $f$ by setting $f(e) = 0$ for every arc of the graph. Then, at iteration $i = 1, \dots p$, we send a flow from $v_{i-1}$ to $v_i$ in the residual graph of $f$. The flow we send is as large as possible, but not larger than the excess of $v_{i-1}$. For $i = 1$, the excess of $v_0 = s$ is $\infty$. For $i > 1$, the excess of $v_{i-1}$ is the value of the flow that we sent in the previous iteration from $v_{i-2}$ to $v_{i-1}$. 
In fact, at iteration $i$ we actually send as much flow as possible from $v_{i-1}$ to \emph{all} vertices of $P$ that follow it. This is because each such vertex $v_k$ is connected to $v_i$ with a path of infinite capacity so any flow that we can send to $v_k$ is routed to $v_i$. Note that each iteration is actually a flow computation in an $st$-planar graph, so we can implement in $O(n)$ time using Hassin's algorithm~\cite{Hassin,HKRS97}. After we compute the flow from $v_{i-1}$ to $v_i$, we add it to $f$ and continue to the next iteration.

Following the last iteration, $f$ is a maximum preflow from $s$ to $t$. Note that only vertices of $P$ may have excess. We discard the arcs of $P$ from the graph. Then, we convert the maximum preflow $f$ to a maximum flow using the following procedure of Johnson and Venkatesan. First, we make the flow acyclic using the algorithm of Kaplan and Nussbaum~\cite{KN11a} in $O(n)$ time. Then, we find a topological ordering of the vertices such that if an arc $uv$ carries a positive amount of flow, then $u$ precedes $v$ in the ordering. Finally, in $O(n)$ time, we return the excess from vertices with an excess flow to $s$ backwards along this topological ordering. 

Since each iteration takes $O(n)$ time, the total running time of the algorithm is $O(np)$. The bottleneck of the algorithm is applying Hassin's maximum flow algorithm for $st$-planar graphs $p$ times. We next use a technique of Borradaile et al.\ to execute Hassin's algorithm implicitly.

\paragraph{\bf An $O(n\log p)$ algorithm.}
We start by reviewing Hassin's maximum flow algorithm for $st$-planar graphs. The input for the algorithm is a flow network $G$ with a source $s$ and a sink $t$, such that $s$ and $t$ are on the boundary of a common face. We embed an arc $t s$ with infinite capacity in $G$. Let $\phi$ be the face of $G$ on the left-hand side of $t s$. We compute shortest path distances in $\dual{G}$ from $\dual{\phi}$, where the length of a dual arc equals the capacity of the corresponding primal arc. 

Let $\pi(\psi)$ denote the distance in $\dual{G}$ from $\dual{\phi}$ to $\dual{\psi}$. The function $\pi$, which we call a \emph{potential function}, defines a \emph{circulation} $f_\pi$ in $G$ as follows. For an arc $e$ with a face $\psi_\ell$ to its left and a face $\psi_r$ to its right, the flow along $e$ is $f_\pi(e) = \pi(\psi_r) - \pi(\psi_\ell)$. Hassin showed that after we drop the extra arc $t s$, we remain with a maximum flow $f_\pi$ from $s$ to $t$ in the original graph.

Following Borradaile et al.\ we apply Hassin's algorithm with two modifications. First, we compute a maximum flow with a prescribed bound $x$ on its value. We obtain such a flow by setting the capacity of $t s$ to $x$ rather than to $\infty$~\cite{KNK93}. Second, we do not wish to add and remove the arc $t s$, since we do not want to modify the embedding of the graph. In our case, the arc $t s$ already exists in the graph, this is the arc $v_i v_{i-1}$ with capacity $0$. Instead of adding a new arc, with capacity $x$, from the sink to the source, we set the capacity of $v_i v_{i-1}$ to $x$. Instead of removing the arc following the flow computation, we set the \emph{residual capacity} of $v_i v_{i-1}$ and $v_{i-1} v_i$ back to $0$, by setting their capacities to $f(v_i v_{i-1})$ and $-f(v_i v_{i-1})$, respectively.

We do not maintain a flow $f$ throughout our algorithm, we rather use the potential function $\pi$ to represent the flow $f_\pi$. This gives us the implementation of our maximum flow algorithm in Algorithm~\ref{alg:flow} below.

\newcommand{\green}[1]{\textcolor{Green}{#1}}
\begin{algorithm} \caption{{\sc MaximumFlow}$(G, s, t, c)$} \label{alg:flow}
  \begin{algorithmic}[1]
	\STATE Embed the path $P = (v_0, v_1, \dots, v_p)$, where $v_0 = s$ and $v_p = t$, with $c(v_i v_{i-1}) = \infty$ and $c(v_{i-1} v_i) = 0$ for every $1 \leq i \leq p$
	\STATE $\pi(\psi) \leftarrow 0$ for every face $\psi$ of $G$ \green{\COMMENT{implicitly initialize $f_\pi(e) = 0$ for every arc $e$}}
	\STATE $\excess \leftarrow \infty$ \green{\COMMENT{the excess of $s$ is $\infty$}}
	\FOR {$i = 1$ to $p$}
		\STATE $c(v_i v_{i-1}) \leftarrow \excess$ \green{\COMMENT{bounds the amount of flow we send from $v_{i-1}$ to $v_i$}}
		\STATE $\pi_i(\psi) \leftarrow$ the distance from $\dual{\phi_i}$ to $\dual{\psi}$ in $\dual{G}$, for every face $\psi$, where $\phi_i$ is the face to the left of $v_i v_{i-1}$, and for every arc $e$ the length of $\dual{e}$ is the residual capacity of $e$ with respect to $f_\pi$ \green{\COMMENT{implements Hassin's maximum flow algorithm}}\label{line:flow-sp}
		\STATE $\pi(\psi) \leftarrow \pi(\psi) + \pi_i(\psi)$, for every face $\psi$ \green{\COMMENT{adds the new flow to $f_\pi$}}\label{line:accumulate}
		\STATE $c(v_i v_{i-1}) \leftarrow f_\pi(v_i v_{i-1})$; $c(v_{i-1}, v_i) \leftarrow -f_\pi(v_i v_{i-1})$ \green{\COMMENT{set the residual capacity of both arcs to $0$}}\label{line:reweight}
		\STATE $\excess \leftarrow f_\pi(v_i v_{i-1})$ \green{\COMMENT{this is the value of the flow}}
	\ENDFOR
	\STATE Discard the path $P$ from the graph \label{line:after-for}
	\STATE Convert the maximum preflow $f_\pi$ into a maximum flow $f$ \green{\COMMENT{the implementation of this step is described above}}
	\RETURN $f$
  \end{algorithmic}
\end{algorithm}

If we implement the shortest path computation at Line~\ref{line:flow-sp} using the algorithm of Henzinger et al., then the running time of our algorithm is $O(np)$. Borradaile et al.\ noticed that we do not have to compute and maintain the value of the potential function $\pi$ for all faces of $G$ prior to Line~\ref{line:after-for}. Until this point, it suffices to compute the value of the potential function only for the faces incident to the arcs of $P$. This way, Borradaile et al.\ computed the distances in $\dual{G}$ using FR-Dijkstra on a DDG that contains the dual vertices of all of these faces, and executed Line~\ref{line:flow-sp} faster. We follow this approach, and compute the distances at Line~\ref{line:flow-sp} in a way similar to the algorithm of the previous section. Only after the last iteration of the loop that computes $\pi$, we compute explicitly the value of $\pi$ for all faces of the graph. We now give the details of this improvement.

Consider the graph $\dual{G}$, the dual graph of $G$ (including the arcs of $P$). Let $\dual{X}$ be the set of vertices of $\dual{G}$ that correspond to faces of $G$ incident to arcs of $P$. The set $\dual{X}$ is the set of vertices for which we compute distances at Line~\ref{line:flow-sp}. We define the region $\dual{R}$ to be the region of $\dual{G}$ that contains the dual arcs of the original arcs of $G$, excluding the arcs of $P$. The boundary vertices of $\dual{R}$ are exactly the vertices of $\dual{X}$, and they all lie on a single face of $\dual{R}$ (which may have other vertices on its boundary). Let $\dual{P}$ be the region of $\dual{G}$ that contains the arcs dual to arcs of $P$, these are the arcs of $\dual{G}$ that are not in $\dual{R}$. See Figure~\ref{fig:flownetwork}.

\begin{figure}[tb]
	\begin{center}
		\includegraphics[scale=0.8]{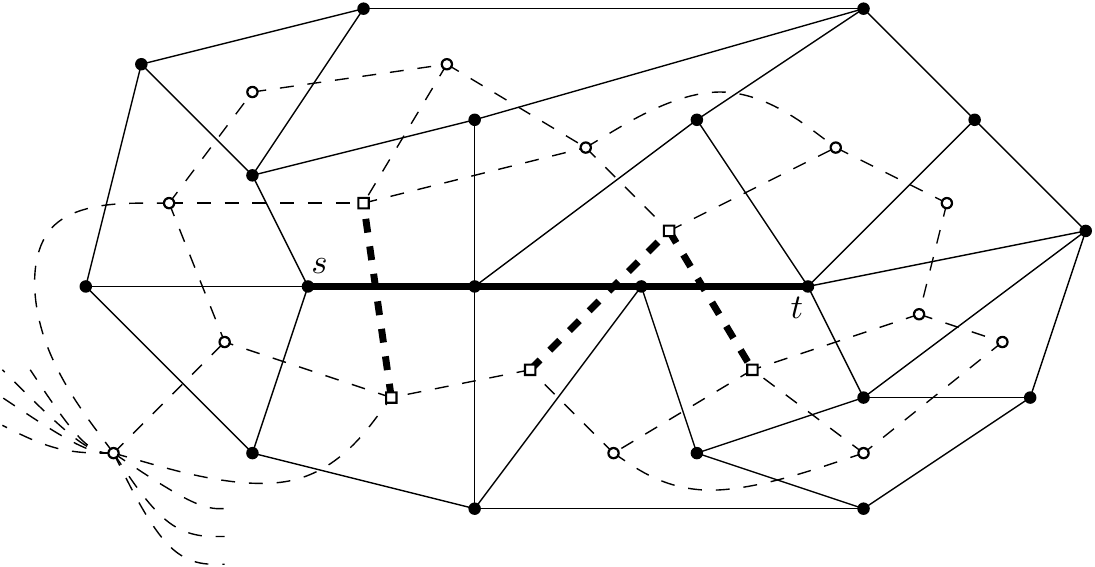}
		\caption{A flow network $G$ (\emph{solid} edges) and the dual network $\dual{G}$ (\emph{dashed} edges). The path $P$ from $s$ to $t$ is \emph{bold}, the region $\dual{P}$ is \emph{dashed and bold}. The set of dual vertices $\dual{X}$ (\emph{boxes}) separates between $\dual{P}$ and $\dual{R}$ (the dual edges that are not in $\dual{P}$), they are all on the boundary of a single face of $\dual{R}$. To make the illustration simpler, some of the arcs incident to the dual vertex of the infinite face are omitted.}
		\label{fig:flownetwork}
	\end{center}
\end{figure}

We decompose $\dual{R}$ using an $r$-division, similar to the way we did in the previous section, such that the vertices of $\dual{X}$ remain boundary vertices of every region containing them. We choose $r = p^2 \log^6 p$ for reasons that will become clear later. Note that $r < n$, so an $r$-division is well defined. The region $\dual{P}$ has $O(r)$ vertices and $O(\sqrt{r})$ boundary vertices (actually it has much less vertices than $r$ and much less boundary vertices than $\sqrt{r}$). Therefore, the regions of the $r$-division of $\dual{R}$ together with the region $\dual{P}$ are an $r$-division of $\dual{G}$, in which all vertices of $\dual{X}$ are boundary vertices. We compute a DDG for this $r$-division in $O(n \log r)$ time.

Initially, the lengths of the arcs of $\dual{R}$ are defined by the capacities of the corresponding primal arcs. At each invocation of Line~\ref{line:accumulate}, we update the potential function $\pi$, and thereby also the residual capacities of the arcs of $G$. Since the length of a dual arc equals to the residual capacity of the corresponding primal arc, we update the lengths of the arcs of $R$ as well. As Borradaile et al.\ noticed, we can use $\pi$ also as a \emph{price function} that represents the change in the lengths of the arcs of $\dual{R}$ and defines \emph{reduced lengths}. That is, if the distance from $\dual{\psi_x} \in \dual{X}$ to $\dual{\psi_y} \in \dual{X}$ with respect to the original capacity of the graph is $d$, then the distance from $\dual{\psi_x}$ to $\dual{\psi_y}$ with respect to the residual capacities of $f_\pi$ is $d - (\pi(\psi_y) - \pi(\psi_x))$. We conclude that we compute the DDGs of the regions that compose $\dual{R}$ only once and use $\pi$ as a price function for these DDGs.

At Line~\ref{line:reweight} of the algorithm, we update the lengths of $v_i v_{i-1}$ and $v_{i-1} v_i$. This update causes a change in the DDG of $\dual{P}$. Since $\dual{P}$ contains only $p$ vertices, we can recompute the DDG in $O(p^2 \log p) = O(r)$ time. 

We implement Line~\ref{line:flow-sp} by applying our fast
shortest-path algorithm to the DDG of $\dual{G}$. We use the variant
of our algorithm that allows a price function and supports reduced
lengths (i.e, the variant with $c_q=3$ whose preprocessing time is $O((n/\sqrt{r})
\log^2r)$ running time is $O((n/\sqrt{r}) \log^4r)$, see
Theorem~\ref{thm:analysis} and Section~\ref{section:MH-spec}).  Thus, the total time for running the loop is $O(p(n/\sqrt{r}) \log^4 r)$. Since we set $r = p^2 \log^6 p$, the running time of the loop is $O(n \log p)$.

It remains to show how to obtain the value of the potential function
$\pi$ for all faces of $G$ prior to Line~\ref{line:after-for}. Recall
that, for a face $\psi$ of $G$, $\pi(\psi)$ is the distance in
$\dual{G}$ from the dual vertex of the face $\phi_p$ to the left of
$v_p v_{p-1}$ to $\dual{\psi}$, where the length of each dual arc is
defined by the capacity of the corresponding primal arc. Therefore, we
can compute $\pi$ using a shortest-path algorithm. We do not apply a
shortest-path algorithm for the entire graph $\dual{G}$, since the
arcs dual to arcs of $P$ may have negative lengths (which set the
residual capacities of the primal arcs with respect to $f_\pi$ to
$0$). Instead, we initialize the labels of the vertices of $\dual{X}$
according to the values of $\pi$ that we have already computed, and
apply the shortest-path algorithm of Henzinger et al.\ to $\dual{R}$,
which contains only non-negative length arcs. This is analogous to the
way Borradaile et al.\ handle the same issue.

We conclude that the total running time of our algorithm is $O(n \log p)$ as required. The correctness of our algorithm follows from the correctness of the algorithm of Borradaile et al., since our algorithm is a special case of it. The differences between our algorithm and the one of Borradaile et al.\ are that we decompose the region $\dual{R}$ using an $r$-division and that we apply our fast shortest-path algorithm.

\section {Acknowledgements}
We thank Philip Klein for discussions and for allowing us to use large
parts of his description~\cite{Book} of the algorithm of
Henzinger et al.~\cite{HKRS97} as the basis for our
description in section~\ref{sec:HKRS-FR}.

\appendix

\section*{Appendix}
\section{Proof of Lemma~\ref{lemma:RMQ}.} \label{ap:RMQproof}
First, as we show in~\cite{GMW}: The blank entries in a partial Monge matrix $M$ can be implicitly replaced so that $M$ becomes fully Monge and each entry $M_{ij}$ can be returned in $O(1)$ time. We can therefore assume that $M$ is fully Monge. To make the presentation clear (and compatible with~\cite{KaplanMNS12}), we prove the lemma on the transpose of $M$ (i.e., we activate and deactivate rows and a query is a column and a range of rows). Furthermore, 
although the lemma states that $M$ is an $n \times n$ matrix, we consider $M$ an an $m \times n$ matrix for some $m\le n$.
  
Denote $r(j)=i$ if the  minimum element of $M$ in
column $j$ lies in row $i$.  
The {\em upper envelope} $\mathcal{E}$ of all the rows of the Monge matrix
$M$ consists of the $n$ values $r(1), \ldots, r(n)$. Since $M$ is Monge we have that $r(1)\le  \ldots \le r(n)$ and so $\mathcal{E}$ can be implicitly represented in $O(m)$ space by keeping only the $r(j)$s of $O(m)$ columns called {\em breakpoints}. Breakpoints are the columns~$j$ where $r(j)\ne r(j+1)$. 
The minimum element $r(\pi)$ of an entire column $\pi$ can then be retrieved
in $O(\log m)$ time by  a binary search for the first
breakpoint column $j$ after $\pi$, and  setting $r(\pi)=r(j)$.

The tree $T_h$ presented in~\cite{KaplanMNS12} is a full binary
tree $T_h$ whose leaves are the rows of $M$. A node $u$ whose subtree contains $k$
leaves (i.e., $k$ rows) stores the $O(k)$ breakpoints of the $k \times
n$ matrix $M_u$ defined by these $k$ rows and all $n$ columns of $M$ (each breakpoint also stores its appropriate row index).
A leaf represents a single row and
requires no computation.
For an internal node $u$, we compute its list of breakpoints  by merging the breakpoint lists of its  left child
$u_\ell$ and  its right child $u_r$, where $u_\ell$ is the child whose rows have lower indices.

By the Monge property, $u$'s list of
 breakpoints starts with a prefix of  $u_\ell$'s breakpoints and ends with a suffix of $u_r$'s breakpoints. 
Between these there is possibly one new breakpoint $j$ (the {\em transition} breakpoint). The prefix and suffix parts can be found easily in $O(k)$ time by linearly comparing 
the relative order of the upper envelopes of $M_{u_\ell}$ and $M_{u_r}$.
This allows us to find an interval of columns containing the transition column $j$. Over this interval, the transition row is in one of two rows (a row in $M_{u_\ell}$ and a row in  $M_{u_r}$), and so $j$ can be found in  $O(\log n)$ time via binary search. Summing $O(k + \log n)$ over all 
nodes
of $T_h$ gives $O(n \log n)$  time. 
The
total size of $T_h$ is $O(m \log m)$. A query to $T_h$ is performed as follows.
 
The minimum entry in a query column $\pi$ and a contiguous range of rows $R$ is found by identifying $O(\log m)$ \emph{canonical nodes} of $T_h$.
A node $u$ is canonical if $u$'s set of rows is contained in $R$ but the set of
rows of $u$'s parent is not.
For each such canonical node $u$, the minimum element in column $\pi$ amongst all the rows of $u$ is found in $O(\log m)$ time via binary search on $u$'s list of breakpoints.
The output is the smallest of these. The  total query time is thus
$O(\log^2 m)$.

\paragraph{\bf A dynamic data structure.} The above $T_h$ data structure was used in~\cite{KaplanMNS12} in a static version and so the query time was  improved from $O(\log^2 m)$ to $O(\log m)$  using fractional cascading~\cite{CGfrac}. In order to allow activating/deactivating of a row we slightly modify this data structure and do not use fractional cascading. 
Every node $u$ of  $T_h$, instead of storing all its breakpoints, will now only store its single transition breakpoint. This way, in order to find the minimum element in column $\pi$ of $M_u$, we need to find the predecessor breakpoint of $\pi$ in $M_u$. This is done in $O(\log m)$ time by following a path in $u$'s subtree starting from $u$ and ending in a leaf. Given a query column and a contiguous range of rows we again identify $O(\log m)$ canonical nodes. Each of them now requires following a path of length $O(\log m)$ for a total of $O(\log^2 m)$ time.

In order to deactivate an entire row $r$ ,we mark this row's  leaf in $T_h$ as  inactive and then fix the breakpoints bottom-up on the entire path from $r$ to the root of $T_h$. For each node $u$ on this path, we assume that 
the transition breakpoints for all descendants of $u$ are correct and we find the new transition breakpoint $j$ of $u$ as follows:  If $u$ has left child $u_\ell$ and right child $u_r$ then $j$ should be placed between a breakpoint $j_\ell$ of $M_{u_\ell}$ and a breakpoint $j_r$ of $M_{u_r}$. Here, $M_{u_\ell}$ and $M_{u_r}$  correspond  to the {\em active} rows in $u_\ell$'s and $u_r$'s subtrees. If there are no active rows in $u_\ell$'s or $u_r$'s subtrees then finding $j$ is trivial. If both have active rows then we now show how to find $j_\ell$, finding $j_r$ is done similarly. 

To find $j_\ell$, we follow a path from $u_\ell$ downwards until we reach a leaf or a node whose subtree contains no active rows as follows. Recall that the node $u_\ell$ stores a single breakpoint (a row index $i'$ and a column index $j'$). 
Since $(i',j')$ is the minimum entry in column $j'$ of $M_{u_\ell}$ we want to compare it with $(i'',j')$ = the minimum entry in column $j'$ of $M_{u_r}$.  If $M[i',j']<M[i'',j']$ then we know that $j>j'$ and so we can proceed from $u_\ell$ to $u_\ell$'s right child. Similarly, if $M[i',j']\ge M[i'',j']$ then $j\le j'$ and we  proceed from $u_\ell$ to $u_\ell$'s left child. The problem is we do not know $i''$. To find it, we need to find the minimum entry in column $j'$ of $M_{u_r}$. We have already seen how to do this in $O(\log m)$ time by following a downward path from $u_r$. We therefore get that finding $j_\ell$ (and similarly $j_r$) can be done in $O(\log^2 m)$ time. 
After finding $j_\ell$ and $j_r$ we  find where $j$ is between them in $O(\log n)$ time via binary search. The overall time for deactivating an entire row $r$  is thus $O(\log^3 m + \log m \log n) =O(\log^3 n)$.

In order to activate an entire row $r$ we mark its leaf in  $T_h$ as active and fix the breakpoints from $r$ upwards using a similar process as above. This completes the proof of the dynamic data structure. 

\paragraph{\bf A decremental data structure.} 
We now proceed to proving the decremental data structure in which inactive rows never become active. 
We  take advantage of this fact  by maintaining, for each node $u$ of $T_h$, not only its transition breakpoint but all the active breakpoints of $M_u$.  
Since breakpoints are integers bounded by $n$, we can maintain $u$'s $k$ active breakpoints in an $O(k)$-space $O(\log\log n)$ amortized query-time predecessor data structure such as Y-fast-tries~\cite{Willard1983}. Given a query column and a contiguous range of rows, we identify $O(\log m)$ canonical nodes in $T_h$  and then only need to do a predecessor search on each of them in total $O(\log m \log\log n)$ time. 

In order to deactivate a row $r$, we mark $r$'s leaf in $T_h$ as inactive and fix the predecessor data structures of $r$'s ancestors. 
For each node $u$ on the $r$-to-root path we may need to change $u$'s transition breakpoint as well as insert (possibly many) new breakpoints into $u$'s predecessor data structure $Y_u$.
We show how to do this assuming $u$'s left child $u_\ell$ is the ancestor of $r$ and we've already fixed its predecessor data structure $Y_\ell$. The case where $u$'s right child $u_r$ is the ancestor of $r$ is handled symmetrically.  
Note that, if $r$ is a descendent of $u_\ell$, then the transition breakpoint and the predecessor data structure $Y_r$ of $u_r$ undergo no changes.

The first thing we do is we check if $u$'s old transition breakpoint has row index $r$. If not then we don't need to change $u$'s transition breakpoint. If it is, then we delete $u$'s old transition breakpoint from $Y_u$ and we find the new transition breakpoint $j$ of $u$  via binary search. The binary search takes $O(\log n \log\log n)$ time since each comparison requires to query $Y_\ell$ and  $Y_r$ in $O(\log\log n)$ time. 
After finding the new transition breakpoint $j$ we insert it into $Y_u$. Finally, we need to insert to $Y_u$ (possibly many) new breakpoints, each in $O(\log\log n)$ time. These breakpoints are precisely all the breakpoints in $Y_\ell$ that are to the left of $j$ and to the right of predecessor($j$) in $Y_u$ as well as all the breakpoints in $Y_r$ that are to the right of $j$ and to the left of successor($j$) in $Y_u$.

We claim that the above procedure requires  $O(\log^2 n \log\log n)$ amortized time. First note that finding the new transition breakpoints of all $O(\log m)$ nodes on the $r$-to-root path takes $O(\log m \log n \log\log n)$ time. Inserting these new breakpoints into the predecessor data structures and deleting the old one takes $O(\log m \log\log n)$ time. 
Next, we need to bound the overall number of breakpoints that are ever inserted into predecessor data structures. In the beginning, all rows are active and the total number of breakpoints in all predecessor data structures is $O(m\log m)$. Then, whenever we deactivate a row, each node $u$ on the leaf-to-root path in $T_h$ may insert a new transition breakpoint $j$ into $Y_u$. This breakpoint $j$ might consequently be inserted into every $Y_v$ where $v$ is an ancestor of $u$. Let $top(j)$ denote the depth of the highest ancestor of $u$ to which we will consequently insert the breakpoint $j$. Note that we might also insert to $Y_u$ other breakpoints but that only means that their $top$ value would decrease. Since every $top$ value is at most $O(\log m)$ and since top values only ever decrease this means that over all the $m$ row deactivations only $O(m\log^2 m)$ breakpoints are ever inserted. This means that all insertions take $O(m\log^2 m\log\log n)$ time.
\qed

\section{Proof of Lemma~\ref{lem:HKRS-RMQ}}\label{ap:MHimplement}
In this section we describe two implementations of the Monge Heaps
specified in Section~\ref{section:MH-spec}, thus proving
Lemma~\ref{lem:HKRS-RMQ}.

\paragraph{Implementation 1}
This implementation in the statement of the lemma
is  similar to that of section~\ref{warmup} and differs only in the use of RMQ. In section~\ref{warmup} we used one RMQ data structure for all MHs of a region $R$. Here, each MH requires an RMQ data structure of its own. This means that for $i=1,\ldots, \log(\sqrt{r})$, a region $R$ requires $2^i$ RMQ data structures for $\sqrt{r}/2^i \times \sqrt{r}/2^i$ matrices. 
Furthermore, in section~\ref{warmup} we only deactivated vertices and so we could use the 
decremental RMQ of Lemma~\ref{lemma:RMQ}. Here, an inactive vertex can become active again and so we use the (slower) dynamic RMQ of Lemma~\ref{lemma:RMQ}. 

To implement {\sc FR-GetMinChild}$(h, v)$ we simply query the RMQ data structure of $h$ in $O(\log^2 r)$ time. 
To implement {\sc FR-Extract}$(v_h)$ we first deactivate $v_h$ in the RMQ  of $h$ in $O(\log^3 r)$ time (notice that the deactivation is done only in $h$ as opposed to section~\ref{warmup}  where a deactivation applies to all MHs that contain the vertex). 
Then, we find $w_h$ by querying the RMQ of $h$ in $O(\log^2 r)$ time. 
To implement {\sc FR-Relax}$(h, v)$ we first perform the same
procedure as in section~\ref{warmup} that creates one new triplet and
destroys possibly many triplets. This takes amortized $O(\log r)$ time
since destroying a triplet is paid for by that triplet's creation
(note that as opposed to section~\ref{warmup} here a triplet of vertex
$v$ can be destroyed and then created again multiple times). Then,
activating all the new children of $v$ that were inactive takes
amortized $O(1)$ time since we can charge the activation of a vertex
to the time it was deactivated (in a prior call to  {\sc
  FR-Extract}). 

Constructing the dynamic RMQ  of Lemma~\ref{lemma:RMQ} for an $x\times
x$ matrix takes $O(x \log x)$ time. Therefore all RMQ data structures
of a region $R$ are constructed in $\sum_{i=1}^{\log(\sqrt{r})} 2^i
\cdot \sqrt{r}/2^i \cdot \log(\sqrt{r}/2^i)=O(\sqrt{r} \log^2 r)$
time, and so over all regions we need $O((n/\sqrt{r}) \log^2 r)$
time. 

\paragraph{Implementation 2}
The second implementation in the statement of Lemma~\ref{lem:HKRS-RMQ}
is based on the following simple RMQ.

\begin{observation}
Given an $n\times n$ Monge matrix $M$, one can construct in $O(n^2\log n)$ time and $O(n^2)$ space a data structure that supports the following operations in $O(\log n)$ time:  (1) mark an entry as inactive (2) mark an entry as active (3) report the minimum active entry in a query
row and a contiguous range of columns (or $\infty$ if no such entry exists).
\end{observation}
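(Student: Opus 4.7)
The plan is to handle each row of $M$ independently with a standard segment tree, so the Monge property is not actually exploited; it is inherited from the surrounding context in which the data structure will be applied. For each row $i$, I would build a segment tree $T_i$ over the $n$ columns. At each leaf $j$ of $T_i$ I store the value $M_{ij}$ together with a bit indicating whether the entry $(i,j)$ is currently active, and at every internal node $u$ I store the minimum of $M_{ij}$ over the active leaves $j$ in the subtree of $u$ (or $+\infty$ if none are active).

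Construction would process the $n$ rows one by one. For each row, a bottom-up pass initializes all leaves as active and fills in the subtree minima in $O(n)$ time. This gives $O(n^2)$ construction overall, which comfortably fits the stated $O(n^2\log n)$ bound, and the total space is $O(n^2)$ since each $T_i$ uses $O(n)$ nodes and stores $O(1)$ information per node.

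The three operations are then textbook segment-tree operations on the relevant $T_i$. To mark $(i,j)$ inactive or active, I flip the bit at leaf $j$ of $T_i$ and walk up the root-to-leaf path refreshing each ancestor's subtree minimum from its two children, in $O(\log n)$ time. To report the minimum active entry in row $i$ over a contiguous column range $[a,b]$, I decompose $[a,b]$ into $O(\log n)$ canonical intervals of $T_i$ and combine the minima stored at the corresponding nodes, again in $O(\log n)$ time; if all the involved nodes store $+\infty$ the answer is $+\infty$.

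There is no real obstacle here: the whole point of the observation is that once one is willing to pay $\Theta(n^2)$ space, a trivial per-row segment tree achieves all three operations in $O(\log n)$. The subtlety, if any, is merely noting that the Monge assumption is not needed for the data structure itself and that keeping one segment tree per row, rather than a single two-dimensional structure, is what decouples activations in different rows and makes every update local to one $O(\log n)$ path.
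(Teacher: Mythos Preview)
Your proposal is correct and matches the paper's own argument almost exactly: the paper also builds one independent tree per row, augmented with subtree minima, and handles activation/deactivation by a local update along a root-to-leaf path and queries by combining $O(\log n)$ canonical nodes. The only cosmetic difference is that the paper phrases the per-row structure as a balanced binary search tree with insertions and deletions rather than a static segment tree with active bits; your variant is if anything slightly cleaner and gives an $O(n^2)$ construction that still fits the stated $O(n^2\log n)$ bound.
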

\begin{proof}
The data structure is a very na\"ive one. We simply store, for each
row of $M$, a separate balanced binary search tree keyed by column
number where each node stores the minimum value in its
subtree. Deactivating (resp. activating) an entry is done by deletion
(resp. insertion) into the tree, and a query is done by taking the
minimum value of $O(\log n)$ canonical nodes identified by the query's
endpoints. \qed 
\end{proof}

Notice that the above na\"ive RMQ data structure activates and
deactivates single entries in $M$ while the specification of {\sc FR-Extract}
seems to require deactivating and not entire columns of $M$ (The
specification is to Make $v_h$ inactive in the entire MH $h$).
However, deactivating a single element suffices. 
This is because each
vertex $v_h \in B_h$ appears in exactly one triplet at any given
time. Therefore, whether it is active or inactive is only important in
the RMQ of the vertex $u \in A_h$ whose triplet contains $v_h$.
Whenever an inactive $v_h$ changes parents, we make $v_h$ active by reintroducing
the corresponding element into the RMQ if it was previously deleted,
and charge the activation 
to the time $v_h$ was deactivated in a prior call to  {\sc
  FR-Extract}). 

Constructing the na\"ive data structure on all MHs of
a region $R$ takes $\sum_{i=1}^{\log(\sqrt{r})} 2^i (\sqrt{r}/2^i)^2
\cdot \log(\sqrt{r}/2^i)  =O(r \log r)$ time, and so over all regions
we need $O(n\log r )$ time. Each operation on the RMQ now takes only
$O(\log r)$ time. This means that {\sc FR-Relax}, {\sc
  FR-GetMinChild}, and {\sc FR-Extract} 
can all be done in $O(\log r)$ amortized time. 
\qed

\section{Analysis of the Main Algorithm}\label{appendix:analysis}
In this section we give the complete running-time analysis of our algorithm by bounding the total debt withdrawn from all accounts during the execution of the algorithm.

The total debt depends on the parameters $\rvec =(r_1, r_2, \ldots)$ of the recursive $\rvec$-division and on the
parameters $\alpha_0, \alpha_1, \ldots$ that govern the number of
iterations per invocation of {\sc Process}.  Recall that $r_1 = r$  and
$r_i = r_{i-1}^2$ for   $i>1$.  Define $\alpha_0 = 1$ and $\alpha_i=\frac{4\log r_{i+1}}{3\log r_i}$ for $i>0$.

\begin{lemma} \label{lem:debt-inherited}
Each invocation at height $i$ inherits at most $4\log r_{i+1}$ debt.
\end{lemma}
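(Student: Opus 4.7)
The plan is to prove the lemma by induction on decreasing $i$, starting from the root of the recursion tree. The base case is the invocation of {\sc Process} on $R_G$, the region corresponding to the entire graph, which is called with $\debt = 0$; this trivially satisfies the bound. For the inductive step, consider an invocation $A$ at height $i$ whose parent $A'$ is at height $i+1$, associated with some region $R'$ of size $r_{i+1}$. Inspecting Line~\ref{line:proc} of {\sc Process}, $A$ is invoked with debt argument $\debt(A')/\alpha_{i+1} + \log|Q(R')|$, where $\debt(A')$ denotes the debt variable at $A'$ at the moment of the recursive call and $|Q(R')|$ is the number of height-$i$ subregions of $R'$.

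The two quantities I need to control are $\debt(A')$ and $\log|Q(R')|$. For the first, note that $\debt(A')$ is the sum of the debt inherited by $A'$ (at most $4\log r_{i+2}$ by the inductive hypothesis) together with any additional debt accumulated from earlier iterations of its loop before the call to $A$. Looking at the pseudocode, debts added within $A'$'s loop are of the form $\upDebt - \credit$ at the end of each iteration; since $\credit$ is accumulated at the rate of $\debt(A')/\alpha_{i+1}$ per iteration and the $\log|Q(R')|$ term is passed separately, the residual inherited debt remaining in $\debt(A')$ can be bounded by $4\log r_{i+2}$ throughout (this is a standard accounting argument in the HKRS framework; I would formalize it by showing the invariant that $\debt(A')$ never exceeds its initial value). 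Then the debt passed to $A$ is at most
\[
\frac{4\log r_{i+2}}{\alpha_{i+1}} + \log|Q(R')| \;=\; \frac{4\log r_{i+2}}{(4\log r_{i+2})/(3\log r_{i+1})} + \log|Q(R')| \;=\; 3\log r_{i+1} + \log|Q(R')|.
\]

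It therefore remains to show $\log|Q(R')| \le \log r_{i+1}$, which would close the induction since $3\log r_{i+1} + \log r_{i+1} = 4\log r_{i+1}$. For $i+1 \ge 2$, a height-$(i+1)$ region of size $r_{i+1}$ is decomposed into $O(r_{i+1}/r_{i}) = O(\sqrt{r_{i+1}})$ height-$i$ subregions (using $r_{i+1} = r_i^2$), so $\log|Q(R')| \le \tfrac12 \log r_{i+1} + O(1) \le \log r_{i+1}$ for sufficiently large $r_1$. For $i+1 = 1$, Lemma~\ref{lem:num-0-regions} and the fact that a height-$1$ region contains $O(\sqrt{r_1})$ boundary vertices, each appearing in $O(\log r_1)$ height-$0$ regions, give $|Q(R')| = O(\sqrt{r_1}\log r_1)$ and hence again $\log|Q(R')| \le \log r_1$.

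The main obstacle I anticipate is the first part of the bookkeeping, namely justifying that $\debt(A')$ stays bounded by its inherited value throughout the execution of $A'$'s loop, so that each recursive call inherits at most the $\debt$ argument assumption. This requires unwinding the blue accounting code and using that unpaid debt returned up the tree by children is offset by the credit accumulated in the parent, which in turn relies on the choice $\alpha_{i+1} = 4\log r_{i+2}/(3\log r_{i+1})$ being large enough to amortize. The remaining arithmetic is the straightforward geometric calculation above.
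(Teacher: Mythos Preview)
Your approach is essentially the same reverse induction as the paper's, and the arithmetic with $\alpha_{i+1}$ and the bound on $\log|Q(R')|$ is correct. However, the ``main obstacle'' you anticipate is not an obstacle at all: inspect Lines~25--32 of {\sc Process} again. The local variable $\debt$ is the argument passed in and is \emph{not modified inside the loop}; the values returned by children are accumulated into $\upDebt$, and $\debt$ is only updated in Line~32, after the loop terminates. Hence every recursive call in Line~29 uses exactly the inherited value of $\debt$, and the bound $\debt(A') \le 4\log r_{i+2}$ holds trivially at each iteration by the inductive hypothesis alone---no accounting argument about $\upDebt - \credit$ is needed for this lemma. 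The paper's proof is accordingly two lines: inherited debt $\le 4\log r_{i+2}$, so the child gets $\le 4\log r_{i+2}/\alpha_{i+1} + \log|Q(R')| = 3\log r_{i+1} + \log r_{i+1} = 4\log r_{i+1}$.
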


\begin{proof} The proof is by reverse induction on $i$. In the base case, each top-height invocation
 clearly inherits no debt at all.  Suppose the lemma holds for $i$, and consider a
  height-$i$ invocation on a region $R$.  By the inductive hypothesis, this invocation inherits at
  most $4 \log r_{i+1}$ debt. An ${\alpha_i}$ fraction of this debt is
  passed down to each child. In addition, each child $R'$
  inherits a debt of $\log(|Q(R)|) = \log(r_i)$ to cover the cost of
  the $\minItem$ and $\increaseKey$ operations on $Q(R)$ required for
  invoking the call to {\sc Process} on $R'$. Overall we get that each child inherits  $\frac{4 \log r_{i+1}}{\alpha_i}+\log(r_i)$ which by 
the choice of $\alpha_i$ is $4\log r_i$. \qed
\end{proof}

We next wish to bound the number of descendant invocations incurred by an invocation of {\sc Process}.   
Define $\beta_{ij}=\alpha_i \alpha_{i-1} \ldots \alpha_{j+1}$ for
$i > j$. Define $\beta_{ij}$ to be 1 for $i=j$, and 0 for $i<j$. By definition, 
$$\beta_{ij} = \prod_{k=j+1}^i \frac{4\log r_{k+1}}{3\log r_k} =
\left( \frac{4}{3} \right)^{i-j} \frac{\log r_{i+1}}{\log r_{j+1}}.$$

\begin{lemma} \label{lem:beta}
A {\sc Process} invocation of height $i$ has at most
  $\beta_{ij}$ descendant {\sc Process} invocations of height $j$.
\end{lemma}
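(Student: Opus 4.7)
The plan is to prove this by induction on the height difference $i-j$, using the fact that any {\sc Process} invocation on a height-$i$ region has bounded out-degree in the recursion tree.

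For the base case $i = j$ the statement is immediate: the invocation itself is the unique height-$j$ descendant it can account for (we count the invocation as a descendant of itself), and $\beta_{ii} = 1$ by definition. For $i < j$ the claim is vacuous since a height-$i$ invocation calls {\sc Process} only on strict subregions, which have height at most $i - 1 < j$, so there are $0 = \beta_{ij}$ such descendants.

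For the inductive step with $i > j$, I would observe that the ``\textbf{for} $\alpha_{\heit(R)}$ times'' loop in {\sc Process} (the non-base-case branch) makes at most $\alpha_i$ recursive calls when $\heit(R) = i$. Each such call is an invocation on a subregion of $R$, which by the recursive structure of the $\rvec$-division is a height-$(i-1)$ region. Hence in the tree of {\sc Process} invocations, every height-$i$ node has at most $\alpha_i$ children, each of height $i - 1$. Applying the induction hypothesis to each child, the number of height-$j$ descendants contributed by a single child is at most $\beta_{i-1, j}$, so the total number of height-$j$ descendants of the height-$i$ invocation is at most
\[
\alpha_i \cdot \beta_{i-1,j} \;=\; \alpha_i \, \alpha_{i-1} \cdots \alpha_{j+1} \;=\; \beta_{ij},
\]
as required.

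There is no real obstacle here; the argument is a routine induction. The only point to be careful about is that the recursion tree's branching at a height-$i$ node is bounded by $\alpha_i$ (rather than by $|Q(R)|$), which is exactly what the inner loop of {\sc Process} enforces, and that {\sc Update} invocations are not counted as children in the tree of {\sc Process} invocations being analyzed.
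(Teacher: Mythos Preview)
Your proof is correct and is exactly the routine induction one would expect; the paper in fact states this lemma without proof, treating it as immediate from the definition of $\beta_{ij}$ and the fact that the \textbf{for}-loop in {\sc Process} makes at most $\alpha_i$ recursive calls on height-$(i-1)$ subregions. Your handling of the base cases $i=j$ and $i<j$, and the observation that {\sc Update} calls do not contribute children in the {\sc Process} tree, are all in order.
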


Note that only height-0 invocations incur debt. Let $c_3$ be a constant such that  a Monge heap operation takes $c_3\log^{c_q} r_1$ time.  The debt incurred by the algorithm by a step of {\sc Process} is called
the {\em process debt}. 

\begin{lemma} \label{lem:process-debt}
For each height-$i$ region $R$ and boundary vertex $x$ of $R$, the
amount of process debt payed off by the account $(R,x)$ is 
at most $c_3\beta_{i0}\log^{c_q} r_1  + \sum_{ j \leq i} 4\beta_{ij} \log r_{j+1} $.
\end{lemma}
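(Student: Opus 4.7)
The plan is to prove the lemma by induction on the height $i$, but to establish a slightly stronger claim: the value of the $\debt$ variable at the end of \emph{any} height-$i$ invocation of {\sc Process} (regardless of whether that invocation turns out to be stable) is bounded by
\[
D_i := c_3\beta_{i0}\log^{c_q} r_1 \;+\; \sum_{j \leq i} 4\beta_{ij}\log r_{j+1}.
\]
This uniform bound is what is needed for the induction, since for a stable invocation it upper-bounds the amount withdrawn from the account, while for an unstable invocation it upper-bounds the debt returned to the parent and hence the contribution to the parent's $\upDebt$.

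\paragraph{\bf Base case ($i=0$).}
A height-0 invocation inherits at most $4\log r_1$ debt by Lemma~\ref{lem:debt-inherited}. Its body executes one of the two height-0 branches, each of which adds a single process-debt term of $\log^{c_q} r_1$ on Line~\ref{line:hyperdebt} or Line~\ref{line:arcdebt} to account for the constant number of Monge-heap operations it performs. Thus, after absorbing the constant into $c_3$, the final $\debt$ is at most $c_3\log^{c_q} r_1 + 4\log r_1 = D_0$, using $\beta_{00}=1$.

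\paragraph{\bf Inductive step ($i\geq 1$).}
For a height-$i$ invocation $A$, the final value of $\debt$ is $\debt_{init} + \upDebt - \credit$, which, since $\credit\geq 0$, is at most $\debt_{init}+\upDebt$. By Lemma~\ref{lem:debt-inherited}, $\debt_{init}\leq 4\log r_{i+1}$. The variable $\upDebt$ is the sum of debts returned by at most $\alpha_i$ children, each a height-$(i-1)$ invocation. The inductive hypothesis bounds each such contribution by $D_{i-1}$, giving $\upDebt\leq\alpha_i D_{i-1}$. Using the telescoping identity $\alpha_i\beta_{(i-1)j}=\beta_{ij}$ (valid for $0\leq j\leq i-1$), one rewrites
\[
\alpha_i D_{i-1} \;=\; c_3\beta_{i0}\log^{c_q} r_1 \;+\; \sum_{j\leq i-1}4\beta_{ij}\log r_{j+1},
\]
and then adds $4\log r_{i+1}=4\beta_{ii}\log r_{i+1}$ to extend the sum up to $j=i$, obtaining the claimed bound $D_i$. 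The Payoff Theorem (Theorem~\ref{thm:Henzinger-payoff}) then converts this into the claim of the lemma, since only stable invocations withdraw from accounts, and the amount each withdraws equals its final $\debt$ value.

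\paragraph{\bf Main obstacle.}
The technical subtlety is not the calculation but the choice of inductive hypothesis. Inducting directly on ``debt paid off by stable invocations'' does not work, because $\upDebt$ at a stable invocation includes debt returned by \emph{unstable} children, whose debt has explicitly \emph{not} been paid off. One must therefore induct on the end-of-invocation $\debt$ variable uniformly, so that the same bound applies whether the child paid off its debt (contributing $0$ to $\upDebt$) or passed it up (contributing up to $D_{i-1}$). Once this is in place, the rest is mechanical: the recurrence $\alpha_iD_{i-1}+4\log r_{i+1}\leq D_i$ matches the defining telescoping relation on the $\beta_{ij}$.
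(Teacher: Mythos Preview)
Your proof is correct. The paper, however, does not argue by induction on $i$: it fixes the single stable invocation $A$ that withdraws from $(R,x)$ (via the Payoff Theorem), observes that every dollar of process debt paid off by $A$ was either inherited by some descendant of $A$ or incurred at a height-$0$ descendant, and then sums directly using Lemma~\ref{lem:beta} (at most $\beta_{ij}$ descendants at height $j$) together with Lemma~\ref{lem:debt-inherited} (each inherits at most $4\log r_{j+1}$) and the height-$0$ incurred bound $c_3\log^{c_q} r_1$. Your inductive argument is the natural recursive reformulation of this unrolling --- the telescoping identity $\alpha_i\beta_{(i-1)j}=\beta_{ij}$ you invoke is exactly the recurrence underlying Lemma~\ref{lem:beta} --- and your need to strengthen the hypothesis to cover unstable invocations corresponds to the paper's summing over \emph{all} descendants regardless of stability. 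The two arguments are equivalent; yours is more explicit about the recursion, while the paper's direct count is terser.
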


\begin{proof} By the Payoff Theorem, the  $(R,x)$ account is used at most
  once.  Let $A$ be the invocation of {\sc Process} that withdraws the
  payoff from that account.  Each dollar of process debt paid off by
  $A$ was sent back to $A$ from some descendant invocation who
  inherited or incurred that dollar of debt. Since only height-0
  invocations incur debt, to account for the
  total amount of process debt paid off by $A$ we consider each of
  its descendant invocations.  By Lemma~\ref{lem:beta}, $A$ has
  $\beta_{ij}$ descendants of height~$j$, each of which inherited a debt of at
most $4 \log r_{j+1}$ dollars, by  Lemma~\ref{lem:debt-inherited}.
In addition, each of the $\beta_{i0}$ descendant height-0 invocations 
incurs a debt of at most $c_3\log^{c_q} r_1$. \qed
\end{proof}

Recall that $c_1, c_2$ are the constants such that an $r$-division
of an $n$-vertex graph has at most $c_1 n/r$ regions, each having at most
$c_2 \sqrt{r}$ boundary vertices.

\begin{lemma} \label{lem:number-of-region-vertex-pairs}
Let $n$ be the number of edges of the input graph.
For any nonnegative integer $i$, there are at most $c_1 c_2
n/\sqrt{r_i}$ pairs $(R,x)$ where $R$ is a height-$i$ region and $x$
is an entry vertex of $R$.
\end{lemma}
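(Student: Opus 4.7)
The plan is a direct counting argument that multiplies the number of height-$i$ regions by the maximum number of entry vertices per such region. Nothing more subtle seems to be needed, so I expect no real obstacle; the work is just in correctly invoking the two structural bounds of an $r_i$-division together with the definition of entry vertex.

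First I would recall that, by the specification of a recursive $\vec{r}$-division, the set of height-$i$ regions forms an $r_i$-division of $G$. By the choice of the constants $c_1, c_2$ stated just before the lemma (and the standing bounded-degree assumption on $G$, which makes the number of edges linear in the number of vertices, so the same $c_1, c_2$ are valid whether one measures the input size in edges or in vertices), there are at most $c_1 n / r_i$ height-$i$ regions, and each such region has at most $c_2 \sqrt{r_i}$ boundary vertices.

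Next I would invoke the definition of entry vertex given in the analysis: for a region $R \neq R_G$, a vertex $x$ is an entry vertex of $R$ exactly when $x$ is a boundary vertex of $R$; for the top region $R_G$ the unique entry vertex is $s$. Hence every height-$i$ region has at most $c_2 \sqrt{r_i}$ entry vertices. (The case $R = R_G$, which could in principle appear at the maximum height, contributes only a single entry vertex and is therefore absorbed into this bound.)

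Finally I would just multiply: the number of (region, entry vertex) pairs with $R$ of height $i$ is at most
\[
\bigl(c_1 n / r_i\bigr) \cdot \bigl(c_2 \sqrt{r_i}\bigr) \;=\; c_1 c_2 \, n / \sqrt{r_i},
\]
which is the claimed bound. Since both ingredients (the number-of-regions bound and the boundary-vertex bound) are stated as hypotheses of the lemma's preamble, the proof reduces to this one-line calculation.
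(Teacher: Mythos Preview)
Your proof is correct and is exactly the direct counting argument the paper has in mind; in fact the paper states this lemma without proof, treating it as an immediate consequence of the defining constants $c_1,c_2$ of an $r_i$-division together with the definition of entry vertex. One small caveat: the lemma is phrased for ``any nonnegative integer $i$'' but the paper never defines $r_0$, and height-$0$ regions here are the special arc/hyperarc regions rather than members of an $r_0$-division, so your argument (and the lemma itself) really applies for $i\ge 1$; the $i=0$ case is handled separately in the paper via Lemma~\ref{lem:num-0-regions}.
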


Combining this lemma with Lemma~\ref{lem:process-debt}, we obtain

\begin{corollary} \label{cor:total-process-debt}
The total process debt is at most 
$$c_1 c_2 \sum_i \frac{n}{\sqrt{r_i}} \left( c_3\beta_{i0}\log^{c_q} r_1
  + \sum_{j \leq i} 4\beta_{ij} \log r_{j+1}\right)
$$
\end{corollary}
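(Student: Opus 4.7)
The plan is to combine the two preceding lemmas with the Payoff Theorem in a single bookkeeping step. Each dollar of process debt is, by construction, eventually paid off from some account $(R,x)$, because root-level invocations on $R_G$ are stable (there are no foreign intrusions of $R_G$) and therefore cannot return any unpaid debt upwards. Hence the total process debt equals the sum, over all region/entry-vertex pairs $(R,x)$ that are ever used, of the amount withdrawn from that account.

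Next I would invoke the Payoff Theorem (Theorem~\ref{thm:Henzinger-payoff}) to justify that each pair $(R,x)$ is used at most once, so we may safely upper bound the total by summing the per-account upper bound from Lemma~\ref{lem:process-debt} over \emph{all} region/entry-vertex pairs, whether used or not. I would then group this sum by the height $i$ of the region $R$. For each fixed $i$, Lemma~\ref{lem:number-of-region-vertex-pairs} contributes a multiplicative factor of at most $c_1 c_2 n/\sqrt{r_i}$ pairs, while Lemma~\ref{lem:process-debt} contributes a per-pair bound of $c_3 \beta_{i0} \log^{c_q} r_1 + \sum_{j \leq i} 4 \beta_{ij} \log r_{j+1}$. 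Multiplying these two bounds and summing over $i$ yields exactly the displayed expression in the statement.

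There is essentially no obstacle: the corollary is a direct arithmetic consequence of the two named lemmas together with the Payoff Theorem. The only subtle point worth making explicit is the reduction from ``pairs that are used'' to ``all pairs'', which is legitimate because every per-account bound in Lemma~\ref{lem:process-debt} is nonnegative, so adding the contributions of unused accounts only increases the sum. Once that is noted, the proof is a one-line calculation.
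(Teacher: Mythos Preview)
Your proposal is correct and matches the paper's approach, which simply states that the corollary follows by combining Lemma~\ref{lem:process-debt} with Lemma~\ref{lem:number-of-region-vertex-pairs}. Your explicit invocation of the Payoff Theorem is harmless but redundant, since Lemma~\ref{lem:process-debt} already uses it internally to bound the amount withdrawn from each account; once that per-account bound is in hand, summing over the at most $c_1 c_2 n/\sqrt{r_i}$ accounts at each height $i$ gives the displayed expression directly.
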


The cost incurred by the algorithm by a step of {\sc Update} is called
{\em update debt}.  The event (in Lines~\ref{line:minchild-relax},~\ref{line:v-relax}, and~\ref{line:minchild_a} of {\sc Process}) of reducing
a vertex $x$'s label $d(x)$ initiates a chain of calls to {\sc Update}
in Lines~\ref{line:minchild-update}, \ref{line:update},
and~\ref{line:minchild_b}, respectively, for each outgoing arc $xy$.  We say the debt incurred is
{\em on behalf of $x$}.

\begin{lemma} \label{lem:update-debt}
If {\sc Update} is called on the parent of region $R$ during an invocation $A$
of {\sc Process} then $R$ is not the region of $A$. 
\end{lemma}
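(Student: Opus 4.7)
The plan is to show that during invocation $A$ on region $R_A$, no $\decreaseKey$ on $Q(R_A)$ ever reduces $\minKey(Q(R_A))$. Since the recursive call {\sc Update}$(\parent(R_A), R_A, k, v)$ in Line~\ref{line:upd1} of {\sc Update} fires only when such a reduction occurs, this will immediately imply the lemma.

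I would carry this out by a simultaneous induction on the execution timeline, maintaining two invariants. Invariant (a): for every invocation $B$ currently on the call stack, $\minKey(Q(R_B)) \geq \pstart(B)$. Invariant (b): the values of $\pstart$ are non-decreasing down the call stack. Invariant (b) follows immediately from (a), since whenever a child invocation $B'$ of $B$ is launched via $\minItem(Q(R_B))$, we have $\pstart(B')$ equal to the current $\minKey(Q(R_B))$, which by (a) is at least $\pstart(B)$.

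The heart of the argument is preserving (a). The only way $\minKey(Q(R_B))$ can drop is through a $\decreaseKey$ on $Q(R_B)$ originating from an {\sc Update} chain launched by some height-$0$ invocation $B_0$ in $B$'s subtree. Examining the height-$0$ branches of {\sc Process} shows that the key $k$ passed to {\sc Update} is always either the current label $d(v)$ of the tail of the processed element, or $d(v)+\ell(vw)$ for some DDG arc $vw$. Because the key of a height-$0$ region in its parent heap equals the current label of its tail, and $B_0$ was chosen as $\minItem$, we have $d(v) = \pstart(B_0)$ at the time of each {\sc Update} call (the reassignment on Line~\ref{line:v-relax} replaces $d(v)$ by $d(v_h)$, which likewise equals $\pstart(B_0)$). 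Combined with non-negativity of $\ell(vw)$, this gives $k \geq \pstart(B_0)$. Invariant (b) then cascades to yield $\pstart(B_0) \geq \pstart(B'_B) = \minKey(Q(R_B))$ at the start of $B$'s current iteration, where $B'_B$ is $B$'s child in that iteration. Moreover, $\minKey(Q(R_B))$ has not moved since the iteration began: the only $\increaseKey$ on $Q(R_B)$, on Line~\ref{line:updateR'}, fires strictly between iterations, while earlier $\decreaseKey$ events in this iteration did not reduce $\minKey(Q(R_B))$, by the inductive hypothesis applied to strictly earlier time steps. Hence $k \geq \minKey(Q(R_B))$ at the moment of the $\decreaseKey$, so $\minKey(Q(R_B))$ is preserved.

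The main obstacle is the bookkeeping that establishes $d(v) = \pstart(B_0)$ at every {\sc Update} call site. This requires a careful case analysis of the height-$0$ branches of {\sc Process}, matching each {\sc Update} call to the correct tail label and using the invariant that the key of a height-$0$ region in its parent heap tracks the current label of its tail. Once this is in place, invariants (a) and (b) close the induction, and applying (a) to $A$ itself yields the lemma.
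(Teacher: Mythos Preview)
Your approach is correct and is essentially the standard argument for this fact in the HKRS framework. Note, however, that the paper does not actually prove Lemma~\ref{lem:update-debt}: it is one of the statements imported verbatim from the analysis in~\cite{HKRS97,Book} (the paper explicitly says that parts of the analysis ``remain valid without any change'' and that it includes ``statement of lemmas from~\cite{Book}''), so there is no paper proof to compare against.

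One small point worth tightening in your case analysis: for the {\sc Update} call on Line~\ref{line:minchild_b}, the key $k=d(w_h)$ is set in Line~\ref{line:minchild_a} using the parent $u$ of $v_h$ in the Monge heap, not directly from the tail $v_h$ of the processed arc. Your blanket claim that $k$ is always of the form $d(v)$ or $d(v)+\ell(vw)$ with $v$ the tail does not quite cover this case. The inequality $d(w_h)\geq d(v_h)=\pstart(B_0)$ does hold, but it relies on the correctness invariants (Invariant~2 ensures an accurate minimum child was already represented in $Q(R_1)$, so $\minKey(Q(R_1))=d(v_h)\leq d(w_h)$); this is exactly the observation the paper makes in the proof of Lemma~\ref{cor:cost-per-update-chain}. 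Once you plug that in, your induction closes as written.
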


Recall that for a vertex $v$, we defined 
$$\text{height}(v) = \max\set{j\, : \, v \text{ is a boundary vertex of a height-$j$ region}}$$

\begin{lemma} \label{cor:cost-per-update-chain}
A chain of calls to {\sc Update} initiated by the
reduction of the label of a natural vertex $v$ has total cost $O(1+ \text{height}(v))$.

A chain of calls to {\sc Update} initiated by the
reduction of the label of a copy vertex $w_h$ has total cost $O(1)$.
\end{lemma}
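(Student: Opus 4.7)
The plan is to reduce the total cost of an update chain to its length (each call to {\sc Update} contributes $O(1)$ amortized time, since $\decreaseKey$ on a Fibonacci heap is $O(1)$ amortized) and then to bound its length directly from Lemma~\ref{lem:update-debt}. The key observation I will exploit is that whenever {\sc Process} relaxes an arc on one of lines~\ref{line:minchild-relax},~\ref{line:v-relax}, or~\ref{line:minchild_a}, the dynamic call stack contains one nested {\sc Process} invocation per height. Writing $R_i^*$ for the region of the height-$i$ invocation currently on the stack, every $R_i^*$ contains the vertex whose label was just reduced, because the innermost invocation processes a height-0 region incident to that vertex. By Lemma~\ref{lem:update-debt}, no {\sc Update} call on $\parent(R_i^*)$ ever occurs during the enclosing invocation whose region is $R_i^*$, so once the chain's sequence of ancestor regions $R_0^e \subsetneq R_1^e \subsetneq \ldots$ coincides with some $R_i^*$, the chain must terminate at $R_i^e$.

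For a natural vertex $v$, a chain initiated on line~\ref{line:update} from a hyperarc $e$ with tail $v$ traces the ancestor sequence $R_0^e = R(e),\, R_1^e,\, R_2^e, \ldots$, every term of which contains $v$ (since $e$ does). Every $R_i^*$ also contains $v$. By the definition of $\heit(v)$, the vertex $v$ is not a boundary vertex of any height-$i$ region for $i > \heit(v)$, so for every such $i$ there is a unique height-$i$ region containing $v$, forcing $R_i^e = R_i^*$. Hence the chain is blocked no later than $R_{\heit(v)+1}^e = R_{\heit(v)+1}^*$, giving length at most $\heit(v)+2$ and total cost $O(1+\heit(v))$.

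For a copy vertex $w_h$, I would invoke Observation~\ref{obs:height0} to conclude $\heit(w_h) = 0$, so the unique height-1 region containing $w_h$ is the region housing the Monge heap $h$ from which the update was generated, which is exactly the height-1 region $R_1^*$ on the current call stack. The chain, initiated on line~\ref{line:minchild-update} or~\ref{line:minchild_b} at the height-0 region $R'$ of the arc $w_h w$, has $\parent(R') = R_1^* = R_1^e$. Lemma~\ref{lem:update-debt} applied to the enclosing invocation on $R_1^*$ then prevents the chain from proceeding past $R_1^e$, so its length is at most $2$ and its cost is $O(1)$.

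The main obstacle I anticipate is justifying the identification $R_i^e = R_i^*$ cleanly: the argument relies on the observation that both ancestor sequences consist of regions containing the relaxed vertex, together with the structural property that above its height a vertex belongs to exactly one region at each level of the recursive $\vec r$-division. Once this geometric identification is in place, the rest is immediate from Lemma~\ref{lem:update-debt} and the amortized bound on $\decreaseKey$.
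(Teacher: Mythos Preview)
Your proof is correct and follows essentially the same route as the paper: each call to {\sc Update} costs $O(1)$ via Fibonacci heaps, and the chain length is bounded using Lemma~\ref{lem:update-debt} together with the fact that both the call-stack ancestor regions $R_i^*$ and the {\sc Update} ancestor regions $R_i^e$ contain the relaxed vertex, forcing them to coincide once $i$ exceeds that vertex's height. The paper phrases the natural-vertex case contrapositively (if the chain reaches level $p$, then $R_{p-1}\neq R_{p-1}^*$ are two height-$(p{-}1)$ regions both containing $v$, so $v$ is a boundary vertex there), but this is exactly your argument read backwards. For the copy-vertex case the paper instead argues directly from key values---since $R$ was chosen as $\minItem(Q(R_1))$ and $d(w_h)\ge d(v)$, the call {\sc Update}$(R',w_hw,d(w_h))$ cannot reduce $\minKey(Q(R_1))$---whereas you uniformly reuse the height argument via Observation~\ref{obs:height0}; both are valid and yours is arguably cleaner. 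One small imprecision worth tightening: your blanket claim that ``every $R_i^*$ contains the vertex whose label was just reduced'' fails at $i=0$ for Line~\ref{line:minchild_a}, since $R_0^*=\{v_hv\}$ need not contain $w_h$; however your actual argument for the copy-vertex case only uses $R_1^*\ni w_h$, which does hold because $w_h$ lies in the Monge heap housed by that height-1 region.
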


\begin{proof}
 
Let $A_0$ be the invocation of Process during which the
initial call to {\sc Update} was made, and let $R$ be the (height-0)
region of $A_0$.  

For the first case, the call to {\sc Update} is in
line~\ref{line:update}.
In this case, $R$ corresponds to a single arc $v_hv$.
Consider the chain of calls to {\sc Update}, and let
$R_0, R_1, \ldots, R_p$
be the corresponding regions. 
$R_0$ corresponds to a single hyperarc $e$ whose tail is $v$.
Note that $\text{height}(R_j)=j$.  The cost
of each call is $O(1)$ since we use Fibonacci heaps.  Since
$R_{p-1}$ contains $e$ but (by Lemma~\ref{lem:update-debt}) does not
contain $v_hv$, we get that $v$ is a boundary vertex of $R_{p-1}$, so $p \leq
\text{height}(v)+1$. 

For the second case, the call is {\sc Update}$(R',w_hw,d(w_h))$, either in 
Line~\ref{line:minchild-update}, or in Line~\ref{line:minchild_b}.
In the former case $R$ corresponds to a single hyperarc incident to
$w_h$. 
In the latter case $R$ corresponds to a single arc $v_hv$.
In both cases $\parent(R)$ and $\parent(R')$ is the same height-1
region $R_1$, and the key of $R'$ is at least the key of $R$. Since at
the time the call to {\sc Update} is made $\minKey(Q(R_1)) = R$, 
the call {\sc Update}$(R',w_hw,d(w_h))$ does not decrease
$\minKey(Q(R_1))$, so this chain of calls to {\sc Update} consists of
exactly two calls. Since we use Fibonacci heaps, each call costs $O(1)$.
\qed
\end{proof}

\begin{lemma} \label{lem:update-debt-bound}
Let $i, j$ be nonnegative integers, and let $R$ be a region of height $i$.
The total amount of 
update debt incurred on behalf of vertices of height at most $j$  and 
paid off from accounts $\set{(R,x)\, :\, x \text{ is an entry vertex of }R}$ is at
most
$$3c_2 \sqrt{r_i} \beta_{i0} (3c_5\log r_1 + j).$$
\end{lemma}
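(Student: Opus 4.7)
The plan is to combine the Payoff Theorem (Theorem~\ref{thm:Henzinger-payoff}) with a careful per-descendant accounting of the update debt generated at height-$0$ invocations. Since $R$ has at most $c_2\sqrt{r_i}$ boundary vertices, and each account $(R,x)$ is withdrawn from at most once, the quantity to bound is at most $c_2\sqrt{r_i}$ times the maximum update debt (on behalf of vertices of height $\le j$) collected by any single invocation of \textsc{Process} on $R$. By Lemma~\ref{lem:beta}, any such invocation has at most $\beta_{i0}$ descendant height-$0$ invocations, and since update debt is incurred only at height-$0$ invocations, it suffices to bound the update debt on behalf of vertices of height $\le j$ generated per height-$0$ descendant by $O(c_5\log r_1 + j)$. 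Each such descendant processes either a single hyperarc, in which case only the copy vertex $w_h$ has its label reduced (generating an Update chain of cost $O(1)$ by Lemma~\ref{cor:cost-per-update-chain}), or a single arc $v_hv$, in which case the natural vertex $v$ may have its label reduced---causing the loop at Line~\ref{line:update} to initiate one Update chain per outgoing hyperarc of $v$---and the copy vertex chosen by \textsc{FR-Extract} also has its label reduced (generating another $O(1)$-cost chain via Line~\ref{line:minchild_b}).

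The main obstacle is bounding the collective cost of the Update chains issued by the loop at Line~\ref{line:update} for a single label reduction of a natural vertex $v$. A naive bound multiplies the at most $c_5\log r_1$ hyperarcs with tail $v$ by the per-chain cost $O(1+\text{height}(v))$ from Lemma~\ref{cor:cost-per-update-chain}, yielding $O(c_5\log r_1\cdot(1+\text{height}(v)))$; this is too loose by a factor of roughly $\log r_1$. To eliminate this factor, I would observe that these chains are issued consecutively in the same loop iteration and mutually short-circuit: once the first chain originating from some hyperarc $e$ with $\parent(R(e))=R_1$ (a height-$1$ region) has propagated upward, $\minKey(Q(R_1))$ is at most $d(v)$, so any later chain originating from another hyperarc $e'$ in the same height-$1$ parent $R_1$ terminates after $O(1)$ operations---its $\decreaseKey$ at $R_1$ cannot further reduce $\minKey(Q(R_1))$. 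By the degree-$3$ assumption on $G$ and the property of the $\rvec$-division that each vertex belongs to at most $3$ height-$1$ regions, the outgoing hyperarcs of $v$ are partitioned among at most $3$ distinct height-$1$ parent regions; hence at most $3$ of the chains are ``long'' (each contributing $O(1+\text{height}(v))$ cost), while the remaining chains are ``short'' (cost $O(1)$ each), for a total of $O(c_5\log r_1 + \text{height}(v))$ per label reduction of $v$.

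Putting the pieces together: each height-$0$ descendant contributes at most $O(c_5\log r_1 + j)$ update debt on behalf of vertices of height $\le j$ (accounting both for the copy-vertex chain, which is always counted since copy vertices have height $0$ by Observation~\ref{obs:height0}, and for the natural-vertex chains when $\text{height}(v)\le j$). Summing over the $\beta_{i0}$ descendants of a single invocation and then over the at most $c_2\sqrt{r_i}$ paying invocations on $R$ yields total update debt $O(c_2\sqrt{r_i}\,\beta_{i0}(c_5\log r_1+j))$, matching the claimed bound $3c_2\sqrt{r_i}\,\beta_{i0}(3c_5\log r_1+j)$ up to constants absorbed into the factors of $3$.
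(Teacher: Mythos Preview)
Your proposal is correct and follows essentially the same approach as the paper's proof: combine the Payoff Theorem with the $c_2\sqrt{r_i}$ bound on entry vertices of $R$, use Lemma~\ref{lem:beta} to bound the number of height-$0$ descendants by $\beta_{i0}$, and then case-split on whether each height-$0$ region is a hyperarc or an arc $v_hv$. Your key ``mutual short-circuit'' observation---that among all the \textsc{Update} chains issued in the loop at Line~\ref{line:update} for a given height-$1$ parent $R_1$, only the first can propagate above $R_1$ because subsequent ones carry the same key $d(v)$ and hence cannot further reduce $\minKey(Q(R_1))$---is precisely the argument the paper uses, as is the bound of at most $3$ height-$1$ regions containing $v$ to cap the number of long chains.
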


\begin{proof}  The number of entry vertices $x$ of $R$ is at most $c_2
  \sqrt{r_i}$.  For each, the Payoff Theorem ensures that all the debt
  paid off from account $(R,x)$ comes from descendants of a single
  invocation $A$ of {\sc Process}.  The number of height-0 descendants
  of $A$ is $\beta_{i0}$.  
Consider such a height-0 descendant $A_0$ whose region is $R_0$.
If $R_0$ consists of a single hyperarc, then the height of the vertex $w_h$ on whose
behalf the call to {\sc Update} is made in
Line~\ref{line:minchild-update} is 0 (see Observation~\ref{obs:height0}). 
By Lemma~\ref{cor:cost-per-update-chain} the cost of this chain of
calls is at most $2$.
If $R_0$ consists of a single arc $v_hv$, then $v$ has height at least
1. Suppose the height of $v$
is at most $j$. In each of the three height-1 regions to which $v$ belongs, it
participates in at most $c_5\log r_1$ Monge heaps. Hence, in each of these
regions there are at most
$c_5\log r_1$ hyperarcs whose tail is $v$ for which a call to {\sc
  Update} is made in Line~\ref{line:update}.  Since all of these calls
are made with the same key, exactly one of them may cause a chain of
update whose length is at most $j$. By Lemma~\ref{cor:cost-per-update-chain} the cost
  of such a chain is at most $j+2$. The other calls only cause a chain of
updates whose length is 2. Hence, all calls made in
Line~\ref{line:update} on behalf of $v$ cost at most $3(2c_5\log r_1 +
j+2)$.
There is an additional call to {\sc Update} in
Line~\ref{line:minchild_b}. As before, since this call is made on
behalf of a height-0 vertex, its cost is at most 2.

We have therefore established that the total amount of 
update debt incurred on behalf of vertices of height at most $j$  and 
paid off from accounts $\set{(R,x)\, :\, x \text{ an entry vertex of }R}$ is at
most $c_2 \sqrt{r_i} \beta_{i0}(2 + 3(2c_5\log r_1 + j + 2))$, which is
at most $3c_2 \sqrt{r_i} \beta_{i0} (3c_5\log r_1 + j)$. \qed
\end{proof}

\begin{lemma} \label{lem:total-update-debt}
The total update debt is at most
\begin{eqnarray}
2c_1c_2c_5 n r_1^{-1/2}\log r_1 & + & \\
9c_1c_2c_5n \sum_{i>0}  r_i^{-1/2} \beta_{i0} \log r_1 & +& \\
3c_1c_2n \sum_{i>0}  r_i^{-1/2} \beta_{i0} i & +& \\
27c_1c_2^2c_5n \sum_{j\geq 2}  r_j^{-1/2}  \sum_{1\leq i<j}
r_i^{1/2}\beta_{i0} \log r_1 & +& \\ 
9c_1c_2^2n \sum_{j\geq 2} r_j^{-1/2} \sum_{1\leq i<j}  r_i^{1/2}
\beta_{i0}  j & + &\\
27c_1c_2^2c_5^2n \sum_{j\geq 1}  r_j^{-1/2} \log^2 r_1 & + &\\
9c_1c_2^2c_5n \sum_{j\geq 1} r_j^{-1/2} j \log r_1 
\end{eqnarray}
\end{lemma}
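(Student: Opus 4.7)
The plan is to bound the total update debt by stratifying it along two axes: the height $j$ of the vertex whose label reduction initiates the chain of updates, and the height $i$ of the region from whose account the debt is ultimately paid. The key building blocks are Lemma~\ref{lem:update-debt-bound} (aggregate update-debt per region), Lemma~\ref{lem:number-of-region-vertex-pairs} (counting region/entry-vertex pairs), Corollary~\ref{cor:cost-per-update-chain} (per-chain cost), and the Payoff Theorem (Theorem~\ref{thm:Henzinger-payoff}, which guarantees that each account is drawn on at most once).

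I would first isolate the contribution of update chains initiated by copy vertices. By Observation~\ref{obs:height0} and Corollary~\ref{cor:cost-per-update-chain} each such chain costs $O(1)$, and every processing of a height-0 region launches at most two such chains (in Lines~\ref{line:minchild-update} and \ref{line:minchild_b} of {\sc Process}); combining this with Lemma~\ref{lem:num-0-regions} yields line (1), with constant $2c_1c_2c_5$. Next, for chains initiated by natural vertices $v$ of height $j$ paid from accounts $(R,x)$ of height $i\ge 1$ with $j\le i$, I would apply Lemma~\ref{lem:update-debt-bound} with $j$ set to $i$ (since by Lemma~\ref{lem:update-debt} the chain is confined to the subtree of the paying invocation and therefore only ``sees'' vertices up to height $i$), giving a per-region bound of $3c_2\sqrt{r_i}\beta_{i0}(3c_5\log r_1 + i)$. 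Multiplying by the $c_1n/r_i$ height-$i$ regions counted by Lemma~\ref{lem:number-of-region-vertex-pairs} and splitting the parenthesized sum along its two summands yields lines (2) and (3) after summing over $i\ge 1$.

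The main obstacle, and the source of lines (4)--(7), is the contribution on behalf of natural vertices $v$ whose height $j$ strictly exceeds the height $i$ of the region whose account pays. Fix such $j\ge 2$. A vertex of height $j$ belongs as a boundary vertex to only $O(1)$ subregions of each lower height $i<j$ (using the degree-three assumption on $G$ and the structure of the $\vec{r}$-division), and each invocation on such a subregion has $O(\beta_{i0})$ height-0 descendants that can initiate a chain touching $v$. Combined with the $O(\sqrt{r_i})$ entry vertices per height-$i$ region and the per-chain cost $O(c_5\log r_1 + j)$ (where $c_5\log r_1$ counts the hyperarcs incident to $v$ within a single height-1 region and $j$ bounds the chain length via Corollary~\ref{cor:cost-per-update-chain}), and multiplied by the $c_1c_2n/\sqrt{r_j}$ pairs $(R,x)$ at height $j$ (Lemma~\ref{lem:number-of-region-vertex-pairs}), this produces the double-sum structure of lines (4) and (5). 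Lines (6) and (7) arise analogously from the special case $i=j$, where the $r_i^{1/2}$ factor collapses and the remaining factor of $O(c_5\log r_1)$ hyperarcs per vertex yields the $\log^2 r_1$ and $j\log r_1$ terms.

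The delicate part I expect to require the most care is precisely this height-mismatch accounting, because a single update chain that crosses several height levels contributes cost to several different accounts, and we must avoid either double-counting the chain or missing its contribution entirely. The combination of the Payoff Theorem (each account $(R,x)$ is charged once) with Lemma~\ref{lem:update-debt} (chains are confined to within the currently executing invocation), together with the $O(1)$ per-height multiplicity of each vertex as a boundary vertex, is what permits the charging to be localized and the sum to telescope cleanly into the seven stated terms.
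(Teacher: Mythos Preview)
Your overall stratification by $(i,j)$ and the use of Lemma~\ref{lem:update-debt-bound} match the paper's approach, and your derivation of lines~(2)--(5) is essentially correct. However, there are two genuine gaps.

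First, your attribution of lines~(6) and~(7) to ``the special case $i=j$'' is wrong and, taken literally, leaves a case uncovered. The condition $i=j$ lies in the type-1 regime ($i\ge j$), which you already handled in lines~(2) and~(3); moreover the $\beta_{j0}$ factor that would appear for $i=j$ does not match those terms. Lines~(6) and~(7) actually arise from the type-2 case $i=0$ (natural vertices of height $j\ge1$ whose debt is paid from \emph{height-$0$} accounts). Here $\beta_{00}=1$ and each height-$0$ region has a single entry vertex, so the $c_2\sqrt{r_i}$ factor in Lemma~\ref{lem:update-debt-bound} drops out; what remains is the count of height-$0$ regions containing a given height-$j$ vertex, namely at most $3c_5\log r_1$. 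Multiplying by the $c_1c_2 n r_j^{-1/2}$ height-$j$ vertices and by the per-region bound $3c_2(3c_5\log r_1+j)$ gives precisely lines~(6) and~(7). Without this case your decomposition misses all type-2 debt paid from height-$0$ accounts.

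Second, your argument for line~(1) conflates the number of height-$0$ \emph{regions} (what Lemma~\ref{lem:num-0-regions} counts) with the number of height-$0$ \emph{processings} (which is much larger, since each region is processed many times). Line~(1) does not bound all copy-vertex update debt; it bounds only the portion paid from height-$0$ accounts. The correct argument invokes the Payoff Theorem: each of the $c_1c_2c_5 n r_1^{-1/2}\log r_1$ height-$0$ accounts is charged by at most one stable invocation, and that single invocation launches at most one copy-vertex chain (either Line~\ref{line:minchild-update} or Line~\ref{line:minchild_b}, never both) of cost at most~$2$. Copy-vertex debt paid from height-$i$ accounts with $i\ge1$ is already absorbed into lines~(2) and~(3), since Lemma~\ref{lem:update-debt-bound} with $j=i$ bounds debt on behalf of \emph{all} vertices of height at most $i$, not just natural ones.
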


\begin{proof}
To each unit of update debt, we associate two integers:
  $i$ is the height of the region $R$ such that the debt is paid off
  from an account $(R,x)$, and $j$ is the height of the vertex $v$ on
  whose behalf the debt was incurred.  If $i \geq j$, we refer to the
  debt as {\em type~1} debt, and if $i < j$, we refer to it as {\em
    type~2} debt.

First we bound the type-1 debt.  For each integer~$i>0$, there are at
most $c_1\frac{n}{r_i}$ regions $R$ of height~$i$.
By  Lemma~\ref{lem:update-debt-bound}, these contribute to the total type-1 debt
$\sum_{i>0} c_1 \frac{n}{r_i} 3c_2 \sqrt{r_i} \beta_{i0} (3c_5\log r_1 + i)$.

By Lemma~\ref{lem:num-0-regions}, there number of height~0 regions is 
$c_1c_2c_5\frac{n}{\sqrt r_1}\log r_1$, each with a single entry
vertex. An invocation on such a region makes at most one call to {\sc
  Update} that is on behalf of a height-0 vertex. By
Lemma~\ref{cor:cost-per-update-chain} the cost of such a call is
at most 2. Therefore, the contribution of height-0 regions to type-1
debt is   $2c_1c_2c_5\frac{n}{\sqrt r_1}\log r_1$.
We get that the total type-1 debt is at most
$$
2c_1c_2c_5\frac{n}{\sqrt r_1}\log r_1 + 
\sum_{i>0} 3c_1c_2 \frac{n}{\sqrt r_i} \beta_{i0} (3c_5\log r_1 + i).
$$

Now we bound the type-2 debt (i.e., $j>i$). 
For each integer~$j$ (since $j>i$, $j$ must be at least 1), the number of
regions of height~$j$ is at most $c_1 \frac{n}{r_j}$ and each has at
most $c_2 \sqrt{r_j}$ entry vertices, so the total number of vertices
of height~$j$ is at most $c_1 c_2 \frac{n}{\sqrt{r_j}}$. Consider each
such vertex $v$. 
We handle separately the
cases $i>0$ and $i=0$.
\begin{itemize}
\item For any $i>0$, every vertex (and in particular $v$) belongs to at most 3
  height-$i$ regions. 
Hence, by Lemma~\ref{lem:update-debt-bound}, the total
  type-2 debt for $i\geq 1$ is
$$\sum_{j\geq 2} c_1c_2 \frac{n}{\sqrt{r_j}} 3 \sum_{1\leq i<j}  3c_2 \sqrt{r_i} \beta_{i0} (3c_5\log r_1 + j).$$
\item We next consider the case $i=0$. For $j \geq 1$, there are $c_1c_2\frac{n}{\sqrt r_j}$ vertices
of height $j$. Each such vertex $v$  belongs to at most 3 height-1
regions, and hence to at most $3c_5 \log r_1$ height-0 regions.
Hence, by Lemma~\ref{lem:update-debt-bound}, the total type-2 debt for $i=0$ is 
$$ \sum_{j\geq 1} c_1c_2\frac{n}{\sqrt r_j} 3c_5 \log r_1 3c_2
(3c_5\log r_1 +j) = \\ 
\sum_{j\geq 1} 9c_1c_2^2c_5 \frac{n}{\sqrt r_j} \log r_1 
(3c_5\log r_1 +j) 
$$
\end{itemize}

The total update debt is thus as claimed in the lemma's statement. \qed
\end{proof} 

We are now prepared to bound the total running time of our algorithm to $O((n/\sqrt r)
\log^2 r)$. The following two lemmas bound the total process debt to 
$O(n/\sqrt r) \log^{c_q+1} r)$ and the total update debt to $O(n/\sqrt r) \log^{2} r)$. Using the naive RMQ data structure with $c_q=1$ we obtain the claimed running time.  

\begin{lemma}\label{lem:tot-process}
The total process debt is $O(\frac{n}{\sqrt r}
\log^{c_q+1} r)$.

\end{lemma}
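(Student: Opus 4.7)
The plan is to start from Corollary~\ref{cor:total-process-debt}, which already gives an explicit expression for the total process debt, and bound the two terms inside the parentheses separately, exploiting the doubly-exponential growth of the $r_i$'s to beat the (merely) singly-exponential growth of the $\beta$'s. The key identities I will use repeatedly are that, since $r_i = r_{i-1}^2$, we have $\log r_{i+1} = 2^i\log r_1$ and $\sqrt{r_i} = r_1^{2^{i-2}}$ for $i\ge 1$; together with the closed form $\beta_{ij}=(4/3)^{i-j}\,\log r_{i+1}/\log r_{j+1}$, these give in particular $\beta_{i0}=(8/3)^i$.

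First I will treat the $i=0$ term. The bound $c_1 c_2 n/\sqrt{r_i}$ on the number of region-entry pairs is not meaningful at $i=0$, so I will replace it by the bound of Lemma~\ref{lem:num-0-regions}, namely $O((n/\sqrt{r_1})\log r_1)$ height-0 regions, each contributing $O(\log^{c_q} r_1 + \log r_1) = O(\log^{c_q} r_1)$ process debt. This already gives the target $O((n/\sqrt{r})\log^{c_q+1}r)$, so the remaining task is to show that the $i\ge 1$ tail does not exceed this.

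For $i\ge 1$, I will bound the two inner terms of the corollary. The first is $c_3\beta_{i0}\log^{c_q}r_1/\sqrt{r_i} = c_3 (8/3)^i \, r_1^{-2^{i-2}} \log^{c_q} r_1$; summed over $i\ge 1$ this is a geometric-beats-double-exponential series whose total is $O(r_1^{-1/2}\log^{c_q}r_1)$, because the $i=1$ term is $O(r_1^{-1/2}\log^{c_q}r_1)$ and each subsequent term drops by a factor that is eventually doubly exponential in $i$. For the second term, I compute
\begin{equation*}
\sum_{j\le i} 4\beta_{ij}\log r_{j+1} \;=\; 4\cdot 2^i \log r_1 \sum_{j=0}^{i} (4/3)^{i-j} \;=\; O\!\bigl((8/3)^i \log r_1\bigr),
\end{equation*}
using $\beta_{ij}\log r_{j+1}=(4/3)^{i-j}\log r_{i+1}$. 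Dividing by $\sqrt{r_i}=r_1^{2^{i-2}}$ and summing over $i\ge 1$ yields the same kind of geometric-beats-double-exponential sum, bounded by $O(r_1^{-1/2}\log r_1)$.

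Adding the $i=0$ contribution and the two geometric tails gives a total process debt of $O((n/\sqrt{r})(\log^{c_q+1}r + \log^2 r))$, and since $c_q\ge 1$ the first summand dominates, yielding the claimed $O((n/\sqrt{r})\log^{c_q+1}r)$. The only real subtlety I anticipate is bookkeeping the $i=0$ case correctly (which uses a different counting lemma than the $i\ge 1$ cases) and verifying that the series in $i\ge 1$ is in fact dominated by its first term; this is routine once one writes $\sqrt{r_i}=r_1^{2^{i-2}}$ and observes that $r_1\ge 2$ forces $(8/3)^i/r_1^{2^{i-2}}$ to decrease rapidly for $i\ge 2$.
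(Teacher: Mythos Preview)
Your proposal is correct and follows essentially the same approach as the paper: both start from Corollary~\ref{cor:total-process-debt}, use the closed form $\beta_{ij}\log r_{j+1}=(4/3)^{i-j}\log r_{i+1}$ to collapse the inner sum to $O((8/3)^i\log r_1)$ (the paper phrases this as $(4/3)^i\log r_{i+1}\le 4\log^2 r_i$), and then exploit the doubly-exponential growth of $\sqrt{r_i}$ to show the outer sum is dominated by its leading term. Your explicit treatment of the $i=0$ case via Lemma~\ref{lem:num-0-regions} is a welcome clarification that the paper glosses over by silently restricting the final sum to $i>0$.
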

\begin{proof}

Recall that $r_{i+1} = r_i^2$, so $r_i  = r_1^{2^{i-1}}$ and $\log
r_{i+1} = 2 \log r_i$. 
Assuming $r_1 \geq 2$, we have $\log r_{i+1}  > 2^{i} >
\left(\frac{4}{3}\right)^{i}$.
It follows that for any $i$
$$\left(\frac{4}{3}\right)^i \log r_{i+1} \leq 4 \log^2 r_i
$$

Observe that
\begin{eqnarray}
 r_i^{-1/2} \sum_{j \leq i} \beta_{ij} \log r_{j+1}
& = & \nonumber \\
r_i^{-1/2} \sum_{j \leq i}  \left(\frac{4}{3}\right)^{i-j}\frac{\log
  r_{i+1}}{\log r_{j+1}} \log r_{j+1}  & = & \nonumber \\
\left(\frac{4}{3}\right)^i r_i^{-1/2} \log r_{i+1} \sum_{j \leq i}
\left(\frac{3}{4}\right)^j & < & \nonumber \\
 4 r_i^{-1/2} \left(\frac{4}{3}\right)^i \log r_{i+1} & < & \nonumber \\
 16 r_i^{-1/2} \log^2 r_i \label{eq:proc1}
\end{eqnarray}

And that
\begin{eqnarray}
r_i^{-1/2} \beta_{i0}\log^{c_q} r_1 & = & \nonumber \\
r_i^{-1/2} \left(\frac{4}{3}\right)^i \frac{\log r_{i+1}}{\log
   r_1} \log^{c_q} r_1 & < & \nonumber \\
4 r_i^{-1/2} \log^2 r_i \log^{c_q-1} r_1 \label{eq:proc2}
\end{eqnarray}

Substituting~\eqref{eq:proc1} and~\eqref{eq:proc2} into
Corollary~\ref{cor:total-process-debt}, 
the total process debt is at most
\begin{eqnarray}
c_1 c_2 \sum_i \frac{n}{\sqrt{r_i}} \left( c_3\beta_{i0}\log^{c_q} r_1
  + \sum_{j \leq i} 4\beta_{ij} \log r_{j+1}\right) & < & \nonumber \\
 c_1c_2n \sum_i \left(  4c_3 r_i^{-1/2} \log^2 r_i \log^{c_q-1} r_1 \  
+ 64 r_i^{-1/2} \log^2 r_i \right) \nonumber
\end{eqnarray}

Since the $r_i$'s increase doubly exponentially, 
we have that $ \sum_{i>0}  r_i^{-1/2} \log^2 r_i  = O(r_1^{-1/2} \log^2 r_1 ) $
and so the  total process debt is bounded by  $O(\frac{n}{\sqrt r}
\log^{c_q+1} r)$.  \qed
\end{proof}

\begin{lemma}
The total update debt is
$O(\frac{n}{\sqrt r_1}\log^2 r_1)$.
\end{lemma}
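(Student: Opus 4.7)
The plan is to bound each of the seven terms in Lemma~\ref{lem:total-update-debt} separately, using two structural facts: first, by the choice $r_i = r_1^{2^{i-1}}$, the sequence $r_i^{-1/2}$ decays doubly exponentially, so geometric-type sums $\sum_{i \geq k} r_i^{-1/2}\cdot \mathrm{poly}(i, \log r_i)$ are dominated (up to a constant factor) by their first term; second, the identity $\beta_{i0} = (4/3)^i \log r_{i+1}/\log r_1$ together with $\log r_{i+1} = 2^i \log r_1$ lets us simplify $\beta_{i0}\log r_1$ to $(4/3)^i \log r_{i+1} = O(2^i \cdot 2^i \log r_1)$, which is still negligible against $r_i^{-1/2}$. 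With these in hand, each of the single-sum terms~(1), (2), (3), (6), and~(7) is handled by pulling out the $i=1$ (respectively $j=1$) summand; the resulting bounds are $O(\frac{n}{\sqrt{r_1}}\log r_1)$ except for term~(6), which contributes $O(\frac{n}{\sqrt{r_1}}\log^2 r_1)$ and turns out to be the bottleneck that sets the final rate.

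For the two double-sum terms~(4) and~(5), I would interchange the order of summation, rewriting $\sum_{j\ge 2} r_j^{-1/2}\sum_{1\le i<j} r_i^{1/2}\beta_{i0}\cdot X_j$ as $\sum_{i\ge 1} r_i^{1/2}\beta_{i0}\sum_{j>i} r_j^{-1/2}X_j$, where $X_j$ is either $\log r_1$ (term~4) or $j$ (term~5). The inner sum $\sum_{j>i} r_j^{-1/2} X_j$ is dominated by its first term $r_{i+1}^{-1/2}X_{i+1}$, and since $r_{i+1} = r_i^2$ we get $r_{i+1}^{-1/2} = r_i^{-1}$. After the cancellation $r_i^{1/2}\cdot r_i^{-1} = r_i^{-1/2}$, both double sums collapse to shapes already handled in the previous paragraph, yielding $O(\frac{n}{\sqrt{r_1}}\log r_1)$ for term~(4) and $O(\frac{n}{\sqrt{r_1}})$ for term~(5).

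Summing across all seven terms then gives the claimed bound $O(\frac{n}{\sqrt{r_1}}\log^2 r_1)$. The mildly tricky step is making sure that the double-exponential decay of $r_i^{-1/2}$ genuinely dominates the $(4/3)^i$ growth coming from $\beta_{i0}$ and the $\log r_{i+1} = 2^i \log r_1$ growth coming from telescoping the Monge-heap attention spans; this is the only place where the choice $r_i = r_{i-1}^2$ (as opposed to the $2^{O(r_{i-1})}$ of HKRS) is essential, since it gives the fast doubly-exponential suppression needed to absorb all the $(4/3)^i$ and $\log r_{i+1}$ factors into a constant. Once that is confirmed, combining with Lemma~\ref{lem:tot-process} (which bounds process debt by $O(\frac{n}{\sqrt{r}}\log^{c_q+1}r)$) and choosing the $c_q=1$ implementation of the Monge heaps yields the $O((n/\sqrt{r})\log^2 r)$ running time stated in Theorem~\ref{thm:analysis}.
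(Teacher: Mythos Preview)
Your proposal is correct and follows essentially the same approach as the paper: bound each of the seven terms of Lemma~\ref{lem:total-update-debt} using the identity $\beta_{i0}=(4/3)^i\log r_{i+1}/\log r_1$ together with the doubly-exponential decay of $r_i^{-1/2}$, which makes every sum dominated by its leading term.

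The one place you deviate is in handling the double sums (terms~4 and~5). The paper bounds the inner sum $\sum_{1\le i<j} r_i^{1/2}\beta_{i0}$ by its last summand (equation~\eqref{eq:upd4}) and then controls the outer sum in $j$; you instead interchange the order of summation, bound $\sum_{j>i} r_j^{-1/2}X_j$ by its first summand $r_{i+1}^{-1/2}X_{i+1}=r_i^{-1}X_{i+1}$, and thereby reduce terms~4 and~5 to the shapes already handled for terms~2 and~3. Your route is slightly slicker and in fact yields tighter per-term bounds (you get $O(n r_1^{-1/2}\log r_1)$ for terms~2 and~4 and $O(n r_1^{-1/2})$ for term~5, whereas the paper's coarser inequality $(4/3)^i\log r_{i+1}\le 4\log^2 r_i$ gives an extra $\log r_1$ factor on those terms). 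Since term~6 is the $\log^2 r_1$ bottleneck in both analyses, the final statement is unaffected.
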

\begin{proof}
We need to bound every term in the bound stated in
Lemma~\ref{lem:total-update-debt}. 
\begin{itemize}

\item The first term is clearly $O(\frac{n}{\sqrt r_1}\log r_1)$.
\item By~\eqref{eq:proc2}, the second term is  
$$36c_1c_2c_5n \sum_{i>0}  r_i^{-1/2} \log^2 r_i. $$
Recall that $r_{i+1} = r_i^2$, so $r_i  = r_1^{2^{i-1}}$.
Since the $r_i$'s
increase doubly exponentially, we have that $\sum_{i>0}  r_i^{-1/2} \log^2 r_i = O(r_1^{-1/2} \log^2 r_1 ) $
and so the term is bounded by $O(\frac{n}{\sqrt r_1}\log^2 r_1)$.

\item Similarly to the derivation of~\eqref{eq:proc2}, 
\begin{eqnarray}
r_i^{-1/2} \beta_{i0} i  & = & \nonumber \\
r_i^{-1/2} \left(\frac{4}{3}\right)^i \frac{\log r_{i+1}}{\log
   r_1} i & < & \nonumber \\
4 i r_i^{-1/2} \log^2 r_i   \log^{-1} r_1\label{eq:upd3}
\end{eqnarray}
Substituting~\eqref{eq:upd3} into the third term we get 
$3c_1c_2n \sum_{i>0}  i r_i^{-1/2} \log^2 r_i  \log^{-1} r_1$. 
Since the $r_i$'s increase doubly exponentially, 
we have that $ \sum_{i>0}  i r_i^{-1/2} \log^2 r_i  = O(r_1^{-1/2} \log^2 r_1 ) $
and so the  term is bounded by $O(\frac{n}{\sqrt r_1}\log r_1)$.

\item
We need to bound the fourth term, which is 
$$27c_1c_2^2c_5n \sum_{j\geq 2}  r_j^{-1/2}  \sum_{1\leq i<j}
r_i^{1/2}\beta_{i0} \log r_1. $$
We first focus on the second sum.
\begin{eqnarray}
\sum_{1\leq i<j} r_i^{1/2}\beta_{i0} & = & \nonumber \\
 \sum_{1\leq i<j} r_i^{1/2} \left(\frac{4}{3}\right)^i \frac{\log r_{i+1}}{\log
   r_1} & < & \nonumber \\
4 \sum_{1\leq i<j} r_i^{1/2} \log^2 r_i \log^{-1}r_1& < & \nonumber \\
c_6 r_{j-1}^{1/2} \log^2 r_{j-1} \log^{-1}r_1& & \label{eq:upd4}
\end{eqnarray}
In the last step we again used the exponential increase of the
$r_i$'s. 
The fourth term is therefore bounded by 
\begin{eqnarray}
27c_1c_2^2c_5n \sum_{j\geq 2}  r_j^{-1/2} c_6  r_{j-1}^{1/2} \log^2
r_{j-1} & = & \nonumber \\
27c_1c_2^2c_5c_6n \sum_{j\geq 2}  r_j^{-1/2} r_j^{1/4} \left(
  \frac{1}{2} \log r_j \right)^2 & = & \nonumber \\
\frac{27}{4}c_1c_2^2c_5c_6n \sum_{j\geq 2}  r_j^{-1/4} \log^2 r_j  & & \nonumber 
\end{eqnarray}
Again, this sum is dominated by the leading term, so it is 
$O(n r_2^{-1/4} \log^2 r_2) = O(\frac{n}{\sqrt r_1}\log^2 r_1) $.

\item The fifth term we need to bound is 
$$9c_1c_2^2n \sum_{j\geq 2} r_j^{-1/2} \sum_{1\leq i<j}  r_i^{1/2}
\beta_{i0}  j.$$
Using~\eqref{eq:upd4}, it is bounded by
\begin{eqnarray}
9c_1c_2^2n \sum_{j\geq 2} r_j^{-1/2} j c_6 r_{j-1}^{1/2} \log^2 r_{j-1}
\log^{-1}r_1 & < & \nonumber \\
\frac{9}{4}c_1c_2^2c_6n \log^{-1}r_1 \sum_{j\geq 2} j r_j^{-1/4}  \log^2 r_j
 & < & \nonumber \\
O(n r_2^{-1/4} \log^2 r_2 \log^{-1}r_1) & = & \nonumber \\
O(\frac{n}{\sqrt r_1} \log r_1) & & \nonumber 
\end{eqnarray}

\item In the sixth term $27c_1c_2^2c_5^2n \sum_{j\geq 1}  r_j^{-1/2}
  \log^2 r_1$, the sum is also bounded by its leading term, so we get
$O(\frac{n}{\sqrt r_1} \log^2 r_1)$.

\item Similarly, in the seventh term $9c_1c_2^2c_5n \sum_{j\geq 1}
  r_j^{-1/2} j \log r_1 $, the sum is bounded by its leading term so we get
  $O(\frac{n}{\sqrt r_1} \log r_1)$.\qed
\end{itemize} 
\end{proof}

Combining the total $O(\frac{n}{\sqrt r} \log^{c_q+1} r)$ process
debt, and the total $O(\frac{n}{\sqrt r} \log^2 r)$ update debt, we
get that the total running time of the algorithm is 
$O(\frac{n}{\sqrt r} \log^{c_q+1} r)$. 

By Lemma~\ref{lem:HKRS-RMQ}, The parameter $c_q$ is either 1 or 3,
thus proving Theorem~\ref{thm:analysis}

\bibliographystyle{plain}

\begin{thebibliography}{10}

\bibitem{BH13}
G.~Borradaile and A.~Harutyunyan.
\newblock Maximum st-flow in directed planar graphs via shortest paths.
\newblock In {\em Combinatorial Algorithms}, pages 423--427. Springer, 2013.

\bibitem{borradaile-klein-09}
G.~Borradaile and P.~Klein.
\newblock An ${O}(n \log n)$ algorithm for maximum {\em st}-flow in a directed
  planar graph.
\newblock {\em Journal of the {ACM}}, 56(2):1--30, 2009.

\bibitem{BKMNWN11}
G.~Borradaile, P.~N. Klein, S.~Mozes, Y.~Nussbaum, and C.~Wulff-Nilsen.
\newblock Multiple-source multiple-sink maximum flow in directed planar graphs
  in near-linear time.
\newblock In {\em Proceedings of the 52nd Annual Symposium on Foundations of
  Computer Science {(FOCS)}}, pages 170--179, 2011.

\bibitem{min-cut}
G.~Borradaile, P.~Sankowski, and C.~Wulff-Nilsen.
\newblock Min $st$-cut oracle for planar graphs with near-linear preprocessing
  time.
\newblock In {\em Proceedings of the 51st Annual Symposium on Foundations of
  Computer Science {(FOCS)}}, pages 601--610, 2010.

\bibitem{Cabello}
S.~Cabello.
\newblock Many distances in planar graphs.
\newblock {\em Algorithmica}, 62(1-2):361--381, 2012.
\newblock Preliminary version in SODA 2006.

\bibitem{CabelloCE13}
S.~Cabello, E.W. Chambers, and J.~Erickson.
\newblock Multiple-source shortest paths in embedded graphs.
\newblock {\em SIAM Journal on Computing}, 42(4):1542--1571, 2013.

\bibitem{ChFaNa04}
P.~Chalermsook, J.~Fakcharoenphol, and D.~Nanongkai.
\newblock A deterministic near-linear time algorithm for finding minimum cuts
  in planar graphs.
\newblock In {\em Proceedings of the 14th annual Symposium On Discrete
  Algorithms ({SODA})}, pages 828--829, 2004.

\bibitem{ChangL13}
H.-C. Chang and H.-I Lu.
\newblock Computing the girth of a planar graph in linear time.
\newblock {\em SIAM Journal on Computing}, 42(3):1077--1094, 2013.

\bibitem{CGfrac}
B.~Chazelle and L.~J. Guibas.
\newblock Fractional cascading: {I}. {A data structuring technique}.
\newblock {\em Algorithmica}, 1:133--162, 1986.

\bibitem{EisenstatK13}
D.~Eisenstat and P.~N. Klein.
\newblock Linear-time algorithms for max flow and multiple-source shortest
  paths in unit-weight planar graphs.
\newblock In {\em Proceedings of the 45th {ACM} Symposium on Theory of
  Computing, ({STOC})}, pages 735--744, 2013.

\bibitem{Erickson2010}
J.~Erickson.
\newblock Maximum flows and parametric shortest paths in planar graphs.
\newblock In {\em Proceedings of the 21st annual Symposium On Discrete
  Algorithms ({SODA})}, pages 794--804, 2010.

\bibitem{FR06}
J.~Fakcharoenphol and S.~Rao.
\newblock Planar graphs, negative weight edges, shortest paths, and near linear
  time.
\newblock {\em Journal of Computer and System Sciences}, 72:868--889, 2006.

\bibitem{Frederickson}
G.~N. Frederickson.
\newblock Fast algorithms for shortest paths in planar graphs, with
  applications.
\newblock {\em SIAM Journal on Computing}, 16(6):1004--1022, 1987.

\bibitem{FT87}
M.~L. Fredman and R.~E. Tarjan.
\newblock Fibonacci heaps and their uses in improved network optimization
  algorithms.
\newblock {\em Journal of the ACM}, 34(3):596--615, 1987.

\bibitem{GMW}
P.~Gawrychowski, S.~Mozes, and O.~Weimann.
\newblock Improved submatrix maximum queries in {M}onge matrices.
\newblock arXiv:1307.2313, 2013.

\bibitem{Hassin}
R.~Hassin.
\newblock Maximum flow in $(s,t)$ planar networks.
\newblock {\em Inform. Process. Lett.}, 13(3):107, 1981.

\bibitem{HKRS97}
M.~R. Henzinger, P.~N. Klein, S.~Rao, and S.~Subramanian.
\newblock Faster shortest-path algorithms for planar graphs.
\newblock {\em Journal of Computer and System Sciences}, 55(1):3--23, 1997.

\bibitem{IS79}
A.~Itai and Y.~Shiloach.
\newblock Maximum flow in planar networks.
\newblock {\em SIAM Journal on Computing}, 8(2):135--150, 1979.

\bibitem{MinstCutSTOC}
G.~F. Italiano, Y.~Nussbaum, P.~Sankowski, and C.~Wulff-Nilsen.
\newblock Improved algorithms for min cut and max flow in undirected planar
  graphs.
\newblock In {\em Proceedings of the 43rd {ACM} Symposium on Theory of
  Computing, ({STOC})}, pages 313--322, 2011.

\bibitem{JV83}
D.B. Johnson and S.~M. Venkatesan.
\newblock {Partition of planar flow networks}.
\newblock In {\em Proceedings of the 24th Annual Symposium on Foundations of
  Computer Science {(FOCS)}}, pages 259--264, 1983.

\bibitem{KaplanMNS12}
H.~Kaplan, S.~Mozes, Y.~Nussbaum, and M.~Sharir.
\newblock Submatrix maximum queries in {M}onge matrices and {M}onge partial
  matrices, and their applications.
\newblock In {\em Proceedings of the 23rd Annual {ACM}-{SIAM} Symposium On
  Discrete Mathematics ({SODA})}, pages 338--355, 2012.

\bibitem{KN11a}
H.~Kaplan and Y.~Nussbaum.
\newblock Maximum flow in directed planar graphs with vertex capacities.
\newblock {\em Algorithmica}, 61(1):174--189, 2011.

\bibitem{KaplanN11}
H.~Kaplan and Y.~Nussbaum.
\newblock Minimum $s$-$t$ cut in undirected planar graphs when the source and
  the sink are close.
\newblock In {\em Proceedings of the 28th International Symposium on
  Theoretical Aspects of Computer Science (STACS)}, pages 117--128, 2011.

\bibitem{KNK93}
S.~Khuller, J.~Naor, and P.~Klein.
\newblock The lattice structure of flow in planar graphs.
\newblock {\em SIAM J.\ Discret.\ Math.}, 6(3):477--490, 1993.

\bibitem{Klein05}
P.~N. Klein.
\newblock Multiple-source shortest paths in planar graphs.
\newblock In {\em Proceedings of the 16th Annual {ACM}-{SIAM} Symposium On
  Discrete Mathematics ({SODA})}, pages 146--155, 2005.

\bibitem{Book}
P.~N. Klein and S.~Mozes.
\newblock {\em Optimization Algorithms for Planar Graphs}.
\newblock http://planarity.org/, 2014.

\bibitem{ShayrDivision}
P.~N. Klein, S.~Mozes, and C.~Sommer.
\newblock Structured recursive separator decompositions for planar graphs in
  linear time.
\newblock In {\em Proceedings of the 44th {ACM} Symposium on Theory of
  Computing, ({STOC})}, pages 505--514, 2012.

\bibitem{KMW10}
P.~N. Klein, S.~Mozes, and O.~Weimann.
\newblock Shortest paths in directed planar graphs with negative lengths: A
  linear-space ${O}(n \log^2 n)$-time algorithm.
\newblock {\em ACM Transactions on Algorithms}, 6(2):1--18, 2010.
\newblock Preliminary version in SODA 2009.

\bibitem{LackiSankowski}
J.~\L\k{a}cki and P.~Sankowski.
\newblock Min-cuts and shortest cycles in planar graphs in ${O}(n \log \log n)$
  time.
\newblock In {\em Proceedings of the 19th annual European Symposium on
  Algorithms (ESA)}, pages 155--166, 2011.

\bibitem{MST}
T.~Matsui.
\newblock The minimum spanning tree problem on a planar graph.
\newblock {\em Discrete Applied Mathematics}, 58(1):91--94, 1995.

\bibitem{MozesSommer}
S.~Mozes and C.~Sommer.
\newblock Exact distance oracles for planar graphs.
\newblock In {\em Proceedings of the 23rd {ACM}-SIAM Symposium On Discrete
  Algorithms ({SODA})}, pages 571--582, 2012.

\bibitem{ShayWulf}
S.~Mozes and C.~Wulff-Nilsen.
\newblock Shortest paths in planar graphs with real lengths in ${O}(n \log^2n /
  \log\log n)$.
\newblock In {\em Proceedings of the 18th annual European Symposium on
  Algorithms (ESA)}, pages 206--217, 2010.

\bibitem{Nussbaum11}
Y.~Nussbaum.
\newblock Improved distance queries in planar graphs.
\newblock In {\em Proceedings of the 14th International Workshop on Algorithms
  and Data Structures {(WADS)}}, pages 642--653, 2011.

\bibitem{reif83}
J.~Reif.
\newblock Minimum $s$-$t$ cut of a planar undirected network in ${O}(n \log^2
  n)$ time.
\newblock {\em SIAM Journal on Computing}, 12:71--81, 1983.

\bibitem{Subramanian95}
S.~Subramanian.
\newblock {\em {P}arallel and {D}ynamic {S}hortest-{P}ath {A}lgorithms for {S
  }parse {G}raphs}.
\newblock PhD thesis, Brown University, 1995.
\newblock Available as Brown University Computer Science Technical Report
  CS-95-04.

\bibitem{TM09}
S.~Tazari and M.~M{\"u}ller-Hannemann.
\newblock Shortest paths in linear time on minor-closed graph classes, with an
  application to steiner tree approximation.
\newblock {\em Discrete Applied Mathematics}, 157(4):673--684, 2009.

\bibitem{Willard1983}
D.~E. Willard.
\newblock Log-logarithmic worst-case range queries are possible in space
  $\theta({N})$.
\newblock {\em Inf. Process. Lett.}, 17(2):81--84, 1983.

\end{thebibliography}

\end{document}